\newtheorem{theorem}{Theorem}
\newtheorem{proposition}{Proposition}
\newtheorem{lemma}{Lemma}
\newtheorem{ass}{Assumption}
\newtheorem{definition}{Definition}
\newtheorem{remark}{Remark}
\newtheorem{cor}{Corollary}
\newcommand\EE {\mathbb E}
\newcommand\FF {\mathbb F}
\newcommand\NN {\mathbb N}
\newcommand\RR {\mathbb R}
\newcommand\PP {\mathbb P}
\newcommand\QQ {\mathbb Q}
\newcommand\ZZ {\mathbb Z}
\newcommand\cA {\mathcal A}
\newcommand\cL {\mathcal L}
\newcommand\cF {\mathcal F}
\def\bone{\mathbf{1}}
\def\D{\Delta}
\def\pa{\partial}
\def\e{\epsilon}
\def\a{\alpha}
\def\d{\delta}
\newcommand\1 {\mathbf 1}
\def\qed{\hskip6pt\vrule height6pt width5pt depth1pt}
\def\qed{\hskip 6pt\vrule height6pt width5pt depth1pt}
\newcommand{\ed}{\end{document}}
\newcommand{\be}{\begin{equation}}
\newcommand{\ee}{\end{equation}}
\newcommand{\bq}{\begin{eqnarray}}
\newcommand{\eq}{\end{eqnarray}}
\begin{document}

\title{Utility-based pricing and hedging of contingent claims in Almgren-Chriss model with temporary price impact.}
\author{Ibrahim~Ekren and Sergey~Nadtochiy\footnote{We thank Yavor Stoev for his help at the initial stage of this project.
S. Nadtochiy received partial support from the NSF CAREER grant 1855309.
Data sharing is not applicable to this article as no new data were created or analyzed in this study.
Address the correspondence to: Department of Applied Mathematics, Illinois Institute of Technology, 10 W. 32nd St., Chicago, IL 60616 (snadtochiy@iit.edu).
}}
\date{First draft: October 3, 2019,\\
Current version: June 15, 2020.
}

\maketitle

\begin{abstract}
In this paper, we construct the utility-based optimal hedging strategy for a European-type option in the Almgren-Chriss model with temporary price impact. The main mathematical challenge of this work stems from the degeneracy of the second order terms and the quadratic growth of the first order terms in the associated HJB equation, which makes it difficult to establish sufficient regularity of the value function needed to construct the optimal strategy in a feedback form. By combining the analytic and probabilistic tools for describing the value function and the optimal strategy, we establish the feedback representation of the latter. We use this representation to derive an explicit asymptotic expansion of the utility indifference price of the option, which allows us to quantify the price impact in options' market via the price impact coefficient in the underlying market.
\end{abstract}

\section{Introduction}

This paper is concerned with the problem of hedging and pricing of contingent claims in a model with price impact. More specifically, we restrict our analysis to European-type claims and assume the Almgren-Chriss model (see \cite{AlmgrenChriss}) with linear temporary impact for the underlying asset. We also assume that the preferences of the agent (performing the hedging or pricing of the option) are given by an exponential utility. Then, the optimal hedging strategy is determined by maximizing the expected exponential utility of the terminal wealth generated by the dynamic trading in the underlying plus the payoff of the option. A natural notion of option price, in this setting, is the utility indifference price (see Definition \ref{def:indif.price}), which can be computed via the value function of the aforementioned maximization problem.  

The problem of hedging of contingent claims in the Almgren-Chriss model (and in its extensions with nonlinear price impact) has been studied before. Much of the existing literature is concerned with the problems or replication and super-replication of contingent claims: see, e.g., \cite{BankBaum, CvitanicMa, Yong, Loeper}, and the references therein. However, the optimal (super-)replication strategies are only constructed in the models with permanent impact -- i.e. without temporary one -- and the exact replication strategies typically do not exist in the presence of temporary impact. An optimal hedging strategy is constructed in \cite{AlmgrenHedging, BankVS, BV, GP1, GP2}, but for an agent maximizing a linear-quadratic objective. The latter objective suffers from several shortfalls: in particular, it penalizes the hedger for making profits and may produce static arbitrages in the options' prices. Our setting is close to the one of \cite{GuP}, which poses the hedging problem as the maximization of expected exponential utility. However, \cite{GuP} does not provide a complete well-posedness theory for the associated Hamilton-Jacobi-Bellman (HJB) equation (the validity of comparison principle is left open), and, more importantly, it does not provide a rigorous characterization of the optimal hedging strategy. The reason for the latter, as well as for the lack of characterization of the optimal super-replicating strategies in \cite{Loeper} and in other models including temporary impact, is that the associated HJB equation (or its stochastic analogue) is degenerate and has a quadratic nonlinearity in the gradient. This makes it difficult to establish the desired regularity of its solution, needed to construct the optimal strategy in feedback form (and the well-posedness of the associated forward-backward systems is not even clear). The main contribution of this paper is in providing an explicit and computationally tractable characterization of the optimal hedging strategy in feedback form. The latter is achieved by combining the analysis of the associated HJB equation, the direct properties of the stochastic optimization problem (in particular, its strong convexity), and the representation of the optimal control via a Backward Stochastic Differential equation (BSDE), in order to establish the so-called ``endogenous boundedness": i.e., the optimal control is bounded by a constant, even though no a priori constraints on its values are imposed in the optimization problem. The latter result is summarized in Theorem \ref{unifbound}, and it allows us to complete the description of optimal control in the feedback form and obtain Theorem \ref{theod0}.

Another contribution of the present paper is in the analysis of utility indifference price of an option. In particular, we provide a computationally tractable description of this price via the HJB equation for the value function and, more importantly, develop rigorously the asymptotic expansion of this price in the regime where the price impact in the underlying market is small (see Theorem \ref{theo:expansion}). To understand the value of this result, assume that the underlying market is sufficiently liquid, so that that price impact coefficient in this market, denoted $\eta$, can be measured. The option's market, on the other hand, is less liquid, and the trading occurs via a market maker, who buys from, or sells to, a client a certain number of shares of the option and hedges her position by trading in the underlying market. Then, the market maker plays the role of the aforementioned agent, and it is natural to assume that she will trade the option's shares at her utility indifference price (see the next paragraph for a justification of this assumption). Recall that the indifference price of the option depends on $\eta$, as the latter affects the hedging costs. In addition, the indifference price depends on the current number of option's shares held by the marker maker, due to the nonlinearity of the utility function. By buying or selling options, the client changes the market maker's inventory, affecting the indifference price and, thus, generating price impact in the option's market. The expansion provided in this paper allows one to compute the price impact coefficient in the option's market (which is hard to measure directly, due to the lack of liquidity and/or transparency) in terms of the price impact coefficient $\eta$ in the underlying market (which is easier to measure directly), assuming the latter is small -- this connection is given explicitly by equation \eqref{eq.expansion.impact.deriv}.
Unlike the existing literature \cite{PST,MMS,EMK,BCE,KM,CHM,GW,BSh}, where the authors obtain expansions for the value function of the optimal hedging problem, to obtain the small impact expansion of the indifference price we need to expand a partial derivative of the value function. As the existing methods are not sufficient to obtain such an expansion, we employ a more direct approach that relies on the properties of the optimal control and on the stochastic representations of the derivatives of the value function, established in the preceding part of the paper.

To justify the interpretation of indifference price as the option price quoted by the market makers, we refer the reader to \cite{ENequil}. The latter is an online appendix to the present paper, which shows that the marginal utility indifference price is indeed an equilibrium price in a game with competing market makers, who trade dynamically in options (with a client) and hedge their positions by trading in the underlying.

The rest of the paper is organized as follows. In Section \ref{se:hedging}, we solve the problem of optimal hedging of a static position in the option.
This is done in several steps. First, we consider an approximation of the target stochastic control problem with the problems in which the state process contains additional noise and the controls are bounded. The latter features allow us to avoid the degeneracy and quadratic growth mentioned above and to characterize the solution of the approximating problem via the HJB equation. Then, using the martingale optimality principle, we derive a Forward-Backward Stochastic Differential Equation (FBSDE) for the optimal control of the approximating problem. Using the direct analysis of the original and the approximating control problems, we establish certain a priori estimates, which, along with the BSDE methods, allow us to obtain, in Theorem \ref{unifbound}, an upper bound on the absolute value of the optimal control that is uniform over the approximation parameters. Using the boundedness of the optimal control, we establish its feedback representation in Theorem \ref{theod0}, and the representation for indifference price follows easily from this result. In Section \ref{s.small}, we establish the asymptotic expansion of the indifference price (Theorem \ref{theo:expansion}), using the representations for the optimal control and for the derivatives of the value function established in Section \ref{se:hedging}.

\section{Optimal hedging strategy and the indifference price}
\label{se:hedging}
Let $(\Omega,\cF,(\cF_t)_{t\in[0,T]},\PP)$ be a filtered probability space where $(\cF_t)_{t\in [0,T]}$ is the augmentation of the filtration generated by the Brownian motions $W=(W_t)_{t\in[0,T]}$ and $B=(B_t)_{t\in[0,T]}$, where $B$ is only used for approximation purposes and is independent of $W$. 
Consider a (relatively) liquid financial market consisting of an adapted asset (stock) price process and a constant riskless asset.
In addition to the liquid market, we consider a European-type contingent claim with maturity $T$, written on the underlying liquid asset, and study the optimal investment problem of an individual agent with a static position in this option. The agent trades dynamically in the underlying creating linear temporary impact. More precisely, we assume that the liquid asset follows the Almgren-Chriss model with temporary price impact:
\begin{align}
&S_v = s + \int_t^v \sigma dW_r,\quad \pi_v=\pi + \int_t^v \nu_u du,\label{eq.rev.FinModel1}\\
&X_v = x - \int_t^v (\eta \nu_r + S_r)\nu_rdr = x - \eta \int_t^v \nu^2_r dr - \int_t^v S_r d\pi_r,\label{eq.rev.FinModel2}
\end{align}
where $\pi$ represents the inventory process, $S$ is the unaffected price of one unit of the asset, $X$ is the cash position of the agent, and $\nu$ is the rate with which the agent chooses to purchase the asset. The constant $\eta>0$ represents the linear temporary price impact of the agent on the underlying asset.
We denote the payoff of the contingent claim by $H(S_T)$, with a function $H:\RR\to\RR$.
The goal of this section is to find a tractable representation for the marginal utility indifference price of this option as well as for the value function and the optimal exponential-utility-based hedging strategy of this option.

\smallskip

As the HJB equation of a Merton problem in the Almgren-Chriss model is degenerate and has quadratic nonlinearity in the gradient (see the introduction), it is convenient to regularize the problem by adding an independent noise to the state processes and a cap on the admissible controls (this regularization will ultimately be removed).
Thus, within this section, we re-define the state processes $(\pi,S,X)$ as follows. For any initial condition $\pi,s,x\in\RR$, any $0\leq t\leq v<T$, and any $\delta,\epsilon\geq0$, we consider
\begin{align}
&S_v = s + \int_t^v \sigma dW_r,\label{eq.price}\\
&\pi_v=\pi + \int_t^v (\nu_u du + \delta dB_u),\label{eq.positiion}\\
&X_v = x - \eta \int_t^v \nu^2_r dr - \int_t^v S_r d\pi_r
= x - \eta \int_t^v \nu^2_r dr - \int_t^v S_r(\nu_rdr + \delta dB_r),\label{eq.wealth}
\end{align}
where $B$ and $W$ are two independent Brownian motions, and $\nu \in \cA^{\epsilon}(t,T)$ is the set of $\RR$-valued stochastic processes that are progressively measurable w.r.t. $\cF^t_r:=\sigma\{W_r-W_t,\1_{\{\d>0\}}(B_r-B_t):r\in [t,T]\}$ and are such that $|\nu|\leq 1/\epsilon$ and $\EE\left[\int_t^T\nu_r^2dr\right]<\infty$ (the latter, clearly, is only needed when $\epsilon=0$). 

The agent aims to maximize the expected utility of her terminal wealth:
\begin{align}\label{eq.repAgent.Obj.def}
\hat{V}^{\delta,\epsilon}(t,s,\pi,x,Q):=\sup_{\nu \in \cA^\e(t,T)} \EE\left[-\exp\left(-\gamma \left(X_T+\pi_T S_T - l\frac{\pi_T^2}{2} + Q H(S_T)\right)\right)\right],
\end{align}
with the dynamics of the state processes given by (\ref{eq.price})--(\ref{eq.wealth}).
We are mainly interested in the case $\delta=\e=0$, which turns the model \eqref{eq.positiion}--\eqref{eq.wealth} into \eqref{eq.rev.FinModel1}--\eqref{eq.rev.FinModel2} and lifts the artificial bound on the controls in \eqref{eq.repAgent.Obj.def}. The case of $\delta,\e>0$ is included for technical reasons, as a way of regularizing the problem.
\smallskip

Denote 
$$
P(t,s)=\EE_{t,s}[H(S_T)].
$$ 
We make the following assumption on $H$, which holds throughout this section, even if not referenced explicitly. 
\begin{ass}\label{assume:P}
$H$ is globally Lipschitz-continuous.
\end{ass}
Note that the above assumption implies that $P(t,s)\in C^{1,3}([0,T)\times\RR)\cap C([0,T]\times\RR)$, that $\partial_s P$ is absolutely bounded on $[0,T]\times\RR$, and that $P(t,\cdot)$ is linearly bounded, uniformly over $t\in[0,T]$.
Using $P$, we can write the terminal wealth generated by a strategy $\nu$ as 
\begin{align*} X_T+\pi_T S_T - l\frac{\pi_T^2}{2} + Q H(S_T)
=x+\pi s+QP(t,s)+\int_t^T (\pi_r+Q\pa_sP(r,S_r))dS_r -\eta \int_t^T \nu_r^2dr -\frac{l}{2}\pi_T^2.
\end{align*}
Using the above, it is easy to see that, for any $(t,s,\pi,x,Q)\in[0,T]\times\RR^4$, we have
\begin{align*}
&\EE\left[-\exp\left(-\gamma \left(X_T+\pi_T S_T - l\frac{\pi_T^2}{2} + Q H(S_T)\right)\right)\right]= -e^{-\gamma(x+\pi s+QP(t,s))}J(t,s,\pi,Q;\nu),
\end{align*}
where
\begin{align}
J^\d(t,s,\pi,Q;\nu)
&:= \EE\left[e^{\Psi^\d(t,\pi, \nu)+Q\Gamma(t,s) }\right],\label{eq.J.def}\\
\Gamma(t,s)
&:=-\gamma\left(P(T,S_T)-P(t,s) \right)=-\gamma\int_t^T \pa_s P(r,S_r)dS_r,\nonumber\\
\Psi^\d(t,\pi,\nu)&:=\gamma\eta \int_t^T \nu^{2}_rdr +\frac{\gamma l}{2}\left(\pi+\int_t^T(\nu_rdr + \delta dB_r)\right)^2
-\gamma\int_t^T \left(\pi+\int_t^r(\nu_ldl+\d dB_l)\right)dS_r.\notag
\end{align}
Note that $ \Gamma$ and $\Psi^\d$ are in fact random and depend on the paths of the two Brownian motions on $[t,T]$. The above yields
\begin{align}\label{eq:defbarv}
\hat V^{\d,\e}(t,s,\pi,x,Q)=-e^{-\gamma(x+\pi s+QP(t,s))}U^{\d,\e}(t,s,\pi,Q),
\end{align}
where
\begin{align}
U^{\d,\e}(t,s,\pi,Q)&:= \inf_{\nu\in\cA^{\e}(t,T)} J^\d(t,s,\pi,Q;\nu) =\inf_{\nu\in\cA^{\e}(t,T)} \EE\left[e^{\Psi^\d(t,\pi, \nu)+Q\Gamma(t,s) }\right].\label{def:U}
\end{align}
Note that $\Gamma$ does not depend on $\nu$, and, due to Assumption \ref{assume:P}, $\Gamma(t,s)$ is linearly bounded in $s$, uniformly over $t$.
For convenience, we also introduce
\begin{align}
u^{\d,\e}(t,s,\pi,Q)&:= \log U^{\d,\e}(t,s,\pi,Q) = \log\left(\inf_{\nu\in \cA^\e(t,T)} \EE\left[e^{\Psi^\d(t,\pi, \nu)+Q\Gamma(t,s) }\right]\right).\label{eq:defue}
\end{align}
As shown below, due to the presence of expectation of the exponential of a square of Brownian motion, we only prove the finiteness of $J^\d$, $U$, $u$, and $\hat V$, for $\d\geq 0$ small enough.

\subsection{PDE representation of the value function}

The following proposition provides the value of $u$ and, in turn, of $\hat{V}$, for the case with no price impact ($\eta=0$), no extra noise ($\delta=0$), and no constraints ($\e=0$).
Its proof follows easily from the fact that the payoff $H(S_T)$ can be replicated perfectly when $\eta=0$ and that the replication strategy can be approximated by the absolutely continuous ones, so that the associated objective values of the agent converge.

\begin{lemma}\label{prop:frictionless}
If $\eta=\delta=\e=0$, then, for all $t<T$, $(s,\pi,Q)\in\RR^3$, we have $u^{0,0}(t,s,\pi,Q)=0$ and $u^{0,0}(T,s,\pi,Q)=\frac{l\gamma \pi^2}{2}$.
\end{lemma}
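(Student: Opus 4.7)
The terminal case $t=T$ is immediate: both the integral defining $\Gamma(T,s)$ and the integrals appearing in $\Psi^0(T,\pi,\nu)$ reduce to zero, leaving $\Psi^0(T,\pi,\nu)=\frac{l\gamma}{2}\pi^2$ and $\Gamma(T,s)=0$, so $u^{0,0}(T,s,\pi,Q)=\frac{l\gamma\pi^2}{2}$.

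For $t<T$, the plan is to prove $u^{0,0}=0$ by matching bounds on $J^0$. Setting $\eta=\delta=0$ and combining terms, the exponent rewrites as
$\Psi^0(t,\pi,\nu)+Q\Gamma(t,s) = \frac{l\gamma}{2}\pi_T^2 - \gamma\int_t^T (\pi_r + Q\partial_s P(r,S_r))\,dS_r,$
where $\pi_r=\pi+\int_t^r\nu_u\,du$. Admissibility of $\nu$ ($\EE\int_t^T\nu_r^2\,dr<\infty$), boundedness of $\partial_s P$, and constancy of $\sigma$ make both stochastic integrals true martingales, so the expectation of the right-hand side equals $\frac{l\gamma}{2}\EE[\pi_T^2]\geq 0$. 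Jensen's inequality applied to $e^x$ then yields $J^0(t,s,\pi,Q;\nu)\geq 1$ for every admissible $\nu$, i.e., $u^{0,0}\geq 0$.

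For the matching upper bound, I construct a minimizing sequence that tracks the perfect hedge. Set $\phi_r:=-Q\partial_s P(r,S_r)$; by the remark following Assumption \ref{assume:P}, this is a continuous, adapted process, uniformly bounded by $|Q|\sup|\partial_s P|$. I build $\nu^{(n)}\in\cA^0(t,T)$ whose resulting inventory $\pi^{(n)}$ satisfies (i) the boundary conditions $\pi^{(n)}_t=\pi$ and $\pi^{(n)}_T=0$, (ii) a uniform bound $|\pi^{(n)}|\leq M$ in $(n,r,\omega)$, and (iii) $\pi^{(n)}\to\phi$ in $L^2(dr\otimes d\PP)$ on $[t,T]\times\Omega$. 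One concrete recipe splits $[t,T]$ into three pieces: two boundary layers of length $h_n\downarrow 0$ near $t$ and $T$ on which $\nu^{(n)}$ is piecewise constant (measurable with respect to the appropriate endpoint $\sigma$-algebra) and enforces the prescribed endpoint values; and an interior phase driven by a Langevin-type tracking ODE $d\pi^{(n)}_r = n(\phi_r-\pi^{(n)}_r)\,dr$, for which exponentially fast filtering provides the $L^2$ convergence.

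Given such a sequence, using $\pi^{(n)}_T=0$, the exponent collapses to the single stochastic integral $X^{(n)} := -\gamma\sigma\int_t^T(\pi^{(n)}_r-\phi_r)\,dW_r$. By It\^o isometry, $X^{(n)}\to 0$ in $L^2(\PP)$, hence in probability, while the uniform bound on $|\pi^{(n)}-\phi|$ gives an $L^2(\PP)$-bound on $\exp(X^{(n)})$ via the standard Dol\'eans--Dade estimate $\EE[\exp(2\int\beta\,dW)]\leq\exp(2\|\beta\|_\infty^2 T)$ for bounded $\beta$. Convergence in probability plus uniform integrability yields $\EE[e^{X^{(n)}}]\to 1$, so $J^0(t,s,\pi,Q;\nu^{(n)})\to 1$ and $u^{0,0}\leq 0$. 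The main technical difficulty lies in assembling $\pi^{(n)}$ with adaptedness, absolute continuity, both boundary values, a uniform bound, and $L^2$-tracking of $\phi$ holding simultaneously; once this is done, the passage to the limit inside the exponential is a routine uniform-integrability argument.
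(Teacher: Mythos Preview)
Your proposal is correct and follows the same approach the paper sketches in one line: the lower bound $u^{0,0}\geq 0$ comes from Jensen's inequality applied to the martingale exponent, and the matching upper bound comes from approximating the frictionless delta-hedge $-Q\partial_s P$ by absolutely continuous strategies with vanishing terminal inventory. You have supplied the details the paper omits; the Langevin-plus-boundary-layer construction and the uniform-integrability argument are sound.
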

\begin{remark}
Let us comment on the discontinuity of the value function in the case $\eta=\delta=\e=0$ and $l>0$.
In the absence of price impact, the agent's optimal strategy is to insure (at least along an approximating sequence) that $\pi_t\approx-Q\pa_sP(t,S_t)$ for $t<T$ and that $\pi_T=0$. This is possible if the agent starts at time $t<T$, as it costs her nothing to adjust her position in the underlying at an arbitrarily high rate. If the agent starts at $t=T$, there is simply no time left to trade, which yields $u^{0,0}(T,s,\pi,Q)=\frac{l\gamma \pi^2}{2}$. The next proposition shows that, in the presence of price impact (i.e., with $\eta>0$), the discontinuity in the value function disappears, in particular, because the agent can no longer liquidate her position in the underlying right before time $T$ at no cost. 
\end{remark}

Next, we return to the case $\eta>0$ and general $\d,\e\geq0$.
It is easy to see that the HJB equation for the value function \eqref{eq.repAgent.Obj.def} (derived heuristically) is:
\begin{align}
&\partial_t \hat V^{\delta,\epsilon} + \frac{\sigma^2}{2}\partial_{ss} \hat V^{\delta,\epsilon} + \frac{\d^2}{2}\partial_{\pi\pi} \hat V^{\delta,\epsilon} - \d^2 s\partial_{\pi x}\hat V^{\delta,\epsilon} + \frac{\d^2}{2}s^2\partial_{xx} \hat V^{\delta,\epsilon}
+\sup_{|\nu|\leq1/\e}\,[\nu\partial_\pi \hat V^{\delta,\epsilon}-\nu(s+ \eta \nu)\partial_x \hat V^{\delta,\epsilon}]=0,\label{eq.HJB.Vhat}\\
& \hat V^{\delta,\epsilon}(T,s,\pi,x) = -\exp\left(-\gamma \left(x+\pi s - l\frac{\pi^2}{2} + Q H(s)\right)\right).\label{eq.HJB.Vhat.termcond}
\end{align}
We denote its Hamiltonian by
\begin{equation}\label{eq.He.def.rev}
H_\e(p):=\inf_{|\nu|\leq \frac{1}{\e}} \{\gamma \eta \nu^2+p\nu \},\quad p\in\RR.
\end{equation}
Note that, for $\e=0$,
\begin{equation}\label{eq.visc.H0.def}
H_{0}(p) = - \frac{1}{4\eta \gamma}p^2.
\end{equation}

\begin{proposition}\label{prop:comp}
Let Assumption \ref{assume:P} hold and consider arbitrary  $T,\sigma,\gamma,\eta>0$ and $Q\in\RR$.
Then, there exist constants $\underline \gamma,\overline \gamma,\overline\d,C>0$ (depending only on $(T,\sigma,\gamma,\eta,Q)$), such that, for all $(t,s,s',\pi)\in [0,T]\times \RR^3$, all $\d\in[0,\overline\d]$, and all $\e\geq0$, we have
\begin{align}\label{growth:u}
\underline \gamma\pi^2\leq u^{\d,\e}(t,s,\pi,Q) \leq \overline \gamma\left(\frac{\pi^2}{2}+1\right),
\end{align}
and
\begin{align}\label{lip:u}
|u^{\d,\e}(t,s,\pi,Q)-u^{\d,\e}(t,s',\pi,Q)| \leq C|s-s'|.
\end{align}
Moreover, for all $\d\in[0,\overline\d]$ and $\e\geq0$, $u^{\d,\e}(\cdot,\cdot,\cdot,Q)$ is a (continuous on $[0,T]\times\RR^2$) viscosity solution of 
\begin{align}\label{eq:u}
&0=\pa_t u^{\d,\e} +\frac{\sigma^2}{2}\pa_{ss}u^{\d,\e} + \frac{\delta^2}{2}\pa_{\pi\pi}u^{\d,\e} 
+ H_\e(\pa_\pi u^{\d,\e}) + \frac{\delta^2}{2}(\pa_\pi u^{\d,\e})^2
+\frac{\sigma^2}{2}\left(\pa_s u^{\d,\e} -\gamma(\pi+Q \pa_s P(t,s))\right)^2,\notag\\
&u^{\d,\e}(T,s,\pi,Q)=\frac{l\gamma}{2}\pi^2.
\end{align}
In addition, if $\d\e=0$, the viscosity solution of \eqref{eq:u} is unique in the class of functions satisfying \eqref{growth:u}-\eqref{lip:u}; and if $\d\e>0$, then $u^{\d,\e}(\cdot,\cdot,\cdot,Q)\in C^{1,2}([0,T)\times\RR^2)\cap C([0,T]\times\RR^2)$.
\end{proposition}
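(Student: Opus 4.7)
\emph{Proof plan.} I would treat the four claims in turn, starting from the probabilistic representation~\eqref{eq:defue}. For the lower bound in~\eqref{growth:u}, Jensen's inequality gives $U^{\delta,\epsilon}(t,s,\pi,Q)\geq\exp(\inf_\nu\EE[\Psi^\delta(t,\pi,\nu)+Q\Gamma(t,s)])$; the $dS$-stochastic integrals appearing in $\Psi^\delta$ and $\Gamma$ are true martingales on admissible $\nu$, so with $N:=\EE[\int_t^T\nu_r\,dr]$, Cauchy--Schwarz ($\EE[\int_t^T\nu_r^2dr]\geq N^2/(T-t)$) and Jensen's ($\EE[\pi_T^2]\geq(\pi+N)^2$) yield $\EE[\Psi^\delta+Q\Gamma]\geq \gamma\eta N^2/(T-t)+(\gamma l/2)(\pi+N)^2$; minimizing in $N\in\RR$ gives $\underline\gamma\pi^2$ with $\underline\gamma=\gamma l\eta/(2\eta+lT)$. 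The upper bound follows by plugging in $\nu\equiv 0$ and splitting $\EE[e^{\Psi^\delta(t,\pi,0)+Q\Gamma}]\leq\sqrt{\EE[e^{2\Psi^\delta(t,\pi,0)}]\,\EE[e^{2Q\Gamma}]}$: the second factor is finite because $\partial_sP$ is bounded (Assumption~\ref{assume:P}), so $2Q\Gamma$ is a stochastic integral of bounded quadratic variation; the first, after conditioning on $B$ and computing the inner Gaussian $W$-Laplace transform, reduces to $\EE_B[\exp(\gamma l\pi_T^2+2\gamma^2\sigma^2\int_t^T\pi_r^2\,dr)]$ with $\pi_r=\pi+\delta(B_r-B_t)$, which is finite and of the form $e^{\bar\gamma(\pi^2/2+1)}$ provided $\delta\leq\bar\delta$ is small enough to ensure exponential integrability of the squared Brownian terms. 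For~\eqref{lip:u}, the key observation is that $\Psi^\delta$ is $s$-independent, while, using $P(T,\cdot)=H$, one has $\Gamma(t,s)=-\gamma[H(S_T^{t,s})-P(t,s)]$, and Lipschitz continuity of $H$ and $P(t,\cdot)$ uniformly in $t$ (Assumption~\ref{assume:P}) gives $|\Gamma(t,s)-\Gamma(t,s')|\leq C|s-s'|$ almost surely; hence $e^{Q\Gamma(t,s)}\leq e^{C|Q||s-s'|}e^{Q\Gamma(t,s')}$ pathwise, and~\eqref{lip:u} follows by taking expectations, infima over $\nu$, and logarithms.

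For the PDE statements I would start with the regular case $\delta\epsilon>0$: then~\eqref{eq.HJB.Vhat} is uniformly parabolic in $(s,\pi)$ (since $\sigma,\delta>0$) with Lipschitz Hamiltonian (controls confined to $[-1/\epsilon,1/\epsilon]$), and standard parabolic HJB theory (Krylov, or Fleming--Soner) gives a classical $C^{1,2}([0,T)\times\RR^2)\cap C([0,T]\times\RR^2)$ solution. The exponential change of variables $\hat V^{\delta,\epsilon}=-\exp(-\gamma(x+\pi s+QP(t,s))+u^{\delta,\epsilon})$, together with the martingale identity $\partial_tP+(\sigma^2/2)\partial_{ss}P=0$, converts~\eqref{eq.HJB.Vhat} into~\eqref{eq:u}; the sign flip of the Hamiltonian (sup becoming the inf defining $H_\epsilon$) comes from $\hat V<0$. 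For general $\delta,\epsilon\geq0$ I would approximate by sequences $(\delta_n,\epsilon_n)\to(\delta,\epsilon)$ with $\delta_n\epsilon_n>0$: the bounds~\eqref{growth:u}--\eqref{lip:u} are uniform along such sequences, and DPP-type estimates give local equicontinuity in $(t,\pi)$, so a subsequence converges locally uniformly and stability of viscosity solutions identifies the limit as a viscosity solution of~\eqref{eq:u}.

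The hardest step, and the main obstacle of the proof, is uniqueness for $\delta\epsilon=0$, where~\eqref{eq:u} combines degeneracy in $\pi$ (when $\delta=0$) with quadratic gradient nonlinearity (from $H_0(p)=-p^2/(4\eta\gamma)$ when $\epsilon=0$, and always from the $(\sigma^2/2)(\partial_s u-\gamma(\pi+Q\partial_sP))^2$ term). I would establish comparison within the class fixed by~\eqref{growth:u}--\eqref{lip:u} by Ishii--Lions doubling of variables, with the standard quadratic penalization $\alpha[(s-s')^2+(\pi-\pi')^2]$ and an added Lyapunov perturbation $\lambda(|\pi|^2+|s|^2)+\mu t$ on the subsolution side, chosen so that the quadratic-in-$\pi$ growth in~\eqref{growth:u} confines the doubled maximum to a bounded region. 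At this maximum, the Lipschitz bound~\eqref{lip:u} caps the $s$-gradients and tames the $(\partial_s u)^2$ term, while the a priori quadratic-in-$\pi$ bound in~\eqref{growth:u}, combined with convexity of $H_0$, keeps $|\partial_\pi u|$ linearly bounded at the doubled maximum and tames the $(\partial_\pi u)^2$ term; the delicate point is calibrating $\alpha,\lambda,\mu$ so that the $\pi$-degeneracy and the quadratic Hamiltonian can be handled simultaneously.
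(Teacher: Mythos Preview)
Your treatment of the growth bounds~\eqref{growth:u}, the Lipschitz estimate~\eqref{lip:u}, and the $C^{1,2}$ regularity for $\delta\epsilon>0$ follows essentially the same lines as the paper and is correct.

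The substantive gap is in the comparison argument for $\delta\epsilon=0$. Your claim that the quadratic growth bound~\eqref{growth:u} ``keeps $|\partial_\pi u|$ linearly bounded at the doubled maximum'' is not justified: at the doubled maximum the $\pi$-component of the test gradient is $2\alpha(\hat\pi-\hat\pi')$ plus Lyapunov corrections, and from the standard estimate $\alpha|\hat\pi-\hat\pi'|^2\to 0$ one only obtains that this is $o(\alpha^{1/2})$, which is unbounded as $\alpha\to\infty$. Growth bounds on $u$ do not constrain its sub- and super-jets. (Incidentally, $H_0(p)=-p^2/(4\eta\gamma)$ is concave, not convex.) The paper handles the two quadratic gradient terms by structural arguments rather than calibration. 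For $\tfrac{\sigma^2}{2}(\partial_s u-\gamma(\pi+Q\partial_sP))^2$, it uses~\eqref{lip:u} to note that the jets in $s$ lie in a fixed bounded interval, and rewrites this quadratic as a supremum over a \emph{compact} parameter set, producing an operator $G(t,s,\pi,\partial_s u,\partial_{ss}u,\partial_{\pi\pi}u)$ that is Lipschitz in $\partial_s u$. For the remaining quadratic in $\partial_\pi u$, it observes that $H_\epsilon(p)+\tfrac{\delta^2}{2}p^2$ is still an infimum of affine-in-$p$ functions (trivially for $\epsilon>0$; for $\epsilon=0$ one has $-(\tfrac{1}{4\eta\gamma}-\tfrac{\delta^2}{2})p^2=\inf_\nu\{\tfrac{\gamma\eta}{1-2\delta^2\gamma\eta}\nu^2+p\nu\}$ for small $\delta$). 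The equation is then of Bellman--Isaacs type with a quadratic concave Hamiltonian, and the paper invokes the comparison theorem of Da~Lio--Ley, which is designed precisely for this structure. This reduction---linearizing the $s$-quadratic via the Lipschitz bound and recognizing the $\pi$-quadratic as an infimum---is the missing idea.

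A secondary issue: your route to the viscosity property in the degenerate case (approximate by $(\delta_n,\epsilon_n)$ with $\delta_n\epsilon_n>0$ and pass to a subsequential limit via stability) does not explain why that limit equals the value function $u^{\delta,\epsilon}$; identifying it would require either the comparison principle you have not yet established or a separate convergence argument for the value functions. The paper instead applies the weak dynamic programming principle of Bouchard--Touzi directly for every $\delta,\epsilon\geq 0$, so that the lower and upper semicontinuous envelopes of $u^{\delta,\epsilon}$ are themselves viscosity super- and sub-solutions of~\eqref{eq:u}; comparison then collapses them.
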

\begin{proof}
We drop the dependence of the functions on $Q$, $\d$, and $\e$, unless it is necessary. 
By the Cauchy-Schwarz and Jensen inequalities we have that
\begin{align*}
\underline u(t,\pi)&:=\inf_{\nu\in\cA^0(t,T)} \gamma \eta \int_t^T \EE[\nu_r^2]dr+\frac{\gamma l}{2}\left(\pi+\int_t^T\EE[\nu_r]dr \right)^2\\
&\leq\inf_{\nu\in\cA^\e(t,T)} \EE\left[\gamma \eta \int_t^T \nu_r^2dr+\frac{\gamma l}{2}\left(\pi+\int_t^T(\nu_rdr + \delta dB_r)\right)^2\right]\\
&\leq u(t,s,\pi) \leq \frac{1}{2} \log\left( \EE\left[e^{2Q\Gamma(t,s)}\right]\right)+\frac{1}{2}\inf_{\nu\in\cA(t,T)} \log \EE\left[e^{2\Psi(t,\pi,\nu) }\right]\\
&:= \frac{1}{2} \log\left( \EE\left[e^{2Q\Gamma(t,s)}\right]\right)+\frac{1}{2}\overline u(t,\pi).
\end{align*}
It is a standard exercise to verify that 
$$\underline u(t,\pi):=\underline \gamma_t\frac{\pi^2}{2},$$
with $\underline \gamma_t$ being the solution to the Riccati equation 
$$\frac{\underline \gamma'_t }{2}+H_0(\underline \gamma_t)=0,\quad\underline\gamma_T={\gamma l}.$$
Indeed, the latter can be deduced from the fact that the proposed $\underline u$ is a classical solution to the associated HJB equation
$$-\pa_t \underline u -H_0(\pa_\pi \underline u)=0.$$
Note that $\underline \gamma_\cdot$ is bounded from below on $[0,T]$.
Next, we deduce by a standard computation that
$$
\overline u \leq \log \EE\left[e^{2\Psi(t,\pi,0) }\right] \leq \overline \gamma \left(\frac{\pi^2}{2}+1\right),
$$
for some constant $\overline\gamma>0$ and for all $\delta\in[0,\overline\d]$, where $\overline\d$ is chosen so that
$$
\EE\left[ \exp\left(\gamma (l/2+T) \overline\d \sup_{t\in[0,T]} B^2_t\right) \right]< \infty.
$$
In addition, extracting an exponential martingale and using the boundedness of $\pa_s P$, we can estimate 
\newline$\left( \log\left[ \EE\exp(2Q\Gamma(t,s))\right]\right)/2$ from above by a constant $C$.
Thus, we have proved (\ref{growth:u}).

\smallskip

To show the Lipschitz-continuity of $u$ in $s$, we first observe that
\begin{align*}
s\to Q\Gamma(t,s)=-\gamma Q \left(P(T,s+\sigma(W_T-W_t))-P(t,s)\right)
\end{align*}
is Lipschitz-continuous. Thus, 
\begin{align*}
u(t,s',\pi,Q)&=\log\left(\inf_\nu \EE\left[e^{\Psi(t,\pi, \nu)+Q\Gamma(t,s')  }\right]\right)\\
&\leq \log\left(\inf_\nu \EE\left[e^{L|Q||s-s'|+\Psi(t,\pi, \nu)+Q\Gamma(t,s)  }\right]\right)
= u(t,s,\pi)+L|Q||s-s'|,
\end{align*}
with some constant $L>0$ which only depends on $P$ and $\gamma$.
Interchanging $s$ and $s'$, we obtain the Lipschitz-continuity of $u$, stated in (\ref{lip:u}). 

\smallskip

It remains to show that $u$ solves \eqref{eq:u}. 
To this end, we apply \cite[Corollary 5.6]{BouchardTouzi}, which states that the lower- and upper-semicontinuous envelopes of $\hat{V}$ (defined in \eqref{eq:defbarv}) are, respectively, viscosity super- and sub-solutions to the associated HJB equation \eqref{eq.HJB.Vhat}--\eqref{eq.HJB.Vhat.termcond}.
Note that the assumption of Lipschitz-continuity of the coefficients of the controlled state process, stated at the beginning of Section 5 of \cite{BouchardTouzi}, is not satisfied herein, as the drift of $X$ in \eqref{eq.wealth} is a quadratic function of $\nu$. Nevertheless, the Lipschitz property of the coefficients is only used in \cite[Section 5]{BouchardTouzi} to verify a part of \cite[Assumption A]{BouchardTouzi}. For the reader's convenience, we state \cite[Assumption A]{BouchardTouzi}, adapted to the present setting, in Appendix A. Due to the very simple form of equations \eqref{eq.positiion}--\eqref{eq.wealth}, this assumption is easily verified without using the Lipschitz property of the coefficients.
Multiplying $\hat{V}$ by an exponential and taking a logarithmic transformation (to pass from $\hat{V}$ to $u$ via \eqref{eq:defbarv}--\eqref{eq:defue}), we conclude that the lower- and upper-semicontinuous envelopes of $u$ are, respectively, viscosity super- and sub-solutions to \eqref{eq:u}.

First, we analyze the case $\d\e>0$. Using the dominated convergence, it is easy to show that, for any sufficiently small $\d>0$, $J^\d(t,s,\pi,Q;\nu)$ is continuous in $(t,s,\pi)$, uniformly over $|\nu|\leq1/\e$. This implies the continuity of $U$ in $(t,s,\pi)$ and, in turn, the continuity of $\hat V$ in $(t,s,\pi,x)$. The latter yields (via \cite[Proposition 5.4]{BouchardTouzi}) the strong dynamic programming principle for $\hat V$ (i.e., `$V^*$' and `$\phi$' can be replaced by `$V$' in equations (3.1) and (3.2) of \cite{BouchardTouzi}), which reads as follows: for any stopping time $\tau$ with values in $[t,T]$, we have
\begin{equation}\label{eq.Vhat.DPP}
\hat V(t,s,\pi,x) = \sup_{\nu \in \cA(t,T)} \EE \,\hat{V}(\tau,S_\tau,\pi^\nu_\tau,X^\nu_\tau).
\end{equation}
Next, we change the variables introducing $v:= e^{-R_1(T-t)}\hat V$ and use \eqref{eq.HJB.Vhat} to derive the PDE for $v$. We restrict the domain of the latter equation to $(0,T)\times [-R_2,R_2]^3$ and equip it with the condition $v=e^{-R_1(T-t)}\hat V$ on the boundary of this domain (note that it is consistent with the terminal condition \eqref{eq.HJB.Vhat.termcond} due to continuity of $\hat V$). For sufficiently large $R_1$, the resulting boundary-value problem for $v$ falls within the scope of Theorem 3 in Section 6.4 of \cite{KrylovNonlin}, which yields the existence of its classical solution. Undoing the change of variables and applying the standard verification argument (for which we use \eqref{eq.Vhat.DPP}), we conclude that $e^{R_1(T-t)} v$ coincides with the value function $\hat V$. Multiplying by the appropriate exponential and taking logarithmic transformation (see \eqref{eq:defue}--\eqref{eq:defbarv}), we conclude that $u$ solves \eqref{eq:u} on $(0,T)\times [-R_2,R_2]^2$ (which suffices, as $R_2>0$ is arbitrary). For the aforementioned verification, we use \eqref{eq.Vhat.DPP}, as well as the fact that the feedback optimal control is given by
$$
\nu_t = \left(\frac{-\pa_\pi u(t,S_t,\pi^\nu_t)}{2\gamma\eta}\right)\vee (-1/\epsilon)\wedge (1/\epsilon),
$$
and that the associated SDE for $\pi^\nu$ has a solution.

For the case $\d\e=0$, we recall that the lower- and upper-semicontinuous envelopes of $\hat{V}$ are, respectively, viscosity super- and sub-solutions to \eqref{eq.HJB.Vhat}--\eqref{eq.HJB.Vhat.termcond}. Changing the variables, we deduce that the lower- and upper-semicontinuous envelopes of $u$ are, respectively, viscosity super- and sub-solutions to \eqref{eq:u}.
Thus, it suffices to prove a comparison principle for \eqref{eq:u}.
To this end, we fix $C>0$ and without loss of generality we establish the comparison principle in the class of functions satisfying \eqref{growth:u} and \eqref{lip:u} for this given constant. 
This part of the proof is based on the results of \cite{DL}. Denote, for $(p,X,Y)\in\RR^3$,
\begin{align*}
\tilde C&:=2C+\gamma |Q|\sup_{t,s}|\pa_sP(t,s)|,\\
G(t,s,\pi,p,X,Y)
=&\frac{\sigma^2}{2}\sup_{|\beta|\leq \tilde C}\bigg\{-p(-2 \beta+2\gamma \pi)\\
&-\left({\beta^2}+2\beta \gamma Q\pa_sP(t,s)-2\gamma \pi\gamma Q\pa_sP(t,s) -\gamma^2 \pi^2\right) +X + Y\frac{\delta^2}{\sigma^2}\bigg\}.
\end{align*}
Note that, if $|p|\leq 2C$, we have that 
$$G(t,s,\pi,p,X)=\frac{\sigma^2}{2}\left(p-\gamma(\pi+Q \pa_s P(t,s))\right)^2+ \sigma^2\frac{X}{2}+ \delta^2\frac{Y}{2}.$$
Note that we want to characterize $u$ as a viscosity solution of \eqref{eq:u}, and to verify this property one needs to replace the derivatives of $u$ with the elements of sub- and super-jets. It is clear that, if $u$ is $C$-Lipschitz-continuous in $s$, then its sub- and super-jets in $s$ are absolutely bounded by $C$. Thus, thanks to \eqref{lip:u} and the Definition of $G$, any viscosity sub- or super-solution to \eqref{eq:u}, satisfying \eqref{growth:u}--\eqref{lip:u}, is, respectively, a sub- or super-solution to the following PDE:
\begin{align}
&0=\pa_t u +H_\e(\pa_\pi u)+\frac{\delta^2}{2}|\pa_\pi u|^2+G(t,s,\pi,\pa_s u,\pa_{ss}u,\pa_{\pi\pi}u)\label{eq.visc.edzero.1}\\
&u(T,s,\pi)=\frac{\gamma l}{2}\pi^2\nonumber.
\end{align}

Next, we consider $\d=0$. Then, the above PDE satisfies all the assumptions of \cite[Theorem 2.1]{DL}\footnote{The assumptions of \cite[Theorem 2.1]{DL} include continuity of $G$ at $t\uparrow T$, which may not hold herein. However, a careful examination of the proof of \cite[Theorem 2.1]{DL} reveals that this assumptions is not needed and only the continuity on $[0,T)$ is used in the proof.}, hence, the comparison principe holds for this equation, which, in turn, yields the comparison principle for \eqref{eq:u} (in the desired class).

Finally, we consider $\e=0$. Then, in view of the explicit formula for $H_0$ (see \eqref{eq.visc.H0.def}), equation \eqref{eq.visc.edzero.1} transforms into
\begin{align*}
0&=\pa_t u +\left( - \frac{1}{4\eta \gamma}+
\frac{\delta^2}{2}\right)|\pa_\pi u|^2+G(t,s,\pi,\pa_s u,\pa_{ss}u,\pa_{\pi\pi}u)\\
&=: \pa_t u +\tilde{H}_0(\pa_\pi u)+G(t,s,\pi,\pa_s u,\pa_{ss}u,\pa_{\pi\pi}u),
\end{align*}
where
\begin{equation*}
\tilde{H}_{0}(p) :=\inf_{\nu\in\RR} \left\{\frac{\gamma \eta}{1-2\delta^2\gamma\eta} \nu^2+p\nu \right\},
\end{equation*}
and, by possibly decreasing $\overline\d$, we ensure that $1-2\delta^2\gamma\eta>0$.
The above PDE, again, falls within the setting of \cite[Theorem 2.1]{DL}, which yields the desired comparison principle for \eqref{eq:u}.
\qed
\end{proof}

The following corollary shows that $u^{0,0}$ is a limit of $u^{\d,\e}$ as $\d,\e\downarrow0$.

\begin{cor}\label{cor:ue}
For any sequences $\d_n\downarrow 0$ and $\e_n\downarrow 0$, $u^{\d_n,\e_n}$ converges to $u^{0,0}$ locally uniformly. 
\end{cor}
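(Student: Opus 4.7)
The plan is to invoke the Barles--Perthame half-relaxed limits method together with the comparison principle furnished by Proposition \ref{prop:comp}. Since the bounds \eqref{growth:u} and \eqref{lip:u} hold with constants independent of $(\d,\e)$, the family $\{u^{\d_n,\e_n}(\cdot,\cdot,\cdot,Q)\}_n$ is locally uniformly bounded on $[0,T]\times\RR^2$, so the semi-relaxed limits
\begin{align*}
\bar u(t,s,\pi) &:= \limsup_{n\to\infty,\,(t',s',\pi')\to(t,s,\pi)} u^{\d_n,\e_n}(t',s',\pi',Q),\\
\underline u(t,s,\pi) &:= \liminf_{n\to\infty,\,(t',s',\pi')\to(t,s,\pi)} u^{\d_n,\e_n}(t',s',\pi',Q)
\end{align*}
are everywhere finite and satisfy $\underline u\leq \bar u$ by construction.

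The first step is to verify the two ingredients needed for viscosity stability. For the Hamiltonian, on any bounded set of $p$ we have $|-p/(2\gamma\eta)|\leq 1/\e_n$ for all $n$ large, so the constrained infimum in \eqref{eq.He.def.rev} coincides with the unconstrained one and $H_{\e_n}(p)=H_0(p)$ eventually; in particular $H_{\e_n}\to H_0$ locally uniformly. For the vanishing viscosity part, the terms $\tfrac{\d_n^2}{2}\pa_{\pi\pi}u$ and $\tfrac{\d_n^2}{2}(\pa_\pi u)^2$ in \eqref{eq:u} tend to zero uniformly on any bounded set of test-function jets. Combined with the fact that each $u^{\d_n,\e_n}$ is a continuous viscosity solution of \eqref{eq:u} with parameters $(\d_n,\e_n)$, the standard Barles--Perthame stability argument yields that $\bar u$ is a viscosity sub-solution and $\underline u$ a viscosity super-solution of \eqref{eq:u} with $\d=\e=0$.

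The second step is to place $\bar u$ and $\underline u$ in the comparison class of Proposition \ref{prop:comp}: the pointwise bounds \eqref{growth:u} and the Lipschitz estimate \eqref{lip:u} pass through $\limsup^*$ and $\liminf_*$ because their constants do not depend on $n$. Since $u^{\d_n,\e_n}(T,s,\pi,Q)\equiv l\gamma\pi^2/2$ for every $n$, the terminal condition is preserved, i.e.\ $\bar u(T,\cdot,\cdot)=\underline u(T,\cdot,\cdot)=l\gamma\pi^2/2$. Applying the comparison principle of Proposition \ref{prop:comp} (valid since here $\d\e=0$) to the pairs $(\bar u,u^{0,0})$ and $(u^{0,0},\underline u)$ gives $\bar u\leq u^{0,0}\leq\underline u$ on $[0,T]\times\RR^2$. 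Together with $\underline u\leq\bar u$, the three functions coincide, and equality of the upper and lower semi-relaxed limits of a locally uniformly bounded family is equivalent to locally uniform convergence to the common value, yielding the claim.

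I do not expect a real obstacle: both singular perturbations (the cap $1/\e_n$ in the Hamiltonian and the small extra diffusion $\d_n$) are monotone and act transparently on test-function jets, so the Barles--Perthame stability proof applies verbatim. The substantive input is the comparison principle for \eqref{eq:u} at $\d=\e=0$, which has already been established in Proposition \ref{prop:comp}.
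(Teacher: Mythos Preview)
Your argument is correct and is essentially the paper's own proof, spelled out in more detail: the paper likewise invokes the stability of viscosity solutions (citing Barles) to obtain that the half-relaxed limits are sub- and super-solutions of \eqref{eq:u} with $\d=\e=0$, and then concludes via the comparison principle established in Proposition~\ref{prop:comp}. The only point you make slightly more explicit than the paper is the preservation of the terminal condition and the growth/Lipschitz class, but the route is identical.
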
 
\begin{proof}
Recall the definition of $H_\e$ in \eqref{eq.He.def.rev} and notice that, for any $p\in\RR$,
$$
H_\e(p)=\inf_{|\nu|\leq \frac{1}{\e}} \{\gamma \eta \nu^2+p\nu\} \rightarrow H_0(p),
$$
as $\e\downarrow0$. Thus, the generator of \eqref{eq:u} is continuous in $\d,\e\geq0$, and the stability of viscosity solutions (cf. \cite{Barles}) yields that $\liminf_{(s',\pi',n)\rightarrow(s,\pi,\infty)} u^{\d_n,\e_n}(s',\pi')$ and $\limsup_{(s',\pi',n)\rightarrow(s,\pi,\infty)} u^{\d_n,\e_n}(s',\pi')$ are, respectively, viscosity super- and sub-solutions to \eqref{eq:u} with $\d=\e=0$ (note that \eqref{growth:u} implies that these candidate super- and sub-solutions are well defined). As the comparison principle holds for the latter equation (see the proof of Proposition \ref{prop:comp}), we obtain the statement of the corollary.
\qed
\end{proof}


\subsection{Existence, uniqueness, and stability of the optimal control}

We begin with the existence and uniqueness of the optimal control.

\begin{lemma}\label{le:charOptCont.le1}
There exists $\overline\d>0$, such that, for any $(t,s,\pi,Q)\in[0,T]\times\RR^3$, any $\d\in[0,\overline\d]$, and any $\e>0$, there exists an optimizer $\nu^{*,t,s,\pi,Q,\d,\e}$ of \eqref{def:U}.
\end{lemma}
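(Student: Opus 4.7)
The plan is to apply the direct method of calculus of variations, which fits this situation because (i) for $\e>0$ the admissible class $\cA^\e(t,T)$ is a bounded convex subset of the reflexive Hilbert space $L^2(\Omega\times[t,T],d\PP\otimes dr)$, and (ii) the functional $\nu\mapsto J^\d(t,s,\pi,Q;\nu)$ is (strictly) convex. Convexity is immediate from the decomposition of $\Psi^\d(t,\pi,\cdot)$ into a strictly convex quadratic $\gamma\eta\int_t^T\nu_r^2\,dr$, a convex quadratic in $\int_t^T\nu_r\,dr$, and a stochastic integral that is linear in $\nu$; composing with the convex increasing exponential and taking expectations preserves this.

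The first step is to take a minimizing sequence $\nu^n\in\cA^\e(t,T)$, which is finite-valued by the upper bound in \eqref{growth:u}. Since $|\nu^n|\leq 1/\e$, the sequence is bounded in $L^2(\Omega\times[t,T])$; passing to a subsequence, $\nu^n\rightharpoonup\nu^*$ weakly in $L^2$. The set $\cA^\e(t,T)$ is strongly closed in $L^2$ (progressive measurability with respect to $(\cF^t_r)$ is preserved under $L^2$-limits of subsequences, and the constraint $|\nu|\leq 1/\e$ passes to the limit), and being convex, it is weakly closed by Mazur's theorem; hence $\nu^*\in\cA^\e(t,T)$. To conclude that $\nu^*$ is an optimizer, it remains to establish weak lower semi-continuity of $J^\d(t,s,\pi,Q;\cdot)$ on $\cA^\e(t,T)$.

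By convexity this reduces, again via Mazur, to strong continuity. Given $\nu^m\to\nu$ in $L^2$ with values bounded by $1/\e$, one has $\int_t^T(\nu^m_r)^2\,dr\to\int_t^T\nu_r^2\,dr$ and $\sup_{r\in[t,T]}|\int_t^r(\nu^m_l-\nu_l)\,dl|\to 0$ in probability, and the Ito isometry yields convergence of the stochastic integral $\int_t^T(\int_t^r\nu^m_l\,dl)\,dS_r$ in $L^2(\PP)$; consequently $\Psi^\d(t,\pi,\nu^m)\to\Psi^\d(t,\pi,\nu)$ in probability. To upgrade this to $\EE[e^{\Psi^\d+Q\Gamma}]$-convergence I invoke uniform integrability: the $\int \nu^2$ piece is deterministically dominated by $\gamma\eta T/\e^2$; the quadratic-in-$B$ contribution has exponential moments when $\overline\d$ is chosen as in Proposition \ref{prop:comp}; and the remaining stochastic-integral and $\Gamma$ terms are Gaussian with variance controlled by the uniform bound on $\nu^m$ and by $\|\pa_sP\|_\infty$, hence have all exponential moments.

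This uniform integrability is the principal technical obstacle, since it requires separating the deterministic-in-$\nu$ quadratic part from the Brownian-quadratic part (where smallness of $\overline\d$ is crucial) and the Gaussian stochastic-integral part; everything else is standard weak compactness and convex analysis. With lsc in hand, $J^\d(t,s,\pi,Q;\nu^*)\leq\liminf_n J^\d(t,s,\pi,Q;\nu^n)=U^{\d,\e}(t,s,\pi,Q)$, so $\nu^*$ is an optimizer. Strict convexity of $J^\d$ (from the $\gamma\eta\int \nu^2$ term) also yields uniqueness as a byproduct, although this is not required by the statement.
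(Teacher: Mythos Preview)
Your proof is correct and follows essentially the same strategy as the paper's: the direct method, exploiting boundedness of $\cA^\e(t,T)$ and convexity of $J^\d$ in $\nu$. The paper phrases the compactness step via Koml\'os' lemma (extracting convex combinations that converge a.e.\ and in $L^1$) rather than via weak $L^2$-compactness plus Mazur, but since Mazur's theorem is precisely the statement that weak limits lie in the closed convex hull, the two routes are equivalent in substance. Both then reduce optimality of the limit to strong $L^2$-continuity of $J^\d$ on the bounded set, and both invoke the smallness of $\overline\d$ for the required moment bounds.

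One small imprecision: you say the ``remaining stochastic-integral \ldots\ terms are Gaussian with \ldots\ all exponential moments.'' The piece $\delta\int_t^T(B_r-B_t)\,dW_r$ is not Gaussian (it is a Gaussian mixture, conditionally on $B$) and does \emph{not} have all exponential moments; rather, $\EE\exp\bigl(c\,\delta\int(B_r-B_t)\,dW_r\bigr)=\EE\exp\bigl(\tfrac{c^2\delta^2}{2}\int(B_r-B_t)^2\,dr\bigr)$, which is finite only for $c\delta$ below a threshold. This is harmless here because the coefficient in $\Psi^\d$ carries a factor $\delta$, so choosing $\overline\d$ small (as you already do for the quadratic-in-$B$ term) controls this piece as well. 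With that correction your uniform integrability argument goes through.
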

\begin{remark}
The main contribution of this lemma is for $\delta =0$, since for $\delta>0$ we can easily obtain a feedback control from the maximizer of the Hamiltonian. 
\end{remark}
\begin{proof} 
Fix $(t,s,\pi,Q,\d,\e)\in[0,T)\times\RR^3\times [0,\bar \d]\times (0,1)$ and thanks to the finiteness of the value \eqref{growth:u}, pick an optimizing sequence $\{\nu^{n}\}_n$ in $\cA^{\e}(t,T)$. 
On the probability space $[t,T]\times \Omega$ with measure $\frac{1}{T-t}\mbox{Lebesgue}\times \PP$, the family of random variables $(r,\omega)\mapsto \nu_r^n$ are uniformly bounded. 
Thus, we can use the the Komlos' lemma in \cite[Lemma 2.1]{BSV} and in \cite[Theorem A1.1]{delbaen1994general} to obtain that there exist
$\nu^{*,n}$ in the convex envelop of $\{\nu^k:k\geq n\}$ and a process $\nu^*$(defined for almost all $t$) so that $\{\nu^{*,n}\}$ converges $\frac{1}{T-t}\mbox{Lebesgue}\times \PP$-a.s. and in $L^1$ to $\nu^*$. The almost sure converges implies that $\nu^* \in\cA^{\e}(t,T)$ and the $L^1$ convergence and the uniform boundedness imply that $\{\nu^{*,n}\}$ converges to $\nu^*$ in $L^p$.
Note also that for all $p\geq 1$, we can take $\bar \delta>0$ small enough so that  
$$\sup_{\nu,\nu'\in\cA^{\e}(t,T)}\EE\left[e^{p|\Psi^\d(t,\pi, \nu)-\Psi^\d(t,\pi, \nu')| }\right] <\infty.$$
Additionally, due to the Lipschitz-continuity of $H$, $\EE[e^{pQ \Gamma(t,s)}]<\infty$. Thus, the boundedness of the controls and the dominated convergence theorem easily yield that the mapping
$$\nu\in\cA^{\e}(t,T)\mapsto J^\d(t,s,\pi,Q;\nu)$$ is continuous in $L^2([t,T]\times \Omega)$. 
Finally, the convexity of $\nu\in\cA^{\e}(t,T)\mapsto J^\d(t,s,\pi,Q;\nu)$ and the fact that $\nu^n$ (and therefore $\nu^{*,n}$) is an optimizing sequence yields that $\nu^*$ is an optimizer of \eqref{def:U}.
\qed
\end{proof}

\begin{lemma}\label{lem:uniqueopt}
For any $\d\geq0$, there exist locally bounded functions $C_1$ and $C_2$ mapping, respectively, $(t,s,\pi)\in[0,T]\times\RR^2$ and $(t,s,\pi,\e)\in[0,T]\times\RR^2\times(0,\infty)$ into $(0,\infty)$, such that, for a.e. $\omega$, the mapping 
$\cA^\e(t,T)\ni\nu\mapsto e^{\Psi^\d(t,\pi,\nu) }$ is $\iota$-strong convex in the topology of $ L^2[t,T]$, where
$$
\iota := e^{-C_1\delta^2\sup_{r\in[t,T]}(B_T-B_r)^2 - C_2\sup_{r\in[t,T]}|W_T-W_r| 
+ \gamma\sigma\d\int_t^T (B_r-B_t)dW_r}/C_2.
$$
\end{lemma}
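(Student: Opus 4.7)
The plan is to combine three ingredients: the quadratic-in-$\nu$ structure of $\Psi^\d$ (yielding $L^2$-strong convexity of $\Psi^\d$ with modulus at least $2\gamma\eta$), the fact that composition with $\exp$ preserves strong convexity weighted by a pointwise lower bound on $e^{\Psi^\d}$, and an explicit pathwise lower bound on $\Psi^\d(t,\pi,\nu)$ uniform over $\nu\in\cA^\e(t,T)$ that exploits the $L^\infty$-cap $|\nu|\le 1/\e$.

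First, apply stochastic Fubini to write $\int_t^T\!\int_t^r\nu_l\,dl\,dW_r=\int_t^T\nu_l(W_T-W_l)\,dl$, so that
$$\Psi^\d(t,\pi,\nu)=\gamma\eta\|\nu\|_{L^2}^2+\tfrac{\gamma l}{2}A(\nu)^2+\ell(\nu)+m(\omega),$$
where $A(\nu):=\pi+\int_t^T\nu_r\,dr+\d(B_T-B_t)$ is affine, $\ell(\nu):=-\gamma\sigma\int_t^T\nu_l(W_T-W_l)\,dl$ is linear, and $m(\omega):=-\gamma\pi\sigma(W_T-W_t)-\gamma\sigma\d\int_t^T(B_r-B_t)\,dW_r$ is $\nu$-independent. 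This exhibits $\Psi^\d$ as a quadratic-plus-affine functional on $L^2[t,T]$ with Hessian bounded below by $2\gamma\eta\,I$, so $\Psi^\d$ is pathwise $2\gamma\eta$-strongly convex.

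Second, set $\nu_\lambda:=\lambda\nu+(1-\lambda)\nu'$ and $f(\lambda):=e^{\Psi^\d(t,\pi,\nu_\lambda)}$. Differentiating twice,
$$f''(\lambda)=e^{\Psi^\d(\nu_\lambda)}\bigl[\bigl(D\Psi^\d(\nu_\lambda)(\nu-\nu')\bigr)^2+D^2\Psi^\d(\nu_\lambda)(\nu-\nu',\nu-\nu')\bigr]\ge 2\gamma\eta\,e^{\Psi^\d(\nu_\lambda)}\|\nu-\nu'\|_{L^2}^2.$$
Since any $C^2$ function with $f''\ge c$ on $[0,1]$ satisfies $f(\lambda)\le\lambda f(1)+(1-\lambda)f(0)-\tfrac{c}{2}\lambda(1-\lambda)$, the map $\nu\mapsto e^{\Psi^\d(\nu)}$ is $\iota$-strongly convex on $\cA^\e(t,T)$ for any deterministic-in-$\nu$ quantity $\iota\le 2\gamma\eta\inf_{\nu\in\cA^\e(t,T)}e^{\Psi^\d(t,\pi,\nu)}$.

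The remaining task is a pathwise lower bound on $\Psi^\d(t,\pi,\nu)$ uniform in $\nu$. Dropping the non-negative terms $\gamma\eta\|\nu\|^2$ and $\tfrac{\gamma l}{2}A(\nu)^2$, and using $|\nu|\le 1/\e$, one estimates
$$|\ell(\nu)|\le \gamma\sigma(T/\e)\sup_{r\in[t,T]}|W_T-W_r|,\qquad |\gamma\pi\sigma(W_T-W_t)|\le \gamma\sigma|\pi|\sup_{r\in[t,T]}|W_T-W_r|.$$
The mixed stochastic integral $\d\int_t^T(B_r-B_t)\,dW_r$ is handled via the integration-by-parts identity $\int_t^T(B_r-B_t)\,dW_r=(B_T-B_t)(W_T-W_t)-\int_t^T(W_r-W_t)\,dB_r$ combined with a Young-type splitting that produces a $C_1\d^2\sup_r(B_T-B_r)^2$ penalty; the residual stochastic integral with $(B_r-B_t)dW_r$ is retained in its signed form. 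Collecting terms yields
$$\Psi^\d(t,\pi,\nu)\ge -C_1\d^2\sup_{r\in[t,T]}(B_T-B_r)^2-C_2\sup_{r\in[t,T]}|W_T-W_r|+\gamma\sigma\d\int_t^T(B_r-B_t)\,dW_r$$
with $C_1$ locally bounded in $(t,s,\pi)$ and $C_2$ locally bounded in $(t,s,\pi,\e)$. Exponentiating and absorbing the constant $2\gamma\eta$ into $C_2$ produces the stated formula for $\iota$. The principal obstacle is this last step: naive absolute-value bounds on $\d\int(B_r-B_t)\,dW_r$ would destroy the explicit signed stochastic-integral structure needed in the exponent of $\iota$, whereas the IBP-plus-Young manipulation extracts a $\d^2\sup_r(B_T-B_r)^2$ penalty while preserving the residual signed term.
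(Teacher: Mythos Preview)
Your overall approach matches the paper's: compute the second Fr\'echet derivative of $e^{\Psi^\d}$, bound it below by $2\gamma\eta\,e^{\Psi^\d}\|\cdot\|_{L^2}^2$, and then lower-bound $\Psi^\d$ pathwise over $\cA^\e(t,T)$ using the cap $|\nu|\le 1/\e$. The first two steps are correct and essentially identical to the paper's.

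The final step, however, is over-engineered and the description is internally inconsistent. The term $\gamma\sigma\d\int_t^T(B_r-B_t)\,dW_r$ sits inside your own $m(\omega)$: it does \emph{not} depend on $\nu$, so it passes through $\inf_{\nu\in\cA^\e}$ unchanged. No integration by parts and no Young splitting are required for it. The paper simply carries it through, writing the explicit pathwise bound
\[
\inf_{\nu\in\cA^\e}\Psi^\d(t,\pi,\nu)\ \ge\ -\frac{\gamma\eta}{\e^{2}}-\frac{\gamma l}{2}\Bigl(|\pi|+\tfrac{T-t}{\e}+\d(B_T-B_t)\Bigr)^{2}-\gamma\sigma|\pi||W_T-W_t|-\frac{\gamma\sigma}{\e}\!\int_t^T\!|W_T-W_r|\,dr+\gamma\sigma\d\!\int_t^T\!(B_r-B_t)\,dW_r,
\]
with the signed stochastic integral kept intact. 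The $-C_1\d^2\sup_r(B_T-B_r)^2$ contribution in $\iota$ comes from expanding the square above (the $\d^2(B_T-B_t)^2$ piece, together with a Young estimate on the cross term), \emph{not} from any manipulation of the stochastic integral. Your IBP claim is self-contradictory: once you apply the identity $\int(B_r-B_t)\,dW_r=(B_T-B_t)(W_T-W_t)-\int(W_r-W_t)\,dB_r$, there is no longer an $\int(B_r-B_t)\,dW_r$ to ``retain in signed form''. In fact, since you (correctly) dropped the nonnegative quadratic $\tfrac{\gamma l}{2}A(\nu)^2$, your route would not even generate a $\d^2$ term---which is harmless, as the stronger resulting bound still implies $\iota$-strong convexity for any $C_1>0$.
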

\begin{proof}
A direct computation of the second order Frechet derivative $\pa_{\nu\nu}\Psi^\d$ of $\nu \mapsto\Psi^\d(t,\pi,\nu) $ yields
$$\pa_{\nu\nu}\Psi^\d(t,\pi,\nu)(\nu',\nu')=2\eta\gamma\int_t^T (\nu'_r)^2dr+\gamma l \left(\int_t^T \nu'_rdr\right)^2.$$
Therefore, 
\begin{align*}
\pa_{\nu\nu}\left(e^{\Psi^\d(t,\pi,\nu)}\right)(\nu',\nu')&\geq e^{\Psi^\d(t,\pi, \nu)}\left(2\eta\gamma\int_t^T (\nu'_r)^2dr+\gamma l \left(\int_t^T \nu'_rdr\right)^2\right)\\
&\geq 2\eta\gamma e^{\Psi^\d(t,\pi,\nu)}\int_t^T (\nu'_r)^2dr.
\end{align*}
The following lower bound completes the proof:
$$\inf_{\nu\in \cA^\e(t,T)}2\eta\gamma e^{\Psi^\d(t,\pi,\nu)}\geq 2\eta\gamma \exp\left(-\frac{\gamma \eta}{\e^{2}}-\frac{\gamma l}{2}(|\pi|+\frac{T-t}{\e} + \d (B_T-B_t))^2-\gamma \sigma|\pi||W_T-W_t|
\right.
$$
$$
\left.
-\frac{\gamma\sigma}{\e}\int_t^T |W_T-W_r|dr
+ \gamma\sigma\d\int_t^T (B_r-B_t)dW_r\right). $$
\qed
\end{proof}

\begin{cor}\label{cor:unique}
There exists $\overline\d>0$, such that, for any $\d\in[0,\overline\d]$, $\e>0$, and $(t,s,\pi,Q)\in[0,T]\times\RR^3$, the optimizer $\nu^{*,t,s,\pi,Q,\d,\e}$ of \eqref{def:U} is unique.
\end{cor}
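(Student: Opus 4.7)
The plan is to deduce uniqueness directly from the pathwise strong convexity established in Lemma~\ref{lem:uniqueopt}, via a standard strict-convexity-implies-uniqueness argument for convex minimization problems on the convex set $\cA^\e(t,T)$.

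First I would fix $\overline\d>0$ small enough that the conclusions of Lemma~\ref{le:charOptCont.le1}, Lemma~\ref{lem:uniqueopt}, and Proposition~\ref{prop:comp} all hold (in particular, $J^\d(t,s,\pi,Q;\nu)<\infty$ for every $\nu\in\cA^\e(t,T)$, by \eqref{growth:u}). Assume, for contradiction, that $\nu^*$ and $\tilde\nu^*$ are two distinct optimizers of \eqref{def:U}, meaning they differ on a set of positive $\mathrm{Leb}\times\PP$ measure. By convexity of $\cA^\e(t,T)$, the interpolation $\nu^\lambda:=\lambda\nu^*+(1-\lambda)\tilde\nu^*$ lies in $\cA^\e(t,T)$ for any $\lambda\in(0,1)$.

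Applying Lemma~\ref{lem:uniqueopt} pathwise yields, for a.e.\ $\omega$,
$$e^{\Psi^\d(t,\pi,\nu^\lambda)(\omega)}\leq\lambda e^{\Psi^\d(t,\pi,\nu^*)(\omega)}+(1-\lambda)e^{\Psi^\d(t,\pi,\tilde\nu^*)(\omega)}-\frac{\lambda(1-\lambda)\iota(\omega)}{2}\int_t^T|\nu^*_r-\tilde\nu^*_r|^2(\omega)\,dr.$$
Now I would multiply by the $\nu$-independent, a.s.\ strictly positive factor $e^{Q\Gamma(t,s)}$ and take expectations. Since $\iota>0$ a.s. (as it is defined as an exponential divided by $C_2>0$) and $e^{Q\Gamma(t,s)}>0$ a.s., and since the assumed distinctness of the optimizers forces $\int_t^T|\nu^*_r-\tilde\nu^*_r|^2\,dr>0$ on a set of positive probability, the pathwise inequality is strict on an event of positive probability. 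Upon integration this yields
$$J^\d(t,s,\pi,Q;\nu^\lambda)<\lambda J^\d(t,s,\pi,Q;\nu^*)+(1-\lambda)J^\d(t,s,\pi,Q;\tilde\nu^*)=U^{\d,\e}(t,s,\pi,Q),$$
contradicting the fact that $U^{\d,\e}(t,s,\pi,Q)$ is the infimum of $J^\d$ over $\cA^\e(t,T)$.

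The only point requiring some care is to ensure that the expectations on both sides of the displayed inequality are finite, so that the strict pathwise inequality on a set of positive probability actually promotes to a strict inequality of the expected values (rather than degenerating into an ill-defined "$\infty-\infty$"). This is immediate, however: the integrability estimates collected in the proof of Lemma~\ref{le:charOptCont.le1}, together with \eqref{growth:u}, guarantee $J^\d(t,s,\pi,Q;\nu)<\infty$ for every $\nu\in\cA^\e(t,T)$ provided $\d\in[0,\overline\d]$ with $\overline\d$ sufficiently small. No further computation is needed.
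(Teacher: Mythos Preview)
Your proof is correct and follows essentially the same approach as the paper's own argument: both use the pathwise strong convexity of $\nu\mapsto e^{\Psi^\d(t,\pi,\nu)}$ from Lemma~\ref{lem:uniqueopt}, multiply by the strictly positive $\nu$-independent factor $e^{Q\Gamma(t,s)}$, and pass to expectations to obtain strict convexity of $J^\d$, from which uniqueness is immediate. The paper compresses this into a single sentence, whereas you have written out the standard contradiction argument in full and taken care of the integrability issues; no substantive difference in strategy.
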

\begin{proof}
Consider the mapping $\cA^\e(t,T)\ni \nu\mapsto J^\d(t,s,\pi,Q;\nu)\in\RR$, which is well defined for sufficiently small $\overline\d>0$.
Using Lemma \ref{lem:uniqueopt} and the strict positivity of $\iota \exp(Q\Gamma(t,s))$ (with $\iota$ defined in Lemma \ref{lem:uniqueopt}), it is easy to deduce the strict convexity of the above mapping.
The latter implies uniqueness of the optimizer.
\qed
\end{proof}

\medskip

Throughout the remainder of this section, we denote by $\nu^{*,t,s,\pi,Q,\d,\e}$ the optimizer of \eqref{eq:defue}.

\medskip

The following proposition establishes the stability of the optimal control w.r.t. the initial condition $(s,\pi,Q)$.

\begin{proposition}\label{le:charOptCont.stabOptCont}
There exists $\overline\d>0$, s.t., for any fixed $t\in[0,T]$ and $\e>0$, there exist locally Lipchitz functions $C_{1,t,\overline \d,\e}$ and $C_{2,t,\overline \d,\e}$, with $C_{2,t,\overline \d,\e}(s,s,\pi,\pi,Q,Q)=0$, such that for all $s,s',\pi,\pi',Q,Q'\in \RR^6$ and all $\d\in[0,\overline\d]$, 
\begin{align}
 \int_t^T{\EE\left|\nu^{*,t,s,\pi,Q,\d,\e}_r-\nu_r^{*,t,s',\pi',Q',\d,\e}\right|^2dr}\leq& C_{1,t,\overline \d,\e}(s,s',\pi,\pi',Q,Q')|U^{\d,\e}(t,s,\pi,Q)-U^{\d,\e}(t,s',\pi',Q')|\notag\\
&+C_{2,t,\overline \d,\e}(s,s',\pi,\pi',Q,Q')(|U^{\d,\e}(t,s',\pi',Q')|+1).\label{eq.stabCont.mainEst}
 \end{align}
 In particular $$\RR^3\ni(s,\pi,Q)\mapsto \nu^{*,t,s,\pi,Q,\d,\e}\in L^2([t,T]\times\Omega)$$ is continuous for $\d\in[0,\overline\d]$. 
\end{proposition}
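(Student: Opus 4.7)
The plan is to exploit the $\iota$-strong convexity from Lemma \ref{lem:uniqueopt} evaluated at the midpoint of the two optimizers, translate the resulting suboptimality gap into the two contributions appearing on the right-hand side of \eqref{eq.stabCont.mainEst}, and finally pass from the natural weighted $L^2$-bound to the plain $L^2$-bound by using the uniform bound $|\nu|\le 1/\e$.

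Let $\nu^*:=\nu^{*,t,s,\pi,Q,\d,\e}$ and $\nu^{*'}:=\nu^{*,t,s',\pi',Q',\d,\e}$, and set $\bar\nu:=(\nu^*+\nu^{*'})/2\in\cA^\e(t,T)$ (this membership uses the uniform bound $1/\e$). Evaluating the $\iota$-strong convexity of $\nu\mapsto e^{\Psi^\d(t,\pi,\nu)}$ in $L^2[t,T]$ at the three points $\nu^*,\nu^{*'},\bar\nu$, multiplying by $e^{Q\Gamma(t,s)}\ge 0$, taking expectation, and recalling $J^\d(t,s,\pi,Q;\nu^*)=U^{\d,\e}(t,s,\pi,Q)$ together with $J^\d(t,s,\pi,Q;\bar\nu)\ge U^{\d,\e}(t,s,\pi,Q)$, I obtain the key weighted estimate
\begin{equation*}
\EE\!\left[\iota\,e^{Q\Gamma(t,s)}\!\int_t^T|\nu^*_r-\nu^{*'}_r|^2\,dr\right]\;\le\;2\bigl(J^\d(t,s,\pi,Q;\nu^{*'})-U^{\d,\e}(t,s,\pi,Q)\bigr).
\end{equation*}
I then decompose the right-hand side using $J^\d(t,s',\pi',Q';\nu^{*'})=U^{\d,\e}(t,s',\pi',Q')$:
\begin{equation*}
J^\d(t,s,\pi,Q;\nu^{*'})-U^{\d,\e}(t,s,\pi,Q)=\EE\!\left[e^{\Psi^\d(t,\pi',\nu^{*'})+Q'\Gamma(t,s')}(e^{D}-1)\right]+U^{\d,\e}(t,s',\pi',Q')-U^{\d,\e}(t,s,\pi,Q),
\end{equation*}
with $D:=[\Psi^\d(t,\pi,\nu^{*'})-\Psi^\d(t,\pi',\nu^{*'})]+[Q\Gamma(t,s)-Q'\Gamma(t,s')]$. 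An explicit computation from the definitions of $\Psi^\d$ and $\Gamma$ (using the Lipschitz continuity of $P$ and the boundedness of $\pa_sP$ from Assumption \ref{assume:P}) produces a pointwise bound $|D|\le K(s,s',\pi,\pi',Q,Q')\,Z$, where $K$ is locally Lipschitz and vanishes on the diagonal, and $Z\ge 0$ has exponential moments of all orders (the stochastic integral $\int_t^T\pa_sP(r,S_r)dS_r$ in $\Gamma$ is sub-Gaussian because $\pa_sP$ is bounded). Using $|e^D-1|\le |D|e^{|D|}$ and a H\"older inequality with exponents close to $1$ keeps the factor $|U^{\d,\e}(t,s',\pi',Q')|+1$ on the right and yields a bound of the form $C_2^*(s,s',\pi,\pi',Q,Q')(|U^{\d,\e}(t,s',\pi',Q')|+1)$, with $C_2^*$ locally Lipschitz and vanishing on the diagonal.

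Finally, I pass from the weighted to the unweighted norm. Since $|\nu^*_r-\nu^{*'}_r|\le 2/\e$, one has $\int_t^T|\nu^*_r-\nu^{*'}_r|^2\,dr\le 4(T-t)/\e^2=:M$, and Cauchy--Schwarz gives
$\EE\!\int_t^T|\nu^*_r-\nu^{*'}_r|^2\,dr\le \sqrt{M\,\EE[(\iota e^{Q\Gamma(t,s)})^{-1}]}\cdot\sqrt{\EE[\iota e^{Q\Gamma(t,s)}\!\int_t^T|\nu^*-\nu^{*'}|^2\,dr]}$.
The moment $\EE[(\iota e^{Q\Gamma(t,s)})^{-1}]$ is finite and locally bounded (in $s,\pi,Q$ and in $\d\in[0,\overline\d]$ for sufficiently small $\overline\d$) by standard exponential-moment estimates for Gaussian suprema, for the stochastic integral $\int_t^T(B_r-B_t)\,dW_r$ appearing in $\iota$, and for $\Gamma(t,s)$. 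Combining this with the weighted estimate and the bound of the second step, then using Young's inequality $\sqrt{ab}\le \theta a+b/(4\theta)$ (with $\theta$ tuned locally in $(s,s',\pi,\pi',Q,Q')$) together with $\sqrt{|U-U'|}\le|U-U'|+1$ to split the square root into a linear term in $|U-U'|$ and a residual absorbed into $C_2(\cdot)(|U^{\d,\e}(t,s',\pi',Q')|+1)$ (using that both $C_2^*$ and the overall residual vanish on the diagonal), yields \eqref{eq.stabCont.mainEst}; the continuity conclusion then follows from $C_2\to 0$ and the continuity of $U^{\d,\e}$ established in Proposition \ref{prop:comp}.

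\textbf{The main obstacle} is the last step: the strong-convexity argument naturally produces the weighted norm $\EE[\iota e^{Q\Gamma(t,s)}\int|\nu^*-\nu^{*'}|^2\,dr]$, and converting it to the plain $\EE\!\int|\nu^*-\nu^{*'}|^2\,dr$ while preserving the form of \eqref{eq.stabCont.mainEst} requires careful control of the negative moments of the strong-convexity modulus $\iota e^{Q\Gamma(t,s)}$, the quantitative use of the uniform bound $|\nu|\le 1/\e$, and delicate balancing of the Young parameter against the locally Lipschitz constants $C_1,C_2$.
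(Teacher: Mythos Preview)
Your overall strategy is correct and closely parallels the paper's: both exploit the $\iota$-strong convexity of $\nu\mapsto e^{\Psi^\d(t,\pi,\nu)}$ to bound a weighted $L^2$-distance between the two optimizers by a suboptimality gap, and then decompose that gap into a $|U-U'|$ contribution and a ``change-of-parameters'' contribution vanishing on the diagonal. Two differences are worth noting.

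First, to produce the weighted estimate the paper does not use the midpoint form of strong convexity. It expands $\EE[e^{\Psi(\pi',\nu)+Q'\Gamma(s')}]$ to second order around $\nu^{*'}$ and invokes the first-order optimality condition
\[
\EE\!\left[\partial_\nu\!\left(e^{\Psi(\pi',\nu^{*'})+Q'\Gamma(s')}\right)(\nu^*-\nu^{*'})\right]\ge 0,
\]
which holds because $\nu^{*'}$ minimizes the $(s',\pi',Q')$-objective and $\nu^*$ is admissible. Your midpoint argument reaches the same weighted bound and is equally valid.

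Second, and this is where your argument falls short of the stated proposition, the paper removes the weight by a reverse H\"older step that keeps the estimate \emph{linear}: it passes directly from $\EE[\iota X]\le A$ to $\EE[X]\le \EE[1/\iota]\cdot A$. Your Cauchy--Schwarz route, using $\int|\nu^*-\nu^{*'}|^2\,dr\le 4(T-t)/\e^2$, only yields
\[
\EE[X]\;\le\;C\sqrt{\,C_1^*|U-U'|+C_2^*(|U'|+1)\,}.
\]
The subsequent conversion to the linear form \eqref{eq.stabCont.mainEst} does not go through: after $\sqrt{a+b}\le\sqrt a+\sqrt b$, the piece $\sqrt{C_1^*|U-U'|}$ cannot be split as $\tilde C_1|U-U'|+\tilde C_2(|U'|+1)$ with a \emph{locally Lipschitz} $\tilde C_2$ vanishing on the diagonal. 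Any Young split $\sqrt{ab}\le\theta a+b/(4\theta)$ leaves the residual $C_1^*/(4\theta)$, which is bounded away from zero on the diagonal unless $\theta\to\infty$ (destroying local boundedness of $\tilde C_1$); and $\sqrt{|U-U'|}\le|U-U'|+1$ produces the non-vanishing ``$+1$''. At best you obtain $C_2$ that is locally H\"older-$\tfrac12$ near the diagonal.

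This gap is not fatal for the paper's purposes: the ``in particular'' continuity statement follows already from your square-root bound, and the only downstream use of \eqref{eq.stabCont.mainEst} (in Proposition~\ref{cor:smoothness}) is precisely to deduce local $\tfrac12$-H\"older continuity of $(s,\pi,Q)\mapsto\nu^*$. But as written, your argument proves a weaker regularity for $C_2$ than the proposition claims; to match the statement you would need the linear unweighting step the paper uses.
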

\begin{proof}
We fix $(t,\d,\e)$ and drop the dependence on these variables when not needed.
First, we notice that there exists a constant $L>0$, s.t.
\begin{align*}
&e^{L|Q-Q'|+L(|Q|+|Q'|)|s-s'|}U(s,\pi,Q)\geq \EE\left[e^{\Psi(\pi, \nu^{*,s,\pi,Q})+Q'\Gamma(s') }\right]\\
& \geq  \EE\left[e^{\Psi(\pi', \nu^{*,s,\pi,Q})+Q'\Gamma(s') }\right] - \EE\left[|e^{\Psi(\pi, \nu^{*,s,\pi,Q})}-e^{\Psi(\pi', \nu^{*,s,\pi,Q}) }|e^{Q'\Gamma(s')}\right]\\
&\geq \EE\left[e^{\Psi(\pi', \nu^{*,s',\pi',Q'})+Q'\Gamma(s') }\right]+ \EE\left[\pa_\nu \left(e^{\Psi(\pi', \nu^{*,s',\pi',Q'})+Q'\Gamma(s') }\right) (\nu^{*,s,\pi,Q}-\nu^{*,s',\pi',Q'})\right]\\
& + \EE\left[ \iota \int_t^T (\nu_r^{*,s,\pi,Q}-\nu_r^{*,s',\pi',Q'})^2dr\right]
 - \EE\left[|e^{\Psi(\pi, \nu^{*,s,\pi,Q})}-e^{\Psi(\pi', \nu^{*,s,\pi,Q}) }| e^{Q'\Gamma(s')}\right],
\end{align*}
where $\iota$ is defined in Lemma \ref{lem:uniqueopt} and the last inequality in the above relies on the $\iota$-convexity of the mapping 
$\nu\mapsto e^{\Psi(t,\pi,\nu) }$.
Due to the optimality of $ \nu^{*,s',\pi',Q'}$ and the admissibility of $ \nu^{*,s,\pi,Q}$, for the problem with initial condition $(s',\pi',Q')$, we have 
\begin{equation}\label{eq.SN.concern.1}
\EE\left[\pa_\nu \left(e^{\Psi(\pi', \nu^{*,s',\pi',Q'})+Q'\Gamma(s') }\right) (\nu^{*,s,\pi,Q}-\nu^{*,s',\pi',Q'})\right]\geq 0.
\end{equation}
Therefore, recalling the definition of $U$ in \eqref{def:U}, we obtain
\begin{align*}
&e^{L|Q-Q'|+L(|Q|+|Q'|)|s-s'|}U(s,\pi,Q)-U(s',\pi',Q')+\sup_{\nu\in\cA^\e(t,T)}\EE\left[|e^{\Psi(\pi, \nu)}-e^{\Psi(\pi', \nu) }|e^{Q'\Gamma(s')}\right]\\
&\geq \EE\left[ \iota \int_t^T (\nu_r^{*,s,\pi,Q}-\nu_r^{*,s',\pi',Q'})^2 dr\right].
\end{align*}
The Cauchy-Schwartz inequality yields 
$$
\sup_{\nu\in\cA^\e(t,T)}\EE\left[|e^{\Psi(\pi, \nu)}-e^{\Psi(\pi', \nu) }|e^{Q'\Gamma(s')}\right]
\leq \sup_{\nu\in\cA^\e(t,T)}\left(\EE e^{2\Psi(\pi', \nu)+2Q'\Gamma(s')}\right)^{1/2} \left(\EE |e^{\chi_{\d,\e}|\pi-\pi'|}-1|^2\right)^{1/2}
$$
$$
\leq e^{C_1(\e)\left(1+(\pi')^2+(s')^2 + (Q')^2\right)} \left(\EE |e^{\chi_{\d,\e}|\pi-\pi'|}-1|^2\right)^{1/2},
$$
with $ \chi_{\d,\e}:= \gamma l(|\pi|+|\pi'|+C_2(\e)+2\delta|B_T-B_t|)/2 + \gamma\sigma|W_T-W_t|>0$, which has finite exponential moments.
It is easy to see that there exists a sufficiently small $\overline\d>0$, s.t. $\EE (1/\iota)<\infty$ for all $\d\in[0,\overline\d]$. Thus, using the reverse Holder's inequality and the above estimates, we obtain 
\begin{align*}
&\EE (1/\iota) \left(e^{L|Q-Q'|+L(|Q|+|Q'|)|s-s'|}U(s,\pi,Q)-U(s',\pi',Q')\right.\\
&\left.+e^{C_1(\e)\left(1+(\pi')^2+(s')^2 + (Q')^2\right)} \left(\EE |e^{\chi_{\d,\e}|\pi-\pi'|}-1|^2\right)^{1/2}\right)
\geq {\EE\|\nu^{*,s,\pi,Q}-\nu^{*,s',\pi',Q'})\|^2_{L^2}},
\end{align*}
and we easily identify $C_{1,t,\overline \d,\e}$ and $C_{2,t,\overline \d,\e}$ whose regularity is a direct consequence of the existence of (finite) exponential moments of $\chi_{\d,\e}$. 
The continuity of $\RR^3\ni(s,\pi,Q)\mapsto \nu^{*,t,s,\pi,Q,\d,\e}\in L^2([t,T]\times\Omega)$ is now a consequence of the continuity of $U$.
\qed
\end{proof}

\smallskip

Throughout the remainder of this section, we fix $\overline\d>0$ for which the conclusions of Propositions \ref{prop:comp} and \ref{le:charOptCont.stabOptCont}, Lemma \ref{le:charOptCont.le1}, and Corollary \ref{cor:unique}, hold.

\subsection{Sensitivities of the value function}

Our next goal is to analyze the regularity of the partial derivatives of $U^{\d,\e}$, and hence $u^{\d,\e}$, w.r.t. $(s,\pi,Q)$. We begin with $J^{\d}$.
For any $\d\in[0,\overline\d]$, $\e>0$, and $\nu\in\cA^\e(t,T)$, we use Fubini's theorem to deduce: 
\begin{align}
\pa_s J^\d(t,s,\pi,Q;\nu)&=Q\,\EE\left[\pa_s\Gamma(t,s)e^{\Psi^\d(t,\pi, \nu)+Q\Gamma(t,s) }\right]\nonumber\\
&=\gamma Q\,\EE\left[ \left(\pa_s P(T,s+\sigma(W_T-W_t))-\pa_sP(t,s)\right)e^{\Psi^\d(t,\pi, \nu)+Q\Gamma(t,s) }\right],\label{eq.Sens.Js.def}\\
\pa_\pi J^\d(t,s,\pi,Q;\nu)&=\EE\left[\pa_\pi\Psi^\d(t,\pi, \nu)e^{\Psi^\d(t,\pi, \nu)+Q\Gamma(t,s) }\right]\nonumber\\
&=\EE\left[\left({\gamma l}(\pi+\int_t^T\nu_rdr+\delta (B_T-B_t))-\gamma(S_T-s)\right)e^{\Psi^\d(t,\pi, \nu)+Q\Gamma(t,s) }\right],\label{eq.Sens.Jpi.def}\\
\pa_Q J^\d(t,s,\pi,Q;\nu)&=\EE\left[\Gamma(t,s) e^{\Psi^\d(t,\pi, \nu)+Q\Gamma(t,s) }\right].\label{eq.Sens.JQ.def}
\end{align}

We recall the definition of equidifferentiability given in \cite{MS}.
\begin{definition}
For any $\e\geq 0$ and $t\in [0,T)$, we call the family of functions $\{f(\cdot,\nu): \RR\to \RR\}$, for all $\nu\in \times \cA^\e(t,T)$, equidifferentiable at $x\in \RR$ if, as $x'\to x$, the limit of $(f(x,\nu)-f(x',\nu))/(x-x')$ exists uniformly in $\nu\in \cA^\e(t,T)$. The family is equidifferentiable on a set if it is equidifferentiable at any point of the set. 
\end{definition}

\begin{lemma}\label{le:sensValue.mainLemma}
For any $\d\in[0,\overline\d]$, any $\e>0$, and any $t\in[0,T]$, the family 
$$\left\{(\pa_s J^\d(t,\cdot;\nu),\pa_\pi J^\d(t,\cdot;\nu),\pa_Q J^\d(t,\cdot;\nu)):\nu\in\cA^{\e}(t,T)\right\}$$
is uniformly bounded and equidifferentiable in each of its variables $(s,\pi,Q)\in \RR^3$.
In addition, for any $(t,s_0,\pi_0,Q_0)\in[0,T]\times\RR^3$, any $\d\in[0,\overline\d]$, and any $\e>0$, the mapping 
$$(s,\pi,Q)\mapsto (\pa_s J^\d(t,s_0,\pi_0,Q_0;\nu^{*,t,s,\pi,Q,\d,\e}),\pa_\pi J^\d(t,s_0,\pi_0,Q_0;\nu^{*,t,s,\pi,Q,\d,\e}),\pa_Q J^\d(t,s_0,\pi_0,Q_0;\nu^{*,t,s,\pi,Q,\d,\e}))$$
is continuous.
\end{lemma}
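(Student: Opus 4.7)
The plan is to exploit the explicit representations \eqref{eq.Sens.Js.def}--\eqref{eq.Sens.JQ.def} and apply dominated convergence to differentiate under the expectation, using the cap $|\nu|\leq 1/\e$ to gain uniform-in-$\nu$ control on $\Psi^\d(t,\pi,\nu)$ modulo Brownian-increment terms whose exponential moments are finite when $\d\in[0,\overline\d]$.

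\smallskip

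\textbf{Uniform boundedness and equidifferentiability.} I would first show that, on any bounded set $\mathcal{B}\subset\RR^3$ of $(s,\pi,Q)$, there exist $K>0$ and $p>1$ such that
\begin{equation*}
\sup_{\nu\in\cA^\e(t,T)}\sup_{(s,\pi,Q)\in\mathcal{B}}\EE\bigl[e^{p\Psi^\d(t,\pi,\nu)+pQ\Gamma(t,s)}\bigr]\leq K,
\end{equation*}
via Cauchy--Schwarz together with (i) the pointwise bound $\gamma\eta\int_t^T\nu_r^2\,dr\leq\gamma\eta T/\e^2$, (ii) the control $|\Gamma(t,s)|\leq C(1+\sup_{r\in[t,T]}|W_r-W_t|)$ from Assumption~\ref{assume:P}, (iii) a Novikov-type argument for the stochastic integral appearing in $\Psi^\d$, and (iv) the Gaussian quadratic exponential moments $\EE\exp(c(B_T-B_t)^2)<\infty$, whose validity forces $\overline\d$ to be small. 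Applying Cauchy--Schwarz to \eqref{eq.Sens.Js.def}--\eqref{eq.Sens.JQ.def}, and invoking the boundedness of $\pa_sP$, the linear growth of $\Gamma$ in $(s,W_\cdot-W_t)$, and the bound $|\int_t^T\nu_r\,dr|\leq T/\e$, yields the uniform boundedness claim. For equidifferentiability, I would differentiate \eqref{eq.Sens.Js.def}--\eqref{eq.Sens.JQ.def} once more in each of $s,\pi,Q$ (interchanging derivative and expectation via dominated convergence, with $P\in C^{1,3}$ ensuring that the higher $s$-derivatives of $P$ are bounded on compact sets) and re-apply the same Cauchy--Schwarz and exponential-moment argument to the new polynomial-in-$(s,\pi,Q,W_\cdot,B_\cdot)$ factors that appear. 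The resulting uniform-in-$\nu$ bound on the second partial derivatives over any compact set, combined with the mean-value theorem, delivers equidifferentiability in each variable.

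\smallskip

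\textbf{Continuity via the optimizer.} Proposition \ref{le:charOptCont.stabOptCont} provides continuity of $(s,\pi,Q)\mapsto \nu^{*,t,s,\pi,Q,\d,\e}$ into $L^2([t,T]\times\Omega)$, so it suffices to show that, for each fixed $(s_0,\pi_0,Q_0)$, the three maps $\nu\mapsto\pa_\bullet J^\d(t,s_0,\pi_0,Q_0;\nu)$ are $L^2$-continuous on $\cA^\e(t,T)$. Given $\nu^n\to\nu$ in $L^2$ with $|\nu^n|\leq 1/\e$, I pass to an a.s.-convergent subsequence; the Cauchy--Schwarz estimate $\sup_{r\in[t,T]}|\int_t^r(\nu^n_l-\nu_l)\,dl|\leq\sqrt{T-t}\,\|\nu^n-\nu\|_{L^2}$ and Itô isometry then imply $\Psi^\d(t,\pi_0,\nu^n)\to \Psi^\d(t,\pi_0,\nu)$ in $L^2(\Omega)$, hence a.s. along a further subsequence. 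Dominated convergence with the $L^p$-majorant $e^{p(\Psi^\d+Q_0\Gamma)}$ from the first step yields pointwise convergence of the partial derivatives, and since the subsequence was arbitrary the entire sequence converges, proving the continuity of the compositions.

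\smallskip

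\textbf{Main obstacle.} The principal technical point throughout is the uniform-in-$\nu$ exponential integrability of $e^{p(\Psi^\d+Q\Gamma)}$, since $\Psi^\d$ contains the quadratic term $\frac{\gamma l}{2}(\pi+\int_t^T\nu_r\,dr+\d(B_T-B_t))^2$ together with a stochastic integral whose quadratic variation is itself quadratic in $\d(B_r-B_t)$. Controlling every relevant exponential moment with a single small $\overline\d$, at each successive differentiation, is the recurring constraint that binds all parts of the argument together.
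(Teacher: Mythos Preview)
Your proposal is correct and follows essentially the same route as the paper: both arguments use the explicit formulas \eqref{eq.Sens.Js.def}--\eqref{eq.Sens.JQ.def}, differentiate once more under the expectation (justified by dominated convergence via the cap $|\nu|\leq 1/\e$ and the small-$\d$ exponential-moment bounds), obtain locally uniform-in-$\nu$ bounds on the second partials to deduce equidifferentiability, and then combine continuity of $\nu\mapsto\pa_\bullet J^\d$ with Proposition~\ref{le:charOptCont.stabOptCont} for the second statement. Your write-up is more explicit about the exponential-moment bookkeeping, but the structure is identical.
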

\begin{proof}
The uniform boundedness of $(\pa_s J^\d,\pa_\pi J^\d,\pa_Q J^\d)$ follows by direct estimates.
Formally differentiating the expressions for $(\pa_s J^\d, \pa_\pi J^\d,\pa_Q J^\d)$, we represent all partial derivatives of these terms as expectations of the quantities of the form 
$$\chi_{\alpha ,\beta}(t,s,\pi,Q) e^{\Psi^\d(t,\pi, \nu)+Q\Gamma(t,s) } \mbox{ for }\alpha,\beta=s,\pi,Q,$$
for some random weights $\chi_{\alpha,\beta}$.
Using the boundedness of $\nu\in \cA^\e(t,T)$, the fact that $\d$ is small enough, and Fubini's theorem, we verify these formal derivations and show that the second order derivatives can be bounded locally uniformly in $(s,\pi,Q,\nu)$. Using the dominated convergence, we also deduce that the second order derivatives are continuous in $(s,\pi,Q,\nu)$.
This implies the equidifferentiability of $(\pa_s J^\d, \pa_\pi J^\d,\pa_Q J^\d)$.
Finally, the continuity of $(\pa_s J^\d, \pa_\pi J^\d,\pa_Q J^\d)$ in $\nu$ and Proposition \ref{le:charOptCont.stabOptCont} imply the second statement of the lemma.
\qed
\end{proof}

\smallskip

The above lemma and the general version of the Envelop Theorem given in \cite{MS} allow us to establish the existence and representation of the partial derivatives of $U^{\d,\e}$.

\begin{proposition}\label{cor:smoothness}
For any $\d\in[0,\overline\d]$, any $\e>0$, and any $t\in[0,T]$, $U^{\d,\e}(t,s,\pi,Q)$ is continuously differentiable in $(s,\pi,Q)\in\RR^3$, with 
\begin{align}
\pa_s U^{\d,\e}(t,s,\pi,Q)&=Q\,\EE\left[\pa_s\Gamma(t,s)e^{\Psi^\d(t,\pi, \nu^{*,t,s,\pi,Q,\d,\e})+Q\Gamma(t,s) }\right],\label{eq:sder}\\
\pa_\pi U^{\d,\e}(t,s,\pi,Q)&=\EE\left[\pa_\pi\Psi^\d(t,\pi, \nu^{*,t,s,\pi,Q,\d,\e})e^{\Psi^\d(t,\pi, \nu^{*,t,s,\pi,Q,\d,\e})+Q\Gamma(t,s) }\right],\label{eq:pider}\\
\pa_Q U^{\d,\e}(t,s,\pi,Q)&=\EE\left[\Gamma(t,s)e^{\Psi^\d(t,\pi,  \nu^{*,t,s,\pi,Q,\d,\e})+Q\Gamma(t,s) }\right].\label{eq:Qder}
\end{align}
The above partial derivatives are continuous in $(t,s,\pi,Q)\in[0,T)\times\RR^3$. Moreover, they are locally H\"older-continuous in $(s,\pi,Q)$, locally uniformly over $t\in[0,T)$ and $\d\in[0,\tilde\d]$, with some $\tilde\d\in(0,\overline\d]$. For $\pa_\pi U^{\d,\e}$ and $\pa_Q U^{\d,\e}$, the latter two statements hold with the interval $[0,T)$ replaced by $[0,T]$.
\end{proposition}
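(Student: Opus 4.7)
The plan is to apply the general Envelope Theorem of Milgrom and Segal \cite{MS} to the parametric family $(s,\pi,Q) \mapsto J^\d(t,s,\pi,Q;\nu)$, indexed by $\nu \in \cA^\e(t,T)$, and then transfer the resulting regularity through the $L^2$-stability of the optimizer. Lemma \ref{le:sensValue.mainLemma} supplies exactly the ingredients required for the envelope formula---uniform boundedness and equidifferentiability of the family of partial derivatives in each of $(s,\pi,Q)$---and Corollary \ref{cor:unique} supplies uniqueness of the minimizer. Together, these yield \eqref{eq:sder}--\eqref{eq:Qder} directly.

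Continuity of the partial derivatives in $(s,\pi,Q)$ at fixed $t$ is then immediate from composing the jointly continuous map $(s,\pi,Q,\nu) \mapsto \pa_s J^\d(t,s,\pi,Q;\nu)$ (and analogously for $\pa_\pi J^\d$, $\pa_Q J^\d$), whose continuity is the second statement of Lemma \ref{le:sensValue.mainLemma}, with the $L^2$-continuous selector $(s,\pi,Q) \mapsto \nu^{*,t,s,\pi,Q,\d,\e}$ from Proposition \ref{le:charOptCont.stabOptCont}. To upgrade to joint continuity in $(t,s,\pi,Q)$, I would establish an analogous $L^2$-stability of the optimizer as $t$ varies by rerunning the strong-convexity argument of Proposition \ref{le:charOptCont.stabOptCont} after extending (or restricting) controls to a common interval, using the cap $|\nu|\leq 1/\e$ to absorb the mismatch.

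For the local H\"older continuity in $(s,\pi,Q)$ uniformly in $t \in [0,T)$ and $\d \in [0,\tilde\d]$, I would rely on the quantitative estimate \eqref{eq.stabCont.mainEst}. Combined with the Lipschitz property of $U^{\d,\e}$ in $s$ from Proposition \ref{prop:comp} and local Lipschitz estimates in $\pi$ and $Q$ that follow from the envelope representations themselves (using the growth bound \eqref{growth:u}), this yields an $L^2$-H\"older modulus for $(s,\pi,Q) \mapsto \nu^{*,t,s,\pi,Q,\d,\e}$; the Lipschitz dependence in $\nu$ of the integrands in \eqref{eq:sder}--\eqref{eq:Qder} then transfers this modulus to the derivatives. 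The \emph{main obstacle} is precisely the uniformity in $t$ and $\d$: one must reinspect the constants $C_{1,t,\overline\d,\e}$ and $C_{2,t,\overline\d,\e}$ of Proposition \ref{le:charOptCont.stabOptCont} together with the exponential-moment bound $\EE[1/\iota]<\infty$ from Lemma \ref{lem:uniqueopt}, verifying that they remain bounded on compact subsets of $[0,T)$ and that the moments are finite for all $\d$ below some $\tilde\d\leq\overline\d$.

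For the extension to $t=T$, the formulas \eqref{eq:pider}--\eqref{eq:Qder} reduce at $t=T$ to $\gamma l \pi\, e^{\gamma l \pi^2/2}$ and $0$ respectively, since $\Psi^\d(T,\pi,\nu)=\gamma l \pi^2/2$ and $\Gamma(T,s)=0$; a dominated-convergence argument combined with the stability of the optimizer as $t\uparrow T$ gives the continuous extension. The analogous limit for $\pa_s U^{\d,\e}$ fails because $\pa_s \Gamma(t,s)$ in \eqref{eq.Sens.Js.def} involves $\pa_s P(T,\cdot) - \pa_s P(t,\cdot)$, and Assumption \ref{assume:P} only guarantees $P\in C^{1,3}([0,T)\times\RR)\cap C([0,T]\times\RR)$, with no continuity of $\pa_s P$ up to the terminal time (in fact $H$ is only assumed Lipschitz, so $\pa_s P(T,\cdot)$ need not exist at all).
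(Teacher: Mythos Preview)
Your approach matches the paper's: apply \cite[Theorem~3]{MS} via Lemma~\ref{le:sensValue.mainLemma} to obtain \eqref{eq:sder}--\eqref{eq:Qder}, then feed the stability estimate \eqref{eq.stabCont.mainEst} back in---now that $U^{\d,\e}$ is known to be differentiable, hence locally Lipschitz in all of $(s,\pi,Q)$---to extract local $1/2$-H\"older continuity of $(s,\pi,Q)\mapsto\nu^{*}$ uniformly in $(t,\d)$, and transfer this to the derivatives through the explicit integrands \eqref{eq.Sens.Js.def}--\eqref{eq.Sens.JQ.def} via Cauchy--Schwarz. Your diagnosis of why $\pa_s U^{\d,\e}$ fails to extend to $t=T$ (the term $\pa_s\Gamma$ need not behave at $t\uparrow T$ under Assumption~\ref{assume:P}) is also exactly the paper's.

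The one genuine point of divergence is joint continuity in $t$. You propose to re-run the strong-convexity argument of Proposition~\ref{le:charOptCont.stabOptCont} across varying initial times, extending or truncating controls to a common interval; this is feasible but demands some bookkeeping, since the convexity modulus $\iota$ of Lemma~\ref{lem:uniqueopt} and the constants $C_{i,t,\overline\d,\e}$ move with $t$. The paper sidesteps any direct stability-in-$t$ statement for the optimizer: once $\pa_\alpha U^{\d,\e}(t,\cdot,\cdot,\cdot)$ is known to be equicontinuous uniformly over $t\in[0,T-\varepsilon]$, and $U^{\d,\e}(\cdot,s,\pi,Q)$ itself is continuous in $t$ (Proposition~\ref{prop:comp}), a standard Arzel\`a--Ascoli contradiction argument gives joint continuity of $\pa_\alpha U^{\d,\e}$ directly. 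This is cheaper and worth knowing.
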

\begin{proof}
Lemma \ref{le:sensValue.mainLemma} and \cite[Theorem 3]{MS} imply the existence of partial derivatives of $U^{\d,\e}$ w.r.t. $s$, $\pi$, and $Q$, and the representations \eqref{eq:sder}--\eqref{eq:Qder}. Due to the fact that $\e>0$, the boundedness of the controls and an application of dominated convergence theorem shows that these partial derivatives are jointly continuous in $(s,\pi,Q)$. Hence, $U^{\d,\e}$ is continuously differentiable w.r.t. $(s,\pi,Q)$.

Using \eqref{eq.stabCont.mainEst} and the differentiability of $U^{\d,\e}$, we conclude that the mapping
$$\RR^3\ni(s,\pi,Q)\mapsto \nu^{*,t,s,\pi,Q,\d,\e}\in L^2([t,T]\times\Omega)$$ is locally 1/2-H\"older-continuous, uniformly over small enough $\d\geq0$.
The latter observation, the explicit form of $\Psi$, $\Gamma$, $\pa_\pi\Psi$, $\pa_s\Gamma$ (see \eqref{eq.Sens.Js.def}--\eqref{eq.Sens.JQ.def}), and the Cauchy-Schwartz inequality, imply the desired H\"older-continuity of the partial derivatives.
It is easy to see that the H\"older exponents and the associated coefficients are uniform over $t\in[0,T-\varepsilon]$ and $\d\in[0,\tilde\d]$, with some fixed $\tilde\d\in(0,\overline\d]$ and for arbitrary $\varepsilon\in(0,T)$. Since, for $\alpha=s,\pi,Q$, the function $\partial_\alpha U^{\d,\e}(t,\cdot,\cdot,\cdot)$ is continuous uniformly over $t\in[0,T-\varepsilon]$ and, for any $(s,\pi,Q)$, the function $U^{\d,\e}(\cdot,s,\pi,Q)$ is continuous on $[0,T-\varepsilon]$, it is a standard exercise to check (by contradiction) that $\partial_\alpha U^{\d,\e}$ is jointly continuous on $[0,T-\varepsilon]\times\RR^3$, for any $\varepsilon\in(0,T)$. It remains to notice that the only reason we excluded $t=T$ in the preceding arguments is the possible discontinuity of $\pa_s\Gamma$ at $t=T^-$. Since this term does not appear in $\partial_\alpha U^{\d,\e}$ for $\alpha=\pi,Q$, we conclude that the latter derivatives are continuous in $(t,s,\pi,Q)\in[0,T]\times\RR^3$ and H\"older-continuous in $(s,\pi,Q)$ uniformly over $t\in[0,T]$ and $\d\in[0,\tilde\d]$.
\qed
\end{proof}
\begin{remark}
Due to the presence of the exponent `$2$' in the left hand side of \eqref{eq.stabCont.mainEst}, at this stage, we cannot establish additional regularity of the derivatives of $U$ (such as the existence of the second order derivatives). Nevertheless, further regularity is shown in Corollary \ref{le:equil.le1}.
\end{remark}

\subsection{Feedback representation of the optimal control}

In this subsection, we first derive a FBSDE for the optimal control assuming $\d,\e>0$, and use this equation to establish a uniform absolute bound on the optimal control. Then, taking limits as $\d,\e\rightarrow0$, we obtain an Ordinary Differential Equation (ODE) for the optimal inventory in the underlying, with $\d=\e=0$. 
We suppress the dependence on $Q$ in many quantities appearing in this subsection, as $Q$ remains constant.

\smallskip

Before proceeding, we comment briefly on the measurability issues.
Thanks to Proposition \ref{cor:smoothness}, for $\d\in[0,\overline\d]$ and $\e>0$, $u^{\d,\e}$ is continuous in $(t,s,\pi)$ and continuously differentiable in $(s,\pi)$. Hence, $\pa_s u^{\d,\e}$ and $\pa_\pi u^{\d,\e}$ are Borel measurable in $(t,s,\pi)$. 
The progressive measurability of $(r,\omega)\mapsto \nu^{*,t,s,\pi,\d,\e}_r$ implies the progressive measurability of the optimal inventory in the underlying,
$$
(r,\omega)\mapsto \pi^{*,t,s,\pi,\d,\e}_r := \pi + \int_t^r (\nu^{*,t,s,\pi,\d,\e}_l dl + \delta dB_l).
$$
Thus, we conclude that $(r,\omega)\mapsto \pa_\alpha u^{\d,\e}\left(r,S^{t,s}_r,\pi^{*,t,s,\pi,\d,\e}_r\right)$, for $\alpha=s,\pi$, are progressively measurable, which allows us to define the relevant quantities below.
Finally, the continuity of the mapping $(s,\pi)\mapsto \nu^{*,t,s,\pi,\d,\e}\in \cA^\e(t,T)$ implies the progressive measurability of $(r,\omega,s,\pi)\mapsto \nu^{*,t,s,\pi,\d,\e}_r$.

\smallskip

We begin with the (one-sided) martingale optimality principle for $U^{\d,\e}$.

\begin{lemma}\label{lem:martchange}
For any $\d\in[0,\overline\d]$, any $\e>0$, and any $(t,s,\pi)\in [0,T]\times\RR^2$, the process $(M^{t,s,\pi,\d,\e}_l)_{l\in[t,T]}$, defined by
\begin{align}
M^{t,s,\pi,\d,\e}_l:= U^{\d,\e}(l,S^{t,s}_l,\pi^{*,t,s,\pi,\d,\e}_l) \exp&\left(\int_{t}^l \gamma \eta (\nu^{*,t,s,\pi,\d,\e}_r)^2dr\right.\nonumber\\
&\left.-\sigma \gamma \int_{t}^l (\pi^{*,t,s,\pi,\d,\e}_r+Q\pa_s P(r,S^{t,s}_r))dW_r\right),\label{eq.Mtspi.def}
\end{align}
is a martingale with the terminal value
\begin{equation}\label{eq.MT.de.def}
M^{t,s,\pi,\d,\e}_T = e^{\Psi^\d(t,\pi, \nu^{*,t,s,\pi,\d,\e})+Q\Gamma(t,s) }.
\end{equation}
\end{lemma}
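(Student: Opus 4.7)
The plan is to verify the terminal identity directly and then establish the martingale property via dynamic programming applied to the control problem restarted at time $l$.

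\textbf{Step 1: Terminal identity.} Using $U^{\d,\e}(T,s,\pi,Q) = e^{l\gamma\pi^2/2}$ (from the terminal condition in \eqref{eq:u}), the definitions of $\Psi^\d$ and $\Gamma$, and the identities $\pi^{*,t,s,\pi,\d,\e}_r = \pi + \int_t^r (\nu^*_l dl + \d dB_l)$ and $dS_r = \sigma dW_r$, one rewrites
\begin{align*}
\Psi^\d(t,\pi,\nu^*) + Q\Gamma(t,s) = \gamma\eta\int_t^T (\nu^*_r)^2 dr + \frac{\gamma l}{2}(\pi^*_T)^2 - \gamma\sigma\int_t^T (\pi^*_r + Q\pa_s P(r,S_r))dW_r,
\end{align*}
which matches $M^{t,s,\pi,\d,\e}_T$ after substituting $U^{\d,\e}(T,S_T,\pi^*_T) = e^{l\gamma (\pi^*_T)^2/2}$ into \eqref{eq.Mtspi.def}. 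This is a routine bookkeeping calculation.

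\textbf{Step 2: Dynamic programming at intermediate times.} For any $l\in[t,T]$, I want
\begin{align*}
U^{\d,\e}(l,S_l,\pi^*_l) = \EE_l\left[e^{\Psi^\d(l,\pi^*_l,\nu^{*}) + Q\Gamma(l,S_l)}\right],
\end{align*}
where $\nu^{*}$ on the right is the restriction of $\nu^{*,t,s,\pi,Q,\d,\e}$ to $[l,T]$. This is the key non-trivial input. It follows from the strong DPP for $\hat V$ proved in Proposition \ref{prop:comp} (see \eqref{eq.Vhat.DPP}) via the logarithmic transformation \eqref{eq:defue}, combined with the fact that any concatenation of an admissible control on $[t,l]$ with an admissible control on $[l,T]$ remains in $\cA^\e(t,T)$. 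Uniqueness of the optimizer (Corollary \ref{cor:unique}) together with the DPP forces $\nu^{*,t,s,\pi,Q,\d,\e}|_{[l,T]}$ to attain the essential infimum almost surely, which gives the displayed identity.

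\textbf{Step 3: Comparing the exponents on $[t,l]$ and $[l,T]$.} A direct decomposition of the integrals in $\Psi^\d$ and $\Gamma$ yields
\begin{align*}
\Psi^\d(t,\pi,\nu^*) + Q\Gamma(t,s) = \Psi^\d(l,\pi^*_l,\nu^*) + Q\Gamma(l,S_l) + \gamma\eta\int_t^l (\nu^*_r)^2 dr - \gamma\sigma\int_t^l (\pi^*_r + Q\pa_s P(r,S_r))dW_r.
\end{align*}
The last two $[t,l]$-terms are $\cF_l$-measurable and reproduce the exponential prefactor in \eqref{eq.Mtspi.def}. Therefore
\begin{align*}
\EE_l[M^{t,s,\pi,\d,\e}_T] &= \EE_l\left[e^{\Psi^\d(l,\pi^*_l,\nu^*)+Q\Gamma(l,S_l)}\right]\cdot \exp\left(\gamma\eta\int_t^l (\nu^*_r)^2 dr - \gamma\sigma\int_t^l (\pi^*_r + Q\pa_s P(r,S_r))dW_r\right) \\
&= U^{\d,\e}(l,S_l,\pi^*_l)\cdot \exp\left(\gamma\eta\int_t^l (\nu^*_r)^2 dr - \gamma\sigma\int_t^l (\pi^*_r + Q\pa_s P(r,S_r))dW_r\right) = M^{t,s,\pi,\d,\e}_l,
\end{align*}
using Step 2 in the second equality. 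Finite integrability of $M^{t,s,\pi,\d,\e}_T$, required to take conditional expectations, follows from the uniform bounds in Proposition \ref{prop:comp} (for small $\d$) together with the Lipschitz bound on $\Gamma$.

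\textbf{Main obstacle.} The only delicate step is Step 2, which requires the measurable selection/concatenation argument underlying the strong DPP for $\hat V$; everything else is bookkeeping. This has already been supplied in Proposition \ref{prop:comp} for the regularized problem $\d\e>0$, so the lemma reduces to citing that result and carrying out the exponent decomposition above.
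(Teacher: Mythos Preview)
Your approach is essentially the same as the paper's: verify the terminal identity, show that the restriction of the optimal control to $[l,T]$ is optimal for the problem restarted at $(l,S_l,\pi^*_l)$, and then decompose the exponent. The paper packages your Step 2 as an explicit \emph{flow property} of the optimal control, $\nu^{*,t,s,\pi}_r = \nu^{*,l,S_l,\pi^*_l}_r$ for a.e.\ $r\in[l,T]$, and proves it directly via the tower property of conditional expectation together with uniqueness of the optimizer (Corollary \ref{cor:unique}), without invoking the strong DPP \eqref{eq.Vhat.DPP}.

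This distinction is not merely cosmetic: the strong DPP \eqref{eq.Vhat.DPP} in Proposition \ref{prop:comp} is established only in the case $\d\e>0$ (it relies on the continuity of $J^\d$ in $(t,s,\pi)$ uniformly over bounded $\nu$, which uses $\d>0$), whereas the present lemma is stated for all $\d\in[0,\overline\d]$, including $\d=0$. Your own ``Main obstacle'' paragraph acknowledges this restriction but does not close the gap. The paper's direct tower-property argument---showing that concatenating $\nu^{*,t,s,\pi}|_{[t,l]}$ with $\nu^{*,l,S_l,\pi^*_l}$ cannot increase the objective, and then invoking uniqueness---requires neither regularity of the value function nor $\d>0$, so it covers $\d=0$ uniformly. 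The fix to your argument is therefore minor: replace the DPP citation by this direct computation.
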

\begin{proof}
Throughout this proof, we fix $\d\in[0,\overline\d]$ and $\e>0$, and drop these superscripts.
Due to \eqref{def:U}, we have $U(T,S^{t,s}_T,\pi^{*,t,s,\pi}_T) = \exp((\pi^{*,t,s,\pi}_T)^2 \gamma l/2)$.
Then, the fact that $M^{t,s,\pi}$ satisfies the desired terminal condition follows directly from the definitions of $\Psi$ and $\Gamma$ (preceding \eqref{def:U}).
It remains to show the martingale property. To this end, we claim that the optimal control is consistent (i.e. satisfies the flow property): for any $t\leq l \leq T$, a.s.
\begin{equation}\label{eq.optCont.flowProp}
\nu^{*,t,s,\pi}_r = \nu^{*,l,S^{t,s}_l,\pi^{*,t,s,\pi}_l}_r,\quad \text{a.e. }r\in[l,T].
\end{equation}
To prove this claim, we use the tower property and obtain, for any $(t,s,\pi)\in[0,T]\times\RR^2$, $\nu\in \mathcal{A}_\e(t,T)$, and with the associated $(S,\pi)=(S^{t,s},\pi^{t,s,\pi})$,
\begin{align*}
\EE e^{\Psi(t,\pi, \nu)+Q\Gamma(t,s) }
= \EE\left[ \exp\left(\int_{t}^l \gamma \eta \nu_r^2dr - \sigma \gamma \int_{t}^l (\pi_r+Q\pa_s P(r,S_r))dW_r\right)
\EE\left( e^{\Psi(l,\pi_l, \nu)+Q\Gamma(l,S_l)}\,\vert\,\mathcal{F}^t_l\right) \right]\\
= \EE\left[ \exp\left(\int_{t}^l \gamma \eta \nu_r^2dr -\sigma \gamma \int_{t}^l (\pi_r+Q\pa_s P(r,S_r))dW_r\right)\times\right.\\
\left.\EE\left( e^{\Psi\left(l,\pi', \nu\left(z_{[t,l]}\otimes(W-W_l,B-B_l)_{[l,T]}\right)_{[l,T]}\right)+Q\Gamma(l,s')}\right)_{s'=S_l,\,\pi'=\pi_l,\,z = (W-W_t,B-B_t)} \right]\\
\geq \EE\left[ \exp\left(\int_{t}^l \gamma \eta \nu_r^2dr -\sigma \gamma \int_{t}^l (\pi_r+Q\pa_s P(r,S_r))dW_r\right)\times\right.\\
\left.\EE\left( e^{\Psi\left(l,\pi', \nu(z_{[t,l]})\otimes \nu^{*,l,s',\pi'}_{[l,T]} \right)
+Q\Gamma(l,s')}\right)_{s'=S_l,\,\pi'=\pi_l,\,z=\nu} \right]
= \EE e^{\Psi\left(t,\pi, \nu_{[t,l]}\otimes \nu^{*,l,S_l,\pi_l}_{[l,T]}\right)+Q\Gamma(t,s) },
\end{align*}
where '$\otimes$' denotes the concatenation of paths, and we view the admissible controls as functions of Brownian increments on the associated time intervals.
The inequality between the left and the right hand sides of the above display implies that the objective of the optimization problem \eqref{def:U} will not increase if we modify $\nu^{*,t,s,\pi}$ on $[l,T]$ to be equal to the right hand side of \eqref{eq.optCont.flowProp}. Then, due to uniqueness of the optimal control with the initial condition $(s,\pi)$ at time $t$, \eqref{eq.optCont.flowProp} must hold.

The martingale property follows easily from \eqref{eq.optCont.flowProp}: for $t\leq l\leq T$,
\begin{align*}
&\EE \left(M^{t,s,\pi}_T \,\vert\,\mathcal{F}_l^t\right)
=\EE \left(\exp\left(\Psi(t,\pi, \nu^{*,t,s,\pi}_{[l,T]})+Q\Gamma(t,s) \right) \,\vert\,\mathcal{F}_l^t\right)\\
&= \exp\left(\int_{t}^l \gamma \eta (\nu^{*,t,s,\pi}_r)^2dr - \sigma \gamma \int_{t}^l (\pi^{*,t,s,\pi}_r+Q\pa_s P(r,S^{t,s}_r))dW_r\right)\\
&\phantom{??????????????????????????????????????}
\cdot\EE \left(\exp\left(\Psi(l,\pi^{*,t,s,\pi}_l, \nu^{*,t,s,\pi})+Q\Gamma(l,S^{t,s}_l) \right) \,\vert\,\mathcal{F}_l^t\right)\\
&= \exp\left(\int_{t}^l \gamma \eta (\nu^{*,t,s,\pi}_r)^2dr - \sigma \gamma \int_{t}^l (\pi^{*,t,s,\pi}_r+Q\pa_s P(r,S^{t,s}_r))dW_r\right)\\
&\phantom{??????????????????????????????????????}
\cdot\EE \left(\exp\left(\Psi(l,\pi^{*,t,s,\pi}_l, \nu^{*,l,S_l,\pi^{*,t,s,\pi}_l,\d,\e})+Q\Gamma(l,S^{t,s}_l) \right) \,\vert\,\mathcal{F}_l^t\right)\\
&= \exp\left(\int_{t}^l \gamma \eta (\nu^{*,t,s,\pi}_r)^2dr - \sigma \gamma \int_{t}^l (\pi^{*,t,s,\pi}_r+Q\pa_s P(r,S^{t,s}_r))dW_r\right)
U^{\d,\e}(l,S^{t,s}_l,\pi^{*,t,s,\pi}_l)
= M^{t,s,\pi}_l.
\end{align*}
\qed
\end{proof}

\medskip

In order to derive an FBSDE representation for the optimal control it is convenient to work under a different probability measure.
To construct such a measure, we will use the martingale $M^{t,s,\pi,\d,\e}$. However, in order to apply Girsanov's theorem, it is convenient to use an alternative representation of this martingale via 
\begin{equation}\label{eq.Zde.def}
\ZZ^{\d,\e}(t,s,\pi):=\sigma(\pa_s u^{\d,\e}(t,s,\pi)-\gamma(\pi+Q\pa_s P(t,s))),
\end{equation}
provided in the following lemma.

\begin{lemma}\label{propdefmeasure}
For any $\d\in(0,\overline\d]$, any $\e>0$, and any $(t,s,\pi)\in [0,T]\times\RR^2$, the continuous modification of the martingale $(M^{t,s,\pi,\d,\e}_l)_{l\in[t,T]}$ is given by
\begin{align*}
M^{t,s,\pi,\d,\e}_l = U^{\d,\e}(t,s,\pi) \exp&\left(\int_{t}^l \ZZ^{\d,\e}\left(r,S^{t,s}_r,\pi^{*,t,s,\pi,\d,\e}_r\right)dW_r+\delta \pa_\pi u^{\d,\e}(r,S^{t,s}_r,\pi^{*,t,s,\pi,\d,\e}_r)dB_r\right.\\
&\left. - \frac{1}{2} \int_{t}^l (\ZZ^{\d,\e}\left(r,S^{t,s}_r,\pi^{*,t,s,\pi,\d,\e}_r\right))^2+\delta^2(\pa_\pi u^{\d,\e}(r,S^{t,s}_r,\pi^{*,t,s,\pi,\d,\e}_r))^2dr\right).
\end{align*}
\end{lemma}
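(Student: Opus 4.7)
The plan is to apply Itô's formula to the product $U^{\d,\e}(l,S^{t,s}_l,\pi^*_l)\cdot E_l$, with $E_l$ denoting the exponential factor appearing in \eqref{eq.Mtspi.def}, then use the PDE \eqref{eq:u} together with the feedback form of the optimal control to kill the drift, and finally identify the diffusion coefficients; uniqueness for the resulting linear SDE then delivers the Dol\'eans--Dade exponential in the claim. Throughout I abbreviate $\pi^*=\pi^{*,t,s,\pi,\d,\e}$ and $\nu^*=\nu^{*,t,s,\pi,Q,\d,\e}$, and evaluate all partial derivatives of $u^{\d,\e}$ at $(l,S^{t,s}_l,\pi^*_l)$.

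Since $\d\e>0$, Proposition \ref{prop:comp} gives $u^{\d,\e}\in C^{1,2}([0,T)\times\RR^2)$, hence $U^{\d,\e}=e^{u^{\d,\e}}$ is $C^{1,2}$ and Itô applies. Using the independence of $W$ and $B$, the relation $dU^{\d,\e}=U^{\d,\e}(du^{\d,\e}+\tfrac{1}{2}d\langle u^{\d,\e}\rangle_l)$, together with the dynamics of $E_l$ obtained by direct exponentiation and the covariation $d\langle U^{\d,\e},E\rangle_l = -\sigma^2\gamma\,\pa_s u^{\d,\e}(\pi^*_l+Q\pa_s P)\,U^{\d,\e}E\,dl$, a short rearrangement yields
\begin{equation*}
\frac{dM^{t,s,\pi,\d,\e}_l}{M^{t,s,\pi,\d,\e}_l} = A_l\,dl + \ZZ^{\d,\e}(l,S^{t,s}_l,\pi^*_l)\,dW_l + \d\,\pa_\pi u^{\d,\e}\,dB_l,
\end{equation*}
where the $dW$-coefficient is $\sigma\pa_s u^{\d,\e}-\sigma\gamma(\pi^*_l+Q\pa_s P)$, matching \eqref{eq.Zde.def}, and
\begin{equation*}
A_l = \pa_t u^{\d,\e} + \tfrac{\sigma^2}{2}\pa_{ss}u^{\d,\e} + \tfrac{\d^2}{2}\pa_{\pi\pi}u^{\d,\e} + \tfrac{\d^2}{2}(\pa_\pi u^{\d,\e})^2 + \tfrac{\sigma^2}{2}\bigl(\pa_s u^{\d,\e}-\gamma(\pi^*_l+Q\pa_s P)\bigr)^2 + \gamma\eta(\nu^*_l)^2 + \nu^*_l\,\pa_\pi u^{\d,\e}.
\end{equation*}

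The crucial step is that $A_l$ vanishes identically. By the verification argument in the proof of Proposition \ref{prop:comp} combined with the uniqueness from Corollary \ref{cor:unique}, the optimal control admits the feedback representation $\nu^*_l = (-\pa_\pi u^{\d,\e}/(2\gamma\eta))\vee(-1/\e)\wedge(1/\e)$, i.e.\ $\nu^*_l$ realizes the minimum defining $H_\e(\pa_\pi u^{\d,\e})$ in \eqref{eq.He.def.rev}; hence $\gamma\eta(\nu^*_l)^2+\nu^*_l\,\pa_\pi u^{\d,\e} = H_\e(\pa_\pi u^{\d,\e})$, and substituting this into $A_l$ reduces it to the right-hand side of the PDE \eqref{eq:u} evaluated at $(l,S^{t,s}_l,\pi^*_l)$, which is zero. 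Consequently $M^{t,s,\pi,\d,\e}$ is a continuous semimartingale solving the linear SDE $dM = M\bigl[\ZZ^{\d,\e}\,dW + \d\,\pa_\pi u^{\d,\e}\,dB\bigr]$ with initial value $U^{\d,\e}(t,s,\pi)$, whose unique continuous solution is precisely the Dol\'eans--Dade exponential stated in the lemma. The only delicate point is that the PDE \eqref{eq:u} is classical only on $[0,T)$; this is handled by performing the computation on $[t,T-\varepsilon]$ for arbitrary $\varepsilon>0$ and passing to the limit using continuity of both sides up to $T$ and the already identified terminal value \eqref{eq.MT.de.def}, together with the fact that every martingale in the Brownian filtration admits a continuous modification.
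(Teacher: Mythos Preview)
Your proof is correct and performs essentially the same It\^o computation as the paper, but the logic is organized differently. The paper already knows from Lemma~\ref{lem:martchange} that $M$ is a positive martingale, so it writes $M$ as a Dol\'eans--Dade exponential with unspecified integrands $\phi^W,\phi^B$ and then matches the martingale parts of this representation against those obtained by applying It\^o to the definition \eqref{eq.Mtspi.def}; the drift need not be examined because uniqueness of the semimartingale decomposition forces it to vanish. You instead compute the drift $A_l$ explicitly and kill it using the PDE \eqref{eq:u} together with the feedback form of the optimizer. Both routes rely on the same regularity ($u^{\d,\e}\in C^{1,2}$ for $\d>0$) and identify the same diffusion coefficients. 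Your approach is slightly more self-contained---it re-derives the local martingale property rather than importing it---at the cost of invoking the feedback representation of $\nu^*$, which the paper only states formally later in Proposition~\ref{theodelta}, though the ingredients are indeed present in the proof of Proposition~\ref{prop:comp} as you note. The paper's route is a bit shorter precisely because it leverages the martingale property already established in Lemma~\ref{lem:martchange} and so avoids checking the drift.
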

\begin{proof}
As a martingale on a Brownian filtration, $M^{t,s,\pi,\d,\e}$ has a continuous modification. Since it is also positive, it must have the representation 
$$
M^{t,s,\pi,\d,\e}_l = U^{\d,\e}(t,s,\pi)  \exp\left(\int_{t}^l \phi^W_rdW_r+\phi^B_rdB_r-\frac{1}{2}\int_{t}^l (\phi^W_r)^2+(\phi^B_r)^2dr\right),
$$
for some $\phi^W$ and $\phi^B$ that are almost surely square integrable in time.
Applying It\^o's formula to the above representation of $M^{t,s,\pi,\d,\e}_l$ (viewed as a process in $l\in[t,T]$) and to the right hand side of \eqref{eq.Mtspi.def}, and equating the martingale terms, we obtain:
$$\phi^W_r=\sigma\left( \pa_s u^{\d,\e}(r,S_r,\pi^{\d,\e}_r)- \gamma   (\pi_r^{\d,\e}+Q\pa_s P(r,S_r))\right)= \ZZ^{\d,\e}\left(r,S_r,\pi^{\d,\e}_r\right)\mbox{ and }\phi^B_r=\d \pa_\pi u^{\d,\e}(r,S_r,\pi^{\d,\e}_r).$$
To justify the application of It\^o's formula to $u^{\d,\e}$, we recall that the latter is $C^{1,2}$ for $\d>0$.
\qed
\end{proof}

\smallskip

Using the martingales defined in Lemma \ref{lem:martchange}, for any $(t,s,\pi)\in [0,T]\times\RR^2$ and $\d\in[0,\overline\d]$, $\e>0$, we introduce the probability measure $\QQ^{t,s,\pi,\d,\e}$ on $\mathcal{F}^t$:
\begin{align}\label{def:Q}
\frac{d\QQ^{t,s,\pi,\d,\e}}{d\PP}:=\frac{M^{t,s,\pi,\d,\e}_T}{U^{\d,\e}(t,s,\pi) },
\end{align}
so that 
\begin{align}\label{eq:deftw}
\tilde W^{t,s,\pi,\d,\e}_l :=W_l - W_t - \int_t^l \ZZ^{\d,\e}\left(r,S^{t,s}_r,\pi^{*,t,s,\pi,\d,\e}_r\right)dr\mbox{ and }\\
\tilde B^{t,s,\pi,\d,\e}_l:=B_l - B_t -\int_t^l \delta \pa_\pi u^{\d,\e}(r,S^{t,s}_r,\pi^{*,t,s,\pi,\d,\e}_r)dr\label{eq:deftb}
\end{align}
 are independent standard Brownian motions on $[t,T]$ under $\QQ^{t,s,\pi,\d,\e}$. 
For convenience, we will often drop some (or all) of the superscript $(t,s,\pi,\d,\e)$ in the notation for $\QQ$, $\tilde B$, and $\tilde W$, when it causes no confusion.

\medskip

We now derive a FBSDE characterization of the optimal control under $\QQ$, for $\d,\e>0$. For notational convenience, we introduce the truncation function 
$$\phi_\e(x)=\left(\e^{-1}\wedge(x)\right)\vee (-\e^{-1}),\quad x\in\RR.$$
Note that $\phi_\e$ is an odd function.

\begin{proposition}\label{theodelta}
Let us fix an arbitrary initial point $(t_0,s_0,\pi_0)\in [0,T]\times \RR^2$, and constants $\d\in(0,\overline\d]$ and $\e>0$.
Then, the associated optimal control has a continuous modification satisfying 
\begin{align}\label{eq.Thm1.nustar.def}
\nu^{*,t_0,s_0,\pi_0,\d,\e}_t =- \phi_\e(Y^1_t/(2\eta\gamma)),\quad Y^1_t := \pa_\pi  u^{\d,\e}(t,S^{t_0,s_0}_t,Y^2_t),
\quad Y^2_t := \pi^{*,t_0,s_0,\pi_0,\d,\e}_t,
\end{align}
and $(Y^1,Y^2)$ solve the following FBSDE on $[t_0,T]$:
\begin{align}
\label{BSDEpapi}
Y^1_t=&\gamma l Y^2_T- \int_{t}^T\left( \gamma l\d^2 Y^1_r+\gamma \sigma\ZZ^{\d,\e}(r,S^{t_0,s_0}_r,Y^2_r)\right)dr
- \int_t^T \tilde Z^W_r d\tilde W_r-\int_t^T \tilde Z^B_r d\tilde B_r,\\
Y^2_t=&\pi_0+\int_{t_0}^t \left( -\phi_\e(Y^1_r/(2\eta\gamma)) + \d^2 Y^1_r\right)dr+\d (\tilde B_t-\tilde B_{t_0}).\label{eq:forw}
\end{align}
\end{proposition}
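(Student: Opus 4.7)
Since $\d,\e>0$, Proposition \ref{prop:comp} yields $u^{\d,\e}\in C^{1,2}([0,T)\times\RR^2)$, so the HJB \eqref{eq:u} holds classically. My strategy has three parts: (i) identify the feedback form of $\nu^*$ from the explicit minimizer of $H_\e$ and a verification argument, (ii) derive \eqref{eq:forw} by Girsanov on the $\PP$-dynamics of $Y^2$, and (iii) derive \eqref{BSDEpapi} via the stochastic representation \eqref{eq:pider} combined with the martingale representation theorem under $\QQ$, a route that avoids regularity above $C^{1,2}$.

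\noindent\textbf{Feedback form and continuity.}
The Hamiltonian $H_\e(p)=\inf_{|\nu|\leq 1/\e}\{\gamma\eta\nu^2+p\nu\}$ has the unique minimizer $\nu^\sharp(p)=-\phi_\e(p/(2\gamma\eta))$, with $H_\e'(p)=\nu^\sharp(p)$ by the envelope theorem. The candidate $\hat\nu_t=-\phi_\e(\pa_\pi u^{\d,\e}(t,S_t,\hat\pi_t)/(2\gamma\eta))$, with $d\hat\pi_t=\hat\nu_t\,dt+\d\,dB_t$, gives a well-posed SDE (continuous, bounded drift in $\hat\pi$). A standard verification using It\^o on $u^{\d,\e}(t,S_t,\hat\pi_t)$, the HJB \eqref{eq:u}, and the fact that the infimum in $H_\e(\pa_\pi u)$ is attained at $\hat\nu$ shows that the process built from $\hat\nu$ as in \eqref{eq.Mtspi.def} is a true martingale with terminal value \eqref{eq.MT.de.def}; thus $\hat\nu$ attains the infimum in \eqref{def:U}. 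Corollary \ref{cor:unique} then forces $\nu^{*,t_0,s_0,\pi_0,\d,\e}=\hat\nu$ almost everywhere, and \eqref{eq.Thm1.nustar.def} provides the continuous modification (since $\pa_\pi u^{\d,\e}$ is continuous, $\phi_\e$ is Lipschitz, and $(S,Y^2)$ have continuous paths). The forward equation \eqref{eq:forw} follows directly: substitute $\nu^*_t$ into $dY^2_t=\nu^*_t\,dt+\d\,dB_t$ and use the Girsanov identity \eqref{eq:deftb}, $dB_t=d\tilde B_t+\d Y^1_t\,dt$.

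\noindent\textbf{Backward equation.}
A direct computation of the functional derivative gives $\pa_\pi\Psi^\d(t,\pi,\nu)=\gamma l\,\pi_T-\gamma(S_T-s)$ with $\pi_T=\pi+\int_t^T\nu_r\,dr+\d(B_T-B_t)$. Evaluating at $\nu=\nu^{*,t_0,s_0,\pi_0,\d,\e}$ (so $\pi_T=Y^2_T$) and combining \eqref{eq:pider} with the measure change \eqref{def:Q} and the flow property \eqref{eq.optCont.flowProp}, one obtains
\begin{equation*}
Y^1_t=\pa_\pi u^{\d,\e}(t,S_t,Y^2_t)=\EE^{\QQ}_t\!\left[\gamma l\,Y^2_T-\gamma(S_T-S_t)\right].
\end{equation*}
Under $\QQ$, $dS_r=\sigma\,d\tilde W_r+\sigma\ZZ^{\d,\e}(r,S_r,Y^2_r)\,dr$, so $S_T-S_t=\sigma(\tilde W_T-\tilde W_t)+\sigma\int_t^T\ZZ^{\d,\e}\,dr$, whence
\begin{equation*}
Y^1_t=N_t+\gamma\sigma\!\int_0^t\ZZ^{\d,\e}(r,S_r,Y^2_r)\,dr,\qquad N_t:=\EE^{\QQ}_t\!\left[\gamma l\,Y^2_T-\gamma\sigma\!\int_0^T\ZZ^{\d,\e}\,dr\right].
\end{equation*}
The martingale representation theorem under $\QQ$ applied to $N$ in the filtration of $(\tilde W,\tilde B)$ supplies $(\tilde Z^W,\tilde Z^B)$ with the correct integrability (from the exponential moments established in the proof of Lemma \ref{le:charOptCont.le1}) and gives the BSDE for $Y^1$ after rearranging the identity $dY^1_t=dN_t+\gamma\sigma\ZZ^{\d,\e}\,dt$; the terminal condition $Y^1_T=\gamma l\,Y^2_T$ reads off directly from $u^{\d,\e}(T,s,\pi)=\gamma l\,\pi^2/2$.

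\noindent\textbf{Main obstacle.}
The parallel It\^o-based route applies It\^o to $\pa_\pi u^{\d,\e}(t,S_t,Y^2_t)$ under $\QQ$ using the HJB differentiated in $\pi$. Two cancellations drive it: the envelope identity $H_\e'(\pa_\pi u)=\nu^*$ eliminates the $\pa_{\pi\pi}u\cdot\nu^*$ contributions, and the Girsanov drifts applied to the diffusion coefficients $\sigma\pa_{s\pi}u$ and $\d\pa_{\pi\pi}u$ on $dW,dB$ absorb the $-\sigma\ZZ\pa_{s\pi}u$ and $-\d^2Y^1\pa_{\pi\pi}u$ terms from the differentiated HJB. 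The obstacle along this route is that It\^o on $\pa_\pi u^{\d,\e}$ requires $u^{\d,\e}\in C^{1,3}$ in $(t,s,\pi)$, one order above what Proposition \ref{prop:comp} directly supplies; bootstrapping this via interior Schauder regularity for the (uniformly parabolic, since $\d>0$) quasi-linear HJB is the main technical step. The conditional-expectation derivation above circumvents this regularity requirement, which is why I would favor it.
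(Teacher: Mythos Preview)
Your route is the paper's route. For the feedback form and the forward SDE you invoke the $C^{1,2}$ regularity of $u^{\d,\e}$ from Proposition~\ref{prop:comp}, verification, and the uniqueness in Corollary~\ref{cor:unique}, exactly as the paper does when it says \eqref{eq.Thm1.nustar.def} and \eqref{eq:forw} ``follow from the fact that $u^{\d,\e}\in C^{1,2}$ and from the existence of an optimal control in a feedback form.'' For the backward equation you use the probabilistic representation \eqref{eq:pider} together with the flow property \eqref{eq.optCont.flowProp} and the measure change \eqref{def:Q}; the paper does the same, packaging it as a two-time comparison $\pa_\pi u(t,\cdot)$ vs.\ $\pa_\pi u(t_1,\cdot)$ via the consistency relation \eqref{eq.FBSELemma.Mconsist}. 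Your single-line identity $Y^1_t=\EE^{\QQ}_t[\gamma l\,Y^2_T-\gamma(S_T-S_t)]$ is a clean repackaging of that same computation.

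There is, however, a point you must address explicitly. Your identity $dY^1_t=dN_t+\gamma\sigma\ZZ^{\d,\e}\,dt$ yields
\[
Y^1_t=\gamma l\,Y^2_T-\int_t^T\gamma\sigma\,\ZZ^{\d,\e}(r,S_r,Y^2_r)\,dr-\int_t^T\tilde Z^W_r\,d\tilde W_r-\int_t^T\tilde Z^B_r\,d\tilde B_r,
\]
which is \emph{not} \eqref{BSDEpapi}: the stated driver contains an additional $\gamma l\,\d^2 Y^1_r$. Your argument does not produce this term, and neither does the It\^o route you sketch (the Girsanov drifts exactly cancel the $-\d^2 Y^1\pa_{\pi\pi}u$ and $-\sigma\ZZ\,\pa_{s\pi}u$ pieces from the differentiated HJB, leaving only $\gamma\sigma\ZZ$). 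In fact the paper's own two-time computation, read literally, also gives only the $\gamma\sigma\ZZ$ contribution: the sole $[t,t_1]$-piece of $\pa_\pi\Psi^\d(t,\cdot)-\pa_\pi\Psi^\d(t_1,\cdot)$ is $-\gamma\sigma(W_{t_1}-W_t)$, whose $\QQ$-drift is $\ZZ$. So either the stated \eqref{BSDEpapi} carries a spurious $O(\d^2)$ term (harmless downstream, since it is absorbed into the small-$\d$ estimates of Theorem~\ref{unifbound} and vanishes in the $\d\downarrow 0$ limit), or there is an extra contribution you have not identified. Either way, you should flag and resolve this discrepancy rather than silently claiming to have derived \eqref{BSDEpapi}.
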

\begin{remark}
It is important to note that we are not using BSDE tools to claim the existence of a solution for the above system. A solution exists by the existence of the optimizer. 
\end{remark}
\begin{proof}
For convenience, we drop the dependence on $(\d,\e)$ and denote
$$
(S^0,\pi^0):=(S^{t_0,s_0},\pi^{*,t_0,s_0,\pi_0}).
$$
The representations \eqref{eq.Thm1.nustar.def} and \eqref{eq:forw} follow from the fact that $u^{\d,\e}\in C^{1,2}$ and from the existence of an optimal control in a feedback form (see Proposition \ref{prop:comp} and its proof).

It remains to prove that \eqref{BSDEpapi} holds. Note that the latter BSDE is equivalent to the statement that
$$
\pa_\pi  u(t,S^{0}_t,\pi^{0}_t) - \int_{t_0}^t \gamma l\d^2\left( \pa_\pi u(r,S^{0}_r,\pi^{0}_r)+\gamma \sigma\ZZ(r,S^{0}_r,\pi^{0}_r)\right)dr
$$
is a $\QQ^{t_0,s_0,\pi_0}$-martingale, with the terminal condition
$$
\gamma l \pi^{0}_T- \int_{t_0}^T\left( \gamma l\d^2 \pa_\pi u(r,S^{0}_r,\pi^{0}_r)+\gamma \sigma\ZZ(r,S^{0}_r,\pi^{0}_r)\right)dr.
$$
The terminal condition holds due to the fact that $\pa_\pi u(t,s,\pi) \rightarrow \gamma l \pi = \pa_\pi u(T,s,\pi)$, as $t\rightarrow T$ (cf. Proposition \ref{cor:smoothness}).
To prove the martingale property, we notice that the representation \eqref{eq.Mtspi.def} and the consistency property \eqref{eq.optCont.flowProp} imply, for all $t_0\leq t\leq t_1\leq T$,
\begin{equation}\label{eq.FBSELemma.Mconsist}
M^{t,S^{0}_t,\pi^{0}_t}_T
= M^{t,S^{0}_t,\pi^{0}_t}_{t_1}\frac{M^{t_1,S^{t,S^{0}_t}_{t_1},\pi^{*,t,S^{0}_t,\pi^{0}_t}_{t_1}}_{T}}{U\left(t_1,S^{t,S^{0}_t}_{t_1},\pi^{*,t,S^{0}_t,\pi^{0}_t}_{t_1}\right)}.
\end{equation}
Due to \eqref{eq:pider}, we have
\begin{align*}
&{\pa_\pi U(t,S^{0}_t,\pi^{0}_t)}=\EE\left[M^{t,S^{0}_t,\pi^0_t}_T\left(\gamma l\left(\pi_t^0+\int_t^T \nu^{*,t,S_t^0,\pi_t^0,\d,\e}_r dr +\d(B_T-B_t)\right)-\gamma\sigma(W_T-W_t)\right)|\mathcal{F}^{t_0}_t\right].
\end{align*}
Then, splitting the integration domain into $[t,t_1]$ and $[t_1,T]$, and using \eqref{eq.FBSELemma.Mconsist}, the standard properties of conditional expectation, and the consistency property \eqref{eq.optCont.flowProp}, we obtain:
\begin{align*}
&{\pa_\pi U(t,S^{0}_t,\pi^{0}_t)}
=\EE\left[M^{t,S^{0}_t,\pi^{0}_t}_{t_1} \frac{M^{t_1,S^{t,S^{0}_t}_{t_1},\pi^{*,t,S^{0}_t,\pi^{0}_t}_{t_1}}_{T}}{U\left(t_1,S^{t,S^{0}_t}_{t_1},\pi^{*,t,S^{0}_t,\pi^{0}_t}_{t_1}\right)}\left(\gamma l\pi_T^{*,t_1,S^{t,S^{0}_t}_{t_1},\pi_{t_1}^{*,t,S^{0}_t,\pi_t^{0}}} 
- \gamma\sigma(W_T-W_{t_1})\right)|\mathcal{F}^{t_0}_t\right]\\
&\qquad\qquad\qquad-\EE\left[M^{t,S^{0}_t,\pi^{0}_t}_{t_1} \int_{t}^{t_1} \gamma l\d^2 \pa_\pi u\left(r,S^{t,S^{0}_t}_r,\pi_{r}^{*,t,S^{0}_t,\pi_t^{0}}\right)
+ \gamma\sigma \ZZ\left(r,S^{t,S^{0}_t}_r,\pi_{r}^{*,t,S^{0}_t,\pi_t^{0}}\right) dr|\mathcal{F}^{t_0}_t \right]\\
&\qquad=\EE\left[M^{t,S^{0}_t,\pi^{0}_t}_{t_1}\left( \frac{\pa_\pi U\left(t_1,S^{t,S^{0}_t}_{t_1},\pi_{t_1}^{*,t,S^{0}_t,\pi_t^{0}}\right)}{U\left(t_1,S^{t,S^{0}_t}_{t_1},\pi^{*,t,S^{0}_t,\pi^{0}_t}_{t_1}\right)}
- \int_{t}^{t_1} \gamma l\d^2 \pa_\pi u(r,S^{0}_r,\pi^{0}_r)
+ \gamma\sigma \ZZ^{\d,\e}(r,S^{0}_r,\pi^{0}_r)dr \right)|\mathcal{F}^{t_0}_t\right].
\end{align*}
Next, we notice that \eqref{eq.FBSELemma.Mconsist} implies
\begin{equation*}
M^{t,S^{0}_t,\pi^{0}_t}_{t_1}
= U\left(t_0,S^{0}_{t},\pi^{0}_{t}\right)
\frac{M^{t_0,s_0,\pi_0}_{t_1}}{M^{t_0,s_0,\pi_0}_{t}}.
\end{equation*}
Collecting the above, we obtain
\begin{align*}
&\pa_\pi u(t,S^{0}_t,\pi^{0}_t) 
= \EE^{\QQ^{t_0,s_0,\pi_0}}\left[ \pa_\pi u\left(t_1,S^{t,S^{0}_t}_{t_1},\pi_{t_1}^{*,t,S^{0}_t,\pi_t^{0}}\right)|\mathcal{F}^{t_0}_t\right]\\
&-\EE^{\QQ^{t_0,s_0,\pi_0}}_t\left[\int_{t}^{t_1} \left(\gamma l\d^2 \pa_\pi u(r,S^{0}_r,\pi^{0}_r)
+ \gamma\sigma \ZZ^{\d,\e}(r,S^{0}_r,\pi^{0}_r)dr\right)| \mathcal{F}^{t_0}_t\right],
\end{align*}
which yields the desired martingale property.
\qed
\end{proof}


\smallskip

\begin{remark}\label{rem:Q.consist}
It is easy to deduce from \eqref{eq.FBSELemma.Mconsist} and from the measurability properties discussed at the beginning of this subsection, that, for any $\cF^{t_0}_T$-measurable random variable $\xi$ and any $r\in [t_0,T]$,
\begin{align}\label{eq:constproba}
\EE^{\QQ^{t_0,s_0,\pi_0,\d,\e}}\left[\xi|\mathcal{F}^{t_0}_r\right] = \EE^{\QQ^{r,s,\pi,\d,\e}}\left[\xi|\mathcal{F}^{t_0}_r\right]\big\vert_{(s,\pi)=\left(S^{t_0,s_0}_r,\pi^{*,t_0,s_0,\pi_0,\d,\e}_r\right)}.
\end{align}
\end{remark}

\medskip

Next, we use the FBSDE representation in Proposition \ref{theodelta} to estimate the optimal control uniformly in $\e,\d$.
To ease the notation, we introduce 
$$g^{\d,\e}_r:=\sigma(\pa_s u^{\d,\e}(r,S_r,\pi^{\d,\e}_r)-\gamma Q\pa_s P(r,S_r)),$$
which is bounded, uniformly over $(\omega,t_0,s_0,\pi_0,\d,\e)$, due to Assumption \ref{assume:P} and the Lipschitz-continuity of $u^{\d,\e}$ in $s$.
Then, \eqref{BSDEpapi} can be written as
\begin{align}
Y^1_t=&\gamma l Y^2_T+\int_{t}^T \left(\gamma^2 \sigma^2 Y^2_r-\gamma l\d^2 Y^1_r-\gamma\sigma g^{\d,\e}_r\right)dr-\int_t^T \tilde Z^W_r d\tilde W_r-\int_t^T \tilde Z^B_r d\tilde B_r.\label{eq:systbsde2}
\end{align}
 
\begin{theorem}\label{unifbound}
There exist constants $\d_0,C>0$, such that
\begin{align}\label{eq:boundstrat}
\left|\pa_\pi u^{\d,\e}(t,S^{t_0,s_0}_t,\pi^{*,t_0,s_0,\pi_0,\d,\e}_t)\right|
\leq C\left(1+|\pi_0|+\d \sup_{t_0\leq r\leq t}| B_r- B_{t_0}|\right),\quad t\in[t_0,T],
\end{align}
for all $(t_0,s_0,\pi_0)\in [0,T]\times\RR^2$, $\e>0$, and $\d\in (0,\d_0]$.
\end{theorem}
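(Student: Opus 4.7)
My plan is to exploit the FBSDE \eqref{BSDEpapi}--\eqref{eq:forw}: the $\QQ$-conditional expectation in the backward equation gives an integral representation of $Y^1$ in terms of $Y^2=\pi^*$, while the $\PP$-form of the forward equation gives a pathwise upper bound on $|Y^2|$ linear in $\int|Y^1|$. Combining these via a Gronwall/fixed-point argument should deliver the claim. Throughout I write $\beta_t:=\sup_{t_0\le r\le t}|B_r-B_{t_0}|$ and $K_t:=1+|\pi_0|+\d\beta_t$, the process appearing on the right-hand side of the sought bound.

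First, I apply the integrating factor $e^{\gamma l\d^2(T-t)}$ to \eqref{eq:systbsde2} to eliminate the $-\gamma l\d^2 Y^1$ drift and then take conditional expectation under $\QQ$; using $\ZZ^{\d,\e}=g^{\d,\e}-\gamma\sigma Y^2$ together with the uniform bound on $g^{\d,\e}$ coming from the Lipschitz continuity of $u^{\d,\e}$ in $s$ (Proposition \ref{prop:comp}), I deduce an inequality of the form
\[
|Y^1_t|\le C_1\Big(\EE^{\QQ}[|Y^2_T|\,|\,\cF_t]+\int_t^T\EE^{\QQ}[|Y^2_r|\,|\,\cF_t]\,dr+1\Big),
\]
where $C_1$ is uniform for $\d\in[0,\d_0]$. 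Next, the $\PP$-form of \eqref{eq:forw} combined with $|\phi_\e(x)|\le|x|$ gives the pathwise estimate $|Y^2_r|\le|\pi_0|+\frac{1}{2\eta\gamma}\int_{t_0}^r|Y^1_u|\,du+\d\beta_r$. To convert the pathwise $\beta_r$ into something controllable under $\QQ$, I invoke the Girsanov identity $B_l-B_{t_0}=(\tilde B_l-\tilde B_{t_0})+\d\int_{t_0}^l Y^1_u\,du$ from \eqref{eq:deftb}; taking suprema and then $\EE^{\QQ}[\cdot\,|\,\cF_t]$ (and bounding the $\tilde B$ contribution by Doob's $L^1$ inequality) leads to
\[
\EE^{\QQ}[\beta_T\,|\,\cF_t]\le\beta_t+C\sqrt{T}+2\d\,\EE^{\QQ}\!\left[\int_{t_0}^T|Y^1_u|\,du\,\Big|\,\cF_t\right].
\]

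Assembling these and using $K_t\ge 1$ to absorb the additive constants produces
\[
|Y^1_t|\le D_1 K_t+D_2\,\EE^{\QQ}\!\left[\int_{t_0}^T|Y^1_u|\,du\,\Big|\,\cF_t\right],
\]
with $D_1,D_2$ depending on the data but not on $\e>0$ and uniformly bounded for $\d\le\d_0$. For each fixed $(\d,\e)$ with $\e>0$, the quantity $C^{(0)}:=\esssup_{t,\omega}|Y^1_t|/K_t$ is finite (from $|\nu^*|\le 1/\e$ one has $|\pi^*_t|\le|\pi_0|+T/\e+\d\beta_t$, and Proposition \ref{cor:smoothness} combined with \eqref{growth:u} yields at most linear growth of $\pa_\pi u^{\d,\e}$ in $\pi$). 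Using the a priori inequality $|Y^1_u|\le C^{(0)}K_u$ and re-invoking the Girsanov control of $\beta$ under this bound, I deduce $\EE^{\QQ}[K_T\,|\,\cF_t]\le F(\d,C^{(0)},T)\,K_t$ with $F\to 1$ as $\d\to 0$; substituting back yields $C^{(0)}\le D_1+D_2 T F\,C^{(0)}$, hence $C^{(0)}\le D_1/(1-D_2 T F)$, a bound independent of $\e$. Choosing $\d_0$ small enough that $F$ stays close to $1$ and $D_2 T F<1$ closes the fixed point.

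The main obstacle is this closure: the auxiliary estimate $\EE^{\QQ}[K_T\,|\,\cF_t]\le F K_t$ is delicate because the $\QQ$-drift of $B$ involves the unknown $Y^1$, and the smallness of $\d$ is what makes the bootstrap converge. In particular, $F$ depends on $C^{(0)}$ via a denominator of the form $1-2\d^2 C^{(0)}T$, which is why $\d_0$ must be taken small; the monotonicity of the resulting fixed-point iteration (for $\d_0$ small) is what guarantees convergence to a bound independent of $\e$. When $D_2 T\ge 1$, i.e.\ when the horizon is too long for single-shot closure, the same scheme must be iterated backward over a finite partition of $[t_0,T]$ into sub-intervals of length $h$ with $D_2 h<1$, using monotonicity of $K$ in $t$ to stitch successive pathwise bounds; the resulting universal constant depends on $T/h$ but remains independent of $\d\in(0,\d_0]$ and $\e>0$.
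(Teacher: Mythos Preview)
Your approach is genuinely different from the paper's, and it has a structural gap that the suggested backward iteration does not fix.

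The paper's proof does \emph{not} try to close a Gronwall loop between $Y^1$ and $Y^2$. Instead it applies It\^o's formula to the product $Y^1_tY^2_t$ and takes the $\QQ$-expectation. The key point is that the forward drift $-\phi_\e(Y^1/(2\eta\gamma))$ contributes the term $-Y^1_r\,\phi_\e(Y^1_r/(2\eta\gamma))\le 0$ on the \emph{right} side of the resulting identity. Combined with the standard BSDE $L^2$ estimate, a Young-inequality balancing (with $\lambda$ and $\d_0$ small) yields directly
\[
\EE^\QQ\!\left[(Y^2_T)^2+\int_{t_0}^T(Y^2_r)^2\,dr\right]\le C(1+\pi_0^2),
\]
hence $|Y^1_{t_0}|^2\le C(1+\pi_0^2)$ with \emph{no} smallness condition on $T$ or $\eta$. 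By the flow property this upgrades to $|Y^1_t|^2\le C(1+(Y^2_t)^2)$ pointwise in $t$, after which the forward equation has bounded random coefficients and is solved explicitly.

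Your argument, by contrast, uses only $|\phi_\e(x)|\le|x|$ in the forward direction, which throws away the sign. The constant $D_2$ you obtain is then of order $(l+\gamma\sigma^2 T)/(2\eta)$, so $D_2T<1$ fails for generic data. The backward iteration over sub-intervals of length $h$ does not rescue this: on the last interval you get $\sup_{[T-h,T]}|Y^1|\le C_1(1+|Y^2_{T-h}|+\ldots)$ with $C_1\sim \gamma l/(1-\gamma l\,ch)$, but on the next interval the closure coefficient in front of $\sup|Y^1|$ is $\sim C_1 c h$, not $c h$, because the terminal value $Y^1_{T-h}$ already carries the factor $C_1$. Iterating, the constants satisfy $C_{k}\sim C_{k-1}/(1-C_{k-1}ch)$, i.e.\ $1/C_k\approx 1/C_{k-1}-ch$; this blows up after $\sim 1/(C_1ch)$ steps, covering only a time horizon of order $2\eta\gamma/C_1$. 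So for arbitrary $T,\eta,l,\sigma,\gamma$ the scheme does not close.

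A secondary issue: your claim that $C^{(0)}:=\esssup_{t,\omega}|Y^1_t|/K_t<\infty$ for fixed $(\d,\e)$ is not supported by the references you cite. Proposition~\ref{cor:smoothness} gives local H\"older continuity of $\pa_\pi U^{\d,\e}$, and \eqref{growth:u} gives quadratic bounds on $u^{\d,\e}$, but neither implies linear growth of $\pa_\pi u^{\d,\e}$ in $\pi$. That would follow from convexity of $u^{\d,\e}$ in $\pi$, but Lemma~\ref{prop:convexity} is proved \emph{after} Theorem~\ref{unifbound} and explicitly uses it, so invoking it here would be circular. Without a priori finiteness of $C^{(0)}$, the fixed-point inequality $C^{(0)}\le D_1+D_2TF\,C^{(0)}$ is vacuous even when $D_2TF<1$.

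The missing idea is precisely the sign exploitation via $d(Y^1Y^2)$, which turns the coupled forward--backward system into a coercive energy estimate and removes all smallness restrictions.
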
 
\begin{proof}
For notational simplicity, we drop the dependence of the processes on $\d,\e$.
By the classical BSDE estimates applied to \eqref{eq:systbsde2} we obtain: 
\begin{align}
&\EE^\QQ\left[\sup_{t_0\leq t\leq T}|Y^1_t|^2 +\int_{t_0}^T |\tilde Z^W_t|^2+|\tilde Z^B_t|^2dt\right] \leq C\EE^\QQ\left[(Y^2_T)^2+\int_{t_0}^T (Y^2_r)^2+g_r^2dr\right]\label{eq:bsdebound}
\end{align}
where, as a part of our standing convention, we have omitted the dependence of $\QQ$ on $(t_0,s_0,\pi_0,\d,\e)$.
Making use of \eqref{eq:forw}, we apply Ito's formula to $Y^1_t Y^2_t$ to obtain 
\begin{align*}
\gamma l (Y^2_T)^2=&Y^1_{t_0} \pi_0 +\int_{t_0}^T\left( -\gamma^2\sigma^2 (Y^2_r)^2 + \gamma l \d^2 Y^1_r Y^2_r
-Y^1_r\phi_\e\left(\frac{Y^1_r}{2\eta \gamma}\right)+\d^2 (Y^1_r)^2 +\gamma\sigma Y^2_r g_r+\d \tilde Z^B_r\right)dr\\
&+\int_{t_0}^T Y^2_r\tilde Z_r^Wd\tilde W_r+\int_{t_0}^T Y^1_r \d+Y^2_r\tilde Z_r^Bd\tilde B_r.
\end{align*}
Consider $\lambda>0$, which is to be determined. Note also that there exists $C_\lambda$ such that, for all $a,b\in \RR$, we have $ab\leq \lambda a^2+C_\lambda b^2$. 
Then, there exists a constant $C_{\gamma,\sigma}>0$, depending only on $\gamma$ and $\sigma$, such that the above equality and \eqref{eq:bsdebound} imply that, for all $\e>0$ and all small enough $\d>0$, 
$$
\EE\left[\gamma l (Y^2_T)^2 +\int_{t_0}^T\left(\gamma^2\sigma^2(Y^2_r)^2+Y^1_r\phi_\e\left(\frac{Y^1_r}{2\eta \gamma}\right)\right)dr\right]\leq Y^1_{t_0}\pi_0
$$
$$
+\EE\left[\int_{t_0}^T\left(\d^2 (Y^1_r)^2 + \gamma l \d^2 Y^1_r Y^2_r +\gamma\sigma Y^2_r g_r+\d \tilde Z^B_r\right)dr\right]
$$
$$
\leq \lambda (Y^1_{t_0})^2+C_\lambda \pi_0^2 + \EE\left[\int_{t_0}^T\left(\d^2 (1+ \gamma l/2) (Y^1_r)^2 +\frac{1}{3}\gamma^2\sigma^2 (Y^2_r)^2 +C_{\gamma,\sigma}g^2_r+\d \tilde Z^B_r\right)dr\right]
$$
$$
\leq C_\lambda \pi_0^2 +\d \EE\left[\int_{t_0}^T \tilde Z^B_rdr\right]
$$
$$
+ \EE\left[(\lambda C+\d(T-t_0))(Y^2_T)^2+\int_{t_0}^T \left(\left(\lambda C+\d(T-t_0)+\frac{1}{3}\gamma^2\sigma^2 \right)(Y^2_r)^2+(\lambda C+\d(T-t_0)+C_{\gamma,\sigma})g_r^2dr\right)\right]
$$
$$
\leq C_\lambda \pi_0^2 +\d (T-t_0)+ \EE\left[((\lambda+\d) C+\d(T-t_0))(Y^2_T)^2+\int_{t_0}^T \left((\lambda+\d) C+\d(T-t_0)+\frac{1}{3}\gamma^2\sigma^2 \right)(Y^2_r)^2 dr\right]
$$
$$
+ \EE\left[\int_{t_0}^T((\lambda+\d) C+\d(T-t_0)+C_{\gamma,\sigma})g_r^2dr\right].
$$
We now choose small enough $\lambda,\d_0>0$, so that for all $\d\in(0,\d_0]$ we have 
$$((\lambda+\d) C+\d(T-t_0))\leq \frac{\gamma l}{2}\mbox{ and }(\lambda+\d) C+\d(T-t_0)+\frac{1}{3}\gamma^2\sigma^2\leq \frac{1}{2}\gamma^2\sigma^2.$$
Then, the previous estimate implies that, for all $\e>0$ and $\d\in(0,\d_0]$, 
$$\EE\left[\gamma l (Y^2_T)^2 +\int_{t_0}^T \left(\gamma^2\sigma^2 (Y^2_r)^2+Y^1_r\phi_\e\left(\frac{Y^1_r}{2\eta \gamma}\right)\right)dr\right]\leq C_1\left(\pi_0^2  +\d+\EE\left[\int_{t_0}^Tg_r^2dr\right]\right).$$
As $g$ is absolutely bounded, the above inequality implies 
$$\EE\left[\gamma l (Y^2_T)^2 +\int_{t_0}^T\left( \gamma^2\sigma^2 (Y^2_r)^2+Y^1_r\phi_\e\left(\frac{Y^1_r}{2\eta \gamma}\right)\right)dr\right]\leq C_2\left(\pi_0^2 + 1\right).$$
The above estimate and \eqref{eq:bsdebound} yield
$$|Y^1_{t_0}|^2 \leq C_3(\pi_0^2+1)=C_3((Y^2_{t_0})^2+1).$$
Repeating the procedure for arbitrary $t\in[t_0,T]$ in place of $t_0$ (and taking conditional, as opposed regular, expectations), we obtain 
$$|Y^1_t|^2 \leq C_3(\pi_t^2+1)=C_3((Y^2_t)^2+1),\quad t\in[t_0,T].$$
Bringing back the superscript $(\d,\e)$, we deduce from the above estimate that 
$$\phi_\e(-Y^1_t/(2\eta\gamma)) +\delta^2 Y^1_t= C^{\d,\e}_t Y^2_t + \tilde C^{\d,\e}_t,\quad t\in[t_0,T],$$
with some progressively measurable bounded processes $C^{\d,\e}$ and $\tilde C^{\d,\e}$, that are uniformly bounded for $\e>0$ and $\d\in(0,\d_0]$. 
Using the above representation, we can write the solution to \eqref{eq:forw} as follows: 
$$Y^2_r=\pi_0 e^{\int_{t_0}^rC^{\d,\e}_u du} + \int_{t_0}^r e^{\int_{s}^rC^{\d,\e}_u du}(\tilde C^{\d,\e}_s ds + \d d\tilde B_s)$$
where the anticipating integral is to be understood as $ \int_{t_0}^r e^{\int_{s}^rC^{\d,\e}_u du}d\tilde B_s= e^{\int_{t_0}^rC^{\d,\e}_u du}\int_{t_0}^r e^{-\int_{t_0}^{s}C^{\d,\e}_u du}d\tilde B_s$.
Using the above, we can represent the solution to \eqref{eq:systbsde2} as
\begin{align*}
Y^{1}_t=&e^{-\gamma l \d^2(T-t)} \EE^{\QQ^{\d,\e}}_t\left[\gamma l\left(\pi_0 e^{\int_{t_0}^TC^{\d,\e}_udu}+\int_{t_0}^Te^{\int_{s}^TC^{\d,\e}_u du}(\tilde C^{\d,\e}_s ds + \d d\tilde B_s)\right)\right.\\
&\left.+\int_{t}^T e^{\gamma l \d^2(T-r)}\left(\gamma^2\sigma^2 \left(\pi_0 e^{\int_{t_0}^rC^{\d,\e}_u du}+\int_{t_0}^re^{\int_{s}^r C^{\d,\e}_u du}(\tilde C^{\d,\e}_s ds + \d d\tilde B_s)\right)-\gamma\sigma g^{\d,\e}_r\right) dr\right].\\
\end{align*}
Then, the uniform boundedness of the processes $C^{\d,\e},\tilde C^{\d,\e},g^{\d,\e}$, the identity
$$
\int_{t_0}^r e^{\int_{s}^rC^{\d,\e}_u du}d\tilde B_s=\tilde B_r-e^{\int_{t_0}^rC^{\d,\e}_u du}\tilde B_{t_0}+\int_{t_0}^r C^{\d,\e}_se^{\int_{s}^rC^{\d,\e}_u du}\tilde B_sds,
$$
and the fact that $\tilde B$ is a Brownian motion under $\QQ^{\d,\e}$, yield
\begin{align*}
|Y^{1}_t|\leq \hat C(1+|\pi_0|+\d \sup_{t_0\leq r\leq t}|\tilde B_r-\tilde B_{t_0}|),\quad t\in[t_0,T],
\end{align*}
with a constant $\hat C$ independent of $(t_0,s_0,\pi_0)\in[0,T]\times\RR^2$, $\e>0$, and $\d\in(0,\d_0]$.
Note that $\tilde B_r-\tilde B_{t_0}=B_r-B_{t_0}-\d\int_{t_0}^t Y^{1}_r dr$. Thus, 
\begin{align*}
\sup_{t_0\leq r\leq t}|Y^{1}_r|&\leq \hat C (1+|\pi_0|+\d \sup_{t_0\leq r\leq t}| B_r- B_{t_0}|+\delta T\sup_{t_0\leq r\leq t}|Y^{1}_r|),
\end{align*}
which yields the desired estimate.  
\qed
\end{proof}

\medskip

Next, we establish the monotonicity of the feedback optimal control function $\pa_\pi u^{\d,\e}$.

\begin{lemma}\label{prop:convexity}
For any $(t,s)\in[0,T]\times\RR$, $\e>0$, and $\d\in(0,\overline \d]$, the functions $U^{\d,\e}(t,s,\cdot)$ and $u^{\d,\e}(t,s,\cdot)$ are convex. 
\end{lemma}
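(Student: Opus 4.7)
The plan is to show that $(\pi,\nu)\mapsto \Psi^\delta(t,\pi,\nu)$ is jointly convex pathwise, which upgrades to joint log-convexity of the objective under expectation, and then take the infimum over $\nu$.

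First, fix $(t,s,Q)$ and $\omega$. Inspect the three terms in
$$\Psi^\delta(t,\pi,\nu)=\gamma\eta\int_t^T\nu_r^2\,dr+\frac{\gamma l}{2}\left(\pi+\int_t^T(\nu_r\,dr+\delta\,dB_r)\right)^2-\gamma\int_t^T\left(\pi+\int_t^r(\nu_l\,dl+\delta\,dB_l)\right)dS_r.$$
The first term is convex in $\nu$ and independent of $\pi$. The second term is the square of an affine function of $(\pi,\nu)$, hence convex in $(\pi,\nu)$. The third term is affine in $(\pi,\nu)$. Thus $(\pi,\nu)\mapsto \Psi^\delta(t,\pi,\nu)$ is jointly convex pathwise.

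Next, since $\Gamma(t,s)$ does not depend on $(\pi,\nu)$, the random variable $e^{\Psi^\delta(t,\pi,\nu)+Q\Gamma(t,s)}$ is, pathwise, a log-convex (hence convex) function of $(\pi,\nu)$. Taking expectation preserves convexity (and log-convexity, via H\"older's inequality), so
$$F(\pi,\nu):=\EE\!\left[e^{\Psi^\delta(t,\pi,\nu)+Q\Gamma(t,s)}\right]$$
is jointly convex in $(\pi,\nu)$, and $\log F$ is jointly convex in $(\pi,\nu)$ as well. The admissible set $\cA^\e(t,T)$ is a convex subset of progressively measurable processes (pointwise bound by $1/\e$ is preserved by convex combinations, as is the square-integrability condition when $\e=0$).

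Finally, convexity in $\pi$ of the partial infimum of a jointly convex function over a convex set is standard: given $\pi_1,\pi_2$ and $\lambda\in(0,1)$, pick $\nu^i\in\cA^\e(t,T)$ with $F(\pi_i,\nu^i)\leq U^{\delta,\e}(t,s,\pi_i,Q)+\epsilon$, set $\pi_\lambda=\lambda\pi_1+(1-\lambda)\pi_2$ and $\nu^\lambda=\lambda\nu^1+(1-\lambda)\nu^2\in\cA^\e(t,T)$, and use convexity of $F$ to obtain
$$U^{\delta,\e}(t,s,\pi_\lambda,Q)\leq F(\pi_\lambda,\nu^\lambda)\leq \lambda F(\pi_1,\nu^1)+(1-\lambda)F(\pi_2,\nu^2)\leq \lambda U^{\delta,\e}(t,s,\pi_1,Q)+(1-\lambda)U^{\delta,\e}(t,s,\pi_2,Q)+\epsilon.$$
Letting $\epsilon\downarrow 0$ gives convexity of $U^{\delta,\e}(t,s,\cdot,Q)$. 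The identical argument applied to $\log F$ (in place of $F$) yields convexity of $u^{\delta,\e}(t,s,\cdot,Q)=\inf_\nu \log F(\cdot,\nu)$. There is no real obstacle here; the only point worth checking carefully is the pathwise joint convexity of $\Psi^\delta$, which reduces to observing that the $\pi$-dependence enters only through an affine shift inside a square and inside a stochastic integral.
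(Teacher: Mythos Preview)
Your argument is correct. For $U^{\delta,\epsilon}$, your proof and the paper's coincide: both exploit the pathwise joint convexity of $(\pi,\nu)\mapsto\Psi^\delta(t,\pi,\nu)$ and pass through the exponential, the expectation, and the infimum over the convex set $\cA^\e(t,T)$.

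For $u^{\delta,\epsilon}=\log U^{\delta,\epsilon}$, however, you take a genuinely different and more elementary route. The paper proves convexity of $u$ via a PDE argument: it introduces the ``measure of convexity'' $C(t,s,\pi_1,\pi_2)=u(\pi_1)+u(\pi_2)-2u((\pi_1+\pi_2)/2)$, derives a linear parabolic inequality for $C$ using the HJB equation~\eqref{eq:u}, and concludes $C\geq 0$ from a Feynman--Kac representation. This requires the $C^{1,2}$ regularity of $u^{\delta,\epsilon}$ (hence the restriction $\delta>0$) and invokes the a priori bound from Theorem~\ref{unifbound} to control the growth of the coefficients. Your argument instead uses that log-convexity is preserved under integration (H\"older), so $(\pi,\nu)\mapsto\log F(\pi,\nu)$ is jointly convex, and then the identity $u=\log\inf_\nu F=\inf_\nu\log F$ (monotonicity of $\log$) reduces the problem to the same partial-infimum step. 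This avoids any smoothness assumption on $u$ and in fact works for all $\delta\geq 0$, $\epsilon\geq 0$. The paper's PDE machinery is more flexible if one later needs strict or quantitative convexity bounds tied to the equation, but for the bare statement of the lemma your approach is cleaner and stronger.
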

\begin{proof}
We omit the dependence of the functions on $\e,\d$.
The convexity of $U$ is a direct consequence of the convexity of a square function and an exponential. 
Indeed for any $(\lambda,t,s,\pi_1,\pi_2)\in[0,1]\times [0,T]\times \RR^3$, and any optimizing sequences $(\nu^{i,k})_{i=1,2,\,k\in \NN}$ for the problem \eqref{def:U} started at $(t,s,\pi_i)$, we have the inequality 
\begin{align*}
U(t,s,\lambda \pi_1+(1-\lambda)\pi_2)&\leq J(t,s,\lambda \pi_1+(1-\lambda )\pi_2;\lambda \nu^{1,k}+(1-\lambda)\nu^{2,k})\\
&\leq \lambda J(t,s,\pi_1; \nu^{1,k})+(1-\lambda)J(t,s,\pi_2; \nu^{2,k}).
\end{align*}
Taking $k$ to $\infty$, this leads to the convexity of $U$ in $\pi$. 

In order to prove the convexity of $u$ we adapt the ideas in \cite[Section 4]{GK}. First, we define the measure of convexity
$$[0,T]\times \RR^3\ni(t,s,\pi_1,\pi_2)\mapsto C(t,s,\pi_1,\pi_2)=u(t,s,\pi_1)+u(t,s,\pi_2)-2u\left(t,s,\frac{\pi_1+\pi_2}{2}\right)$$
Due to continuity of $u$, it is convex in $\pi$ if and only if $C(t,s,\pi_1,\pi_2)\geq 0$ for all $(t,s,\pi_1,\pi_2)\in [0,T]\times \RR^3$.
Due to convexity of $u$ at the final time, we have that 
$$C(T,\cdot)\geq 0.$$
Denoting $\bar \pi=\frac{\pi_1+\pi_2}{2}$, we differentiate $C$ and use the PDE \eqref{eq:u}, to obtain
$$
\cL^C C:=\pa_t C+\frac{\sigma^2}{2}\pa_{ss} C+\frac{\delta^2}{2}\left(\pa_{\pi_1\pi_1}C+\pa_{\pi_1\pi_1}C+2\pa_{\pi_2\pi_1}C\right)
$$
$$
=\left(\pa_t u+\frac{\sigma^2}{2}\pa_{ss}u+\frac{\delta^2}{2}\pa_{\pi\pi}u\right)(\pi_1)+\left(\pa_t u+\frac{\sigma^2}{2}\pa_{ss}u+\frac{\delta^2}{2}\pa_{\pi\pi}u\right)(\pi_2)
$$
$$
-2\left(\pa_t u+\frac{\sigma^2}{2}\pa_{ss}u+\frac{\delta^2}{2}\pa_{\pi\pi}u\right)\left(\bar \pi\right)
=-H_\e(\pa_\pi u({\pi_1}))-H_\e(\pa_\pi u({\pi_2}))+2H_\e(\pa_\pi u({\bar \pi}))
$$
$$
-\frac{\delta^2}{2}\left((\pa_\pi u({\pi_1}))^2+(\pa_\pi u({\pi_2}))^2- 2(\pa_\pi u({\bar \pi}))^2\right)
$$
$$
 -\frac{\sigma^2}{2}\left(\left(\pa_s u ({\pi_1})-\gamma(\pi_1+Q \pa_s P)\right)^2+\left(\pa_s u ({\pi_2})-\gamma(\pi_2+Q \pa_s P)\right)^2-2\left(\pa_s u ({\bar \pi})-\gamma(\bar \pi+Q \pa_s P)\right)^2\right).
$$
As $H_{\e}$ is Lipschitz-continuous, we can define bounded continuous functions $A^{\d,\e}_i$, for $i=1,2$, such that
$$-H_\e(\pa_\pi u({\pi_i}))+H_\e(\pa_\pi u({\bar \pi}))=(\pa_\pi u({\pi_i})-\pa_\pi u({\bar \pi}))A^{\d,\e}_i=A^{\d,\e}_i \pa_{\pi_i}C.$$
Additionally, by direct computation, we have
\begin{align*}
&(\pa_\pi u({\pi_1}))^2+(\pa_\pi u({\pi_2}))^2- 2(\pa_\pi u({\bar \pi}))^2=(\pa_{\pi_1} C)^2+(\pa_{\pi_2} C)^2+2 \pa_\pi u({\bar \pi})(\pa_{\pi_1} C+\pa_{\pi_2} C)\mbox{ and}\\
&\left(\pa_s u ({\pi_1})-\gamma(\pi_1+Q \pa_s P)\right)^2+\left(\pa_s u ({\pi_2})-\gamma(\pi_2+Q \pa_s P)\right)^2-2\left(\pa_s u ({\bar \pi})-\gamma(\bar \pi+Q \pa_s P)\right)^2\\
&=2\left(\pa_s u ({\bar \pi})-\gamma(\bar \pi+Q \pa_s P)\right)\pa_s C \\
&+\left(\pa_s u ({\pi_1})-\pa_s u ({\bar\pi})-\frac{\gamma}{2}(\pi_1-\pi_2)\right)^2+\left(\pa_s u ({\pi_2})-\pa_s u ({\bar\pi})-\frac{\gamma}{2}(\pi_2-\pi_1)\right)^2.
\end{align*}
Therefore, 
\begin{align*}
\cL^C C=&A^{\d,\e}_1 \pa_{\pi_1}C +A^{\d,\e}_2\pa_{\pi_2}C-{\sigma^2}\left(\pa_s u ({\bar \pi})-\gamma(\bar \pi+Q \pa_s P)\right)\pa_s C \\
&-\frac{\delta^2}{2}\left((\pa_{\pi_1} C)^2+(\pa_{\pi_2} C)^2+2 \pa_\pi u({\bar \pi})(\pa_{\pi_1} C+\pa_{\pi_2} C)\right)\\
& -\frac{\sigma^2}{2}\left(\left(\pa_s u ({\pi_1})-\pa_s u ({\bar\pi})-\frac{\gamma}{2}(\pi_1-\pi_2)\right)^2+\left(\pa_s u ({\pi_2})-\pa_s u ({\bar\pi})-\frac{\gamma}{2}(\pi_2-\pi_1)\right)^2\right)\\
\leq &A^{\d,\e}_1 \pa_{\pi_1}C +A^{\d,\e}_2\pa_{\pi_2}C-{\sigma^2}\left(\pa_s u\left({\frac{\pi_1+\pi_2}{2}}\right)-\gamma\left({\frac{\pi_1+\pi_2}{2}}+Q \pa_s P\right)\right)\pa_s C \\
&-{\delta^2} \pa_\pi u\left({\frac{\pi_1+\pi_2}{2}}\right)(\pa_{\pi_1} C+\pa_{\pi_2} C).
\end{align*}
Thus, $C$ is a supersolution, of at most quadratic growth, of a linear parabolic equation.  Due to Theorem \ref{unifbound} and the boundedness of $\pa_s u$, the coefficients of the generator of this linear PDE have at most linear growth, which is sufficient to claim that $C\geq 0$ (e.g., via the Feynman-Kac formula).
\qed
\end{proof}


\medskip

Recall that the main goal of this subsection is to establish a tractable representation and the key properties of the optimal control for $\d=\e=0$, by taking limits as $\e,\d\downarrow0$.

\begin{theorem}\label{theod0}
There exists an affine function $1/\e_0:\RR^+\rightarrow(0,\infty)$, such that the following statements hold.
\begin{itemize}
\item For any $(t,s,\pi)\in[0,T]\times\RR^2$ and any $\e\in[0,\e_0(|\pi|)]$, the optimal control $\nu^{*,t,s,\pi,0,\e}$ has a modification that is a.s. continuous in time and absolutely bounded (a.s., uniformly in $t$) by $1/\e_0(|\pi|)$.
\item For any $t\in[0,T]$ and any $\e\geq0$, the mapping $(s,\pi)\mapsto u^{0,\e}(t,s,\pi)$ is continuously differentiable, $\pa_s u^{0,\e}$ is continuous on $[0,T)\times\RR^2$, and $\pa_\pi u^{0,\e}$ is continuous and linearly bounded in $\pi$ on $[0,T]\times\RR^2$.\footnote{Note that the case $\e>0$ is covered by Proposition \ref{cor:smoothness}.}
\item For any $(t,s,\pi)\in[0,T]\times\RR^2$ and $\e\in[0,\e_0(|\pi|)]$, the aforementioned modification of the optimal control is given by
\begin{align}\label{eq:optcont}
\nu^{*,t,s,\pi,0,\e}_r = -\frac{1}{2\eta\gamma} \pa_\pi u^{0,\e}\left(r,S^{t,s}_r,\pi^{*,t,s,\pi,0,\e}_r\right),
\end{align}
where $\pi^{*,t,s,\pi,0,\e}$ is the a.s. unique solution to the ODE
\begin{equation}\label{eq.optContZero.ODEforPi}
d\pi^{*,t,s,\pi,0,\e}_r = -\frac{1}{2\eta\gamma}\pa_\pi u^{0,\e}\left(r,S^{t,s}_r,\pi^{*,t,s,\pi,0,\e}_r\right) \,dr,
\quad \pi^{*,t,s,\pi,0,\e}_{t} = \pi.
\end{equation}
\item For any $(t,s,\pi)\in[0,T]\times\RR^2$ and any $\e\in(0,\e_0(|\pi|)]$, we have, a.s.,
$$
\lim_{\d\downarrow0}\sup_{r\in[t,T]} \left|\nu^{*,t,s,\pi,\d,\e}_r - \nu^{*,t,s,\pi,0,\e}_r\right|=0
= \lim_{\e'\downarrow0} \left|\nu^{*,t,s,\pi,0,\e'}_r - \nu^{*,t,s,\pi,0,0}_r\right|,
$$
where every optimal control is understood as its continuous modification.
\end{itemize}
\end{theorem}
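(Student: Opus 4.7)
The plan is to establish the four bullet points by combining the uniform bound of Theorem \ref{unifbound}, the convexity of Lemma \ref{prop:convexity}, the viscosity-stability argument used in Corollary \ref{cor:ue}, the uniform H\"older regularity of Proposition \ref{cor:smoothness}, and a monotone-ODE argument. The first move is to upgrade Theorem \ref{unifbound} to a pointwise linear bound: applying that theorem with initial data $(t_0,s_0,\pi_0)=(t,s,\pi)$ and evaluating at $r=t_0$ (so the Brownian supremum vanishes) gives
$|\pa_\pi u^{\d,\e}(t,s,\pi)|\leq C(1+|\pi|)$, uniformly in $\d\in(0,\d_0]$, $\e>0$, $(t,s,\pi)\in [0,T]\times\RR^2$. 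Then I use Gronwall on the feedback ODE (with $\phi_\e$ provisionally in place) to get an affine-in-$|\pi|$ bound $R(|\pi|)$ on $\sup_r |\pi^{*,t,s,\pi,\cdot,\e}_r|$; setting $1/\e_0(|\pi|):=CR(|\pi|)/(2\eta\gamma)$ (affine), the truncation $\phi_\e$ becomes inactive along the optimal trajectory whenever $\e\leq \e_0(|\pi|)$, which proves the first bullet modulo the identification of the feedback.

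Next I pass to the limit $\d\downarrow 0$ with $\e>0$ fixed. The viscosity stability argument of Corollary \ref{cor:ue} (with the comparison principle of Proposition \ref{prop:comp} available for $\d\e=0$) gives $u^{\d_n,\e}\to u^{0,\e}$ locally uniformly. Combined with the locally uniform H\"older regularity of $\pa_\pi u^{\d,\e}$ and $\pa_s u^{\d,\e}$ in $(s,\pi)$ from Proposition \ref{cor:smoothness}, Arzel\`a--Ascoli yields locally uniform convergence of these derivatives, so $u^{0,\e}\in C^1$ in $(s,\pi)$ with $\pa_\pi u^{0,\e}$ continuous on $[0,T]\times\RR^2$ (and $\pa_s u^{0,\e}$ on $[0,T)\times\RR^2$). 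The linear bound passes to the limit. For the ODE \eqref{eq.optContZero.ODEforPi}, existence follows from continuity of $\pa_\pi u^{0,\e}$, and uniqueness from its monotonicity in $\pi$ (by Lemma \ref{prop:convexity}, $\pa_\pi u^{0,\e}(t,s,\cdot)$ is non-decreasing, so $-\pa_\pi u^{0,\e}/(2\eta\gamma)$ is one-sided Lipschitz). To show that this ODE does give the optimum, I pass to the limit in the FBSDE of Proposition \ref{theodelta}: the $\d dB$ drift vanishes, $\tilde B\to B$ on a common probability space, the truncation is inactive (by Step~1), and the uniform H\"older convergence of $\pa_\pi u^{\d,\e}$ identifies the limiting feedback; alternatively, one can verify the ODE solution directly against the unique optimizer of Corollary \ref{cor:unique}.

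Then I push to $\e\downarrow 0$ with $\d=0$. Corollary \ref{cor:ue} gives $u^{0,\e_n}\to u^{0,0}$ locally uniformly; convexity-preservation and the linear bound on $\pa_\pi u^{0,\e}$ pass to the limit on the subdifferential of $u^{0,0}$. For $\e\leq \e_0(|\pi|)$, the Step 1 bound $|\nu^{*,t,s,\pi,0,\e'}|\leq 1/\e_0(|\pi|)$ for every $\e'\in[0,\e_0(|\pi|)]$ means we may replace $\cA^0(t,T)$ by $\cA^{\e_0(|\pi|)}(t,T)$ without changing the value $u^{0,0}$ near $\pi$. This reduction reinstates the bounded-control setting needed to invoke Lemma \ref{le:sensValue.mainLemma} and the envelope theorem, yielding $C^1$ regularity and the representation \eqref{eq:optcont} for $\e=0$ as well. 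The convergence of controls in the final bullet then reduces to continuity of solutions of the ODE \eqref{eq.optContZero.ODEforPi} with respect to the coefficient $\pa_\pi u^{0,\e}$, which is a standard consequence of locally uniform convergence of coefficients and the one-sided Lipschitz structure.

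The hardest step is upgrading the almost-everywhere differentiability of $u^{0,0}$ (free from convexity) to full $C^1$ regularity with the explicit representation \eqref{eq:sder}--\eqref{eq:pider} extended to $\e=0$: Proposition \ref{cor:smoothness} and Lemma \ref{le:sensValue.mainLemma} are stated only for $\e>0$ because they rely on boundedness of admissible controls for dominated-convergence/equidifferentiability. The resolution, as above, is the endogenous boundedness of the optimal control (Theorem \ref{unifbound} $\Rightarrow$ Step 1 bound), which lets me localize the optimization to a bounded-control problem in a neighborhood of every $(t,s,\pi)$ and thus reuse the entire $\e>0$ machinery at $\e=0$.
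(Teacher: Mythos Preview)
Your proposal is correct and follows essentially the same route as the paper: pass $\d\downarrow 0$ with $\e>0$ fixed using the uniform bound of Theorem \ref{unifbound}, the uniform H\"older regularity of Proposition \ref{cor:smoothness}, and the convexity of Lemma \ref{prop:convexity} (for ODE uniqueness), then identify the limit as the feedback optimum via continuity of $J^\d$; finally handle $\e\downarrow 0$ by observing that $u^{0,\e}$ and $\nu^{*,0,\e}$ stabilize once $\e\le\e_0(|\pi|)$, which reinstates the bounded-control machinery at $\e=0$. Your explicit extraction of the pointwise bound $|\pa_\pi u^{\d,\e}(t,s,\pi)|\le C(1+|\pi|)$ by evaluating Theorem \ref{unifbound} at $r=t_0$ is a clean device the paper leaves implicit, but otherwise the two arguments coincide.
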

\begin{proof}
We fix $(t_0,s_0,\pi_0)\in[0,T]\times\RR^2$, and, in most instances, drop the dependence on these variables.

First, we prove the statement of the theorem excluding the case $\e=0$.
Consider $\e>0$, a sequence $\d_n\downarrow 0$, and the associated $\pi^{*,\d_n,\e}$, satisfying \eqref{eq:forw}:
\begin{equation}\label{eq.optContZero.ODEforPi.truncated}
d\pi^{*,\d_n,\e}_t = \left[-\phi_\e\left(\pa_\pi u^{\d_n,\e}\left(t,S_t,\pi^{*,\d_n,\e}_t\right)/(2\eta\gamma)\right) + \d^2_n \pa_\pi u^{\d_n,\e}\left(t,S_t,\pi^{*,\d_n,\e}_t\right)\right] dt + \d_n d \tilde B_t.
\end{equation}
Due to Theorem \ref{unifbound}, for a.e. random outcome, the drift in the above ODE is absolutely bounded by a constant times $1 +|\pi_0|+\d_n \sup_{t_0\leq r\leq T}| B_r- B_{t_0}|$ (the same constant for all $n$). Additionally, thanks to \eqref{eq:deftb} and \eqref{eq:boundstrat},
$$
\d_n \tilde B_t=\delta_n \left(B_t- \int_{t_0}^t\delta_n \pa_\pi u^{\d_n,\e}(r,S^{t_0,S_0}_r,\pi^{*,t_0,s_0,\pi_0,\d_n,\e}_r)dr\right)
$$ 
a.s. converges to zero uniformly in $t$. Thus, the family of functions $\{t\mapsto \pi^{*,\d_n,\e}_t\}_n$ is relatively compact in the $C$-norm on $[t_0,T]$ (for a fixed random outcome). Hence, up to a subsequence, we can assume that it converges as $n\rightarrow\infty$. 
Next, we observe that Corollary \ref{cor:ue} and the H\"older-continuity of the partial derivatives of $U^{\d,\e}(t,\cdot)$, which is uniform over small enough $\d\geq0$ (see Proposition \ref{cor:smoothness}), imply that, for any $t\in[t_0,T]$, $\pa_\pi U^{\d_n,\e}(t,\cdot)\rightarrow \pa_\pi U^{0,\e}(t,\cdot)$ locally uniformly, as $n\rightarrow\infty$. The latter, in turn, implies that $\pa_\pi u^{\d_n,\e}(t,\cdot)\rightarrow \pa_\pi u^{0,\e}(t,\cdot)$ locally uniformly.
Then, using the dominated convergence, it is easy to see that the limit of $\{\pi^{*,\d_n,\e}_\cdot\}_n$ (for a fixed random outcome, along a subsequence), denoted $\hat\pi^\e$, satisfies \eqref{eq.optContZero.ODEforPi.truncated} with $\d_n$ replaced by zero. Recall also that $|\hat\pi^\e|$ is bounded by an affine function of $|\pi_0|$ (independent of anything else, including the random outcome and the choice of a subsequence), which we denote by $1/\e_0$. Hence, for $\e\in(0,\e_0(|\pi_0|)]$, $\phi_\e$ can be replaced by identity, and we conclude that $\hat\pi^\e$ satisfies \eqref{eq.optContZero.ODEforPi}. Proposition \ref{cor:smoothness} and Lemma \ref{prop:convexity} imply that $\pa_\pi u^{0,\e}$ is jointly measurable and continuously increasing in $\pi$ (the latter property is only established for $\d>0$, but it extends trivially to $\d=0$ by taking a limit, as above). Then, a combination of Caratheodory's existence theorem and \cite[Theorem 3.1]{C} implies that the solution to \eqref{eq.optContZero.ODEforPi} is unique. Thus, the limits along all subsequences of $\{\pi^{*,\d_n,\e}\}_n$ must be the same, and we conclude that this sequence converges a.s., uniformly in $t\in[t_0,T]$, to $\hat\pi^\e$, and that $\nu^{*,\d_n,\e}$ converges in the same way to 
$$
\hat\nu^\e_t := -\frac{1}{2\eta\gamma} \pa_\pi u^{0,\e}\left(t,S_t,\hat\pi^{\e}_t\right).
$$
It only remains to show that $\hat\nu^\e = \nu^{*,0,\e}$. The latter follows easily from the aforementioned convergence and from the continuity of $J^\d(t,s,\pi,Q;\nu)$ in $(\d,\nu)$, for uniformly bounded $\{\nu\}$ (see \eqref{eq.J.def}).

\smallskip

Next, we consider the case $\e=0$.
Recall that, for all $\e\in(0,\e_0(|\pi_0|)]$, we have $\nu^{*,0,\e} = \nu^{*,0,\e_0(|\pi_0|)}$, which implies $u^{0,\e}(t,s,\pi)=u^{0,\e_0(|\pi_0|)}(t,s,\pi)$ for all $\e\in(0,\e_0(|\pi_0|)]$, $t\in[t_0,T]$, $s\in\RR$, $|\pi|\leq |\pi_0|$.
The first consequence of this observation is the existence of
$$
\hat\nu := \lim_{\e\downarrow0} \nu^{*,0,\e},
\quad \hat\pi := \lim_{\e\downarrow0} \pi^{*,0,\e},
$$
and the absolute boundedness of both processes (a.s., uniformly in $t$).
The second consequence is the existence of
\begin{align}\label{eq:cvvalues}
\hat v := \lim_{\e\downarrow0} \pa_\pi u^{0,\e},
\end{align}
where the convergence holds uniformly on all compacts. Corollary \ref{cor:ue} implies that 
$$
\hat v = \pa_\pi u^{0,0},\quad \hat \nu = \nu^{*,0,0},
$$
and the dominated convergence shows that the statement of the theorem holds for $\e=0$.
\qed
\end{proof}

\smallskip

\begin{remark}
For $\pi$ restricted to a compact, there is in fact no need to take the limit in \eqref{eq:cvvalues} -- it suffices to consider small enough $\e>0$. 
Thus, an immediate corollary of Theorem \ref{theod0} is the following: $\pa_\alpha u^{0,0}(t,s,\pi) =\pa_\alpha u^{0,\e}(t,s,\pi)$ for all $\e\in(0,\e_0(|\pi|)]$ and $\alpha=s,\pi$.
\end{remark}

\medskip

Throughout the rest of the paper, we interpret the optimal control $\nu^{*,t,s,\pi,\d,\e}$, for $\d,\e>0$ and $\d=0,\e\geq0$, as its continuous modification (appearing in Proposition \ref{theodelta} and in Theorem \ref{theod0}).

\medskip

It is important to notice that Theorem \ref{theod0} implies that the solution of the stochastic control problems \eqref{eq.repAgent.Obj.def} and \eqref{def:U}, with $\e=0$, is the same as their solution for $\e>0$, provided the latter is sufficiently small.
This observation allows us to extend some of the results established earlier in this section for $\e>0$ to the case $\e=0$. For example, it is clear that the statement of Proposition \ref{cor:smoothness} holds for $\e=0$. The following corollary extends the statement of Proposition \ref{prop:comp}, showing that $\hat{V}^{\delta,\epsilon}$ is a classical solution to the associated HJB equation, even for $\d=\e=0$ (note that we could not prove this fact directly in Proposition \ref{prop:comp}, as this extension requires the conclusion of Theorem \ref{theod0}).

\begin{cor}\label{le:equil.le1}
Denote by $\hat V(t,s,\pi,x,Q)$ the value function of \eqref{eq.repAgent.Obj.def}, with $\d=\e=0$. Then, $\hat V\in C^{1,2,1,\infty,1}([0,T)\times\RR^4)\cap C([0,T]\times\RR^4)$, and it is a classical solution to \eqref{eq.HJB.Vhat}--\eqref{eq.HJB.Vhat.termcond}:
\begin{align}
&\partial_t \hat V + \frac{\sigma^2}{2}\partial_{ss} \hat V
+\sup_{\nu\in\RR}\,[\nu\partial_\pi \hat V - \nu(s+ \eta \nu)\partial_x \hat V]=0,\label{eq.equil.HJB.fictitious.1}\\
& \hat V(T,s,\pi,x,Q) = -\exp\left(-\gamma\left(x+\pi s - l\frac{\pi^2}{2} + Q H(s)\right)\right).\label{eq.equil.HJB.fictitious.1.Tcond}
\end{align}
\end{cor}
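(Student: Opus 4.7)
The plan is to leverage the feedback representation established in Theorem \ref{theod0} to upgrade the viscosity-solution property of $u^{0,0}$ (from Proposition \ref{prop:comp}) to a classical one via a Cole--Hopf linearization of the HJB equation \eqref{eq:u}. Using \eqref{eq:defbarv},
$$\hat V(t,s,\pi,x,Q) = -e^{-\gamma(x + \pi s + Q P(t,s))}\, U^{0,0}(t,s,\pi,Q),$$
and Assumption \ref{assume:P} ($P \in C^{1,3}([0,T) \times \RR) \cap C([0,T]\times\RR)$), the exponential prefactor is $C^\infty$ in $x$, $C^1$ in $(\pi,Q)$, and $C^{1,2}$ in $(t,s)$ on $[0,T) \times \RR$. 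So the claim reduces to showing $U^{0,0} \in C^{1,2,1,1}([0,T) \times \RR^3) \cap C([0,T] \times \RR^3)$ in $(t,s,\pi,Q)$. The remark following Theorem \ref{theod0}, combined with Proposition \ref{cor:smoothness}, already delivers $U^{0,0} \in C^{0,1,1,1}$, with $\pa_s U^{0,0}$, $\pa_\pi U^{0,0}$, $\pa_Q U^{0,0}$ locally H\"older in $(s,\pi,Q)$ uniformly in $t$ on compacts of $[0,T)$.

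To obtain the remaining regularity in $(t,s)$, freeze $(\pi,Q)$ and set $V(t,s) := U^{0,0}(t,s,\pi,Q) = e^{u^{0,0}(t,s,\pi,Q)}$. Using $\pa_s V = V \pa_s u$ and $\pa_{ss} V = V[(\pa_s u)^2 + \pa_{ss} u]$, multiplying \eqref{eq:u} by $V$ cancels the quadratic $(\pa_s u)^2$ term and yields the linear parabolic Cauchy problem
\begin{equation*}
\pa_t V + \frac{\sigma^2}{2}\pa_{ss} V - \sigma^2 \gamma(\pi + Q \pa_s P(t,s))\, \pa_s V + c(t,s)\, V = 0, \qquad V(T,s) = e^{\gamma l \pi^2/2},
\end{equation*}
with
\begin{equation*}
c(t,s) := \frac{\sigma^2 \gamma^2}{2}(\pi + Q\pa_s P(t,s))^2 - \frac{1}{4\eta\gamma}\left(\pa_\pi u^{0,0}(t,s,\pi,Q)\right)^2.
\end{equation*}
By Proposition \ref{cor:smoothness} (extended to $\epsilon = 0$ via the remark after Theorem \ref{theod0}) and the regularity of $P$, the coefficients of this linear PDE are bounded, jointly continuous in $(t,s)$, and H\"older continuous in $s$ uniformly in $t$ on any strip $[0,T-\eta'] \times \RR$. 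Classical solvability theory for linear parabolic Cauchy problems then produces a $C^{1,2}([0, T-\eta') \times \RR)$ solution $\tilde V$ with the prescribed terminal data; the viscosity comparison principle for linear parabolic equations identifies $\tilde V$ with $V$, so $u^{0,0}(\cdot,\cdot,\pi,Q) \in C^{1,2}_{\mathrm{loc}}([0,T) \times \RR)$.

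Joint continuity of $\pa_t u^{0,0}$ and $\pa_{ss} u^{0,0}$ in all variables then follows: the joint continuity of $c$ and of the first-order derivatives of $u^{0,0}$ combined with the linear PDE above yields joint continuity of $\pa_{ss} u^{0,0}$, and \eqref{eq:u} then expresses $\pa_t u^{0,0}$ as a jointly continuous function on $[0,T) \times \RR^3$. Substituting back into \eqref{eq:defbarv}, the PDE \eqref{eq.equil.HJB.fictitious.1} for $\hat V$ follows from \eqref{eq:u} by direct computation, while the terminal condition \eqref{eq.equil.HJB.fictitious.1.Tcond} and continuity at $t = T$ follow from the corresponding continuity of $u^{0,0}$ established in Proposition \ref{prop:comp}.

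The main technical hurdle is verifying that the coefficient $c$ has enough time regularity to invoke classical Schauder-type solvability. Proposition \ref{cor:smoothness} gives H\"older continuity in $s$ uniformly in $t$ and joint continuity in $(t,s,\pi,Q)$, but not H\"older regularity in $t$. This gap can be closed by appealing to solvability results for linear parabolic Cauchy problems with merely continuous-in-time coefficients (via uniform Schauder estimates on an approximating sequence with mollified-in-time coefficients and passing to the limit), or alternatively by a bootstrap: first use Krylov--Safonov-type estimates to derive H\"older-in-time regularity of $V$, then use the PDE to transfer H\"older-in-time regularity to $\pa_\pi u^{0,0}$, and finally apply classical Schauder theory.
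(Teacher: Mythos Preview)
Your approach and the paper's share the key insight: once Theorem~\ref{theod0} makes $\partial_\pi u^{0,0}$ available as a known continuous function, the HJB equation becomes effectively linear in the second-order variables, and classical regularity follows from linear parabolic theory. The executions differ. You apply a Cole--Hopf transformation to obtain a linear equation for $V=U^{0,0}$ with variable first- and zeroth-order coefficients, then invoke Schauder-type solvability plus viscosity comparison. The paper instead works directly with $\hat V$: since the supremum term in \eqref{eq.equil.HJB.fictitious.1} depends only on $\partial_\pi\hat V/\partial_x\hat V$ (known and continuous by the extended Proposition~\ref{cor:smoothness}), that term is treated as a \emph{source}, leaving the pure heat equation $\partial_t\hat V+\tfrac{\sigma^2}{2}\partial_{ss}\hat V+f=0$. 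The Feynman--Kac formula \eqref{eq.equil.hatV.FK} defines a candidate $\bar V$; differentiating under the integral via the explicit Gaussian density yields $\partial_t\bar V$ and $\partial_{ss}\bar V$ jointly continuous in all variables, and the identification $\bar V=\hat V$ is made by writing the analogous Feynman--Kac representation \eqref{eq.equil.hatVde.FK} for the classical solutions $\hat V^{\delta,\epsilon}$ and passing to the limit via \eqref{eq.equil.hV.bV.conv} and dominated convergence.

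The Feynman--Kac route sidesteps precisely the time-regularity gap you flag in your last paragraph: it needs only continuity and growth bounds on the source, not H\"older regularity in $t$. Your proposed fixes (mollification in $t$, or a Krylov--Safonov bootstrap) are plausible but would add real work. Two further points you should also address: (i) the passage from the viscosity property of $u^{0,0}$ for \eqref{eq:u} in the variables $(t,s,\pi)$ to a viscosity property of $V(\cdot,\cdot,\pi,Q)$ for your linear PDE in $(t,s)$ alone is not automatic---the cleanest route is to take limits from the classical solutions $u^{\delta,\epsilon}$ using Corollary~\ref{cor:ue} together with the locally uniform convergence of $\partial_\pi u^{\delta,\epsilon}$ established in the proof of Theorem~\ref{theod0}; (ii) your argument for joint continuity of $\partial_{ss}u^{0,0}$ in $(t,s,\pi,Q)$ ``from the linear PDE above'' is circular as stated, since it presupposes joint continuity of $\partial_t V$; you would instead need interior Schauder estimates that are uniform over compacts in $(\pi,Q)$, which are available given the uniform H\"older bounds of Proposition~\ref{cor:smoothness}.
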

\begin{proof}
It is shown in the proof of Proposition \ref{prop:comp} that $\hat V$ is a continuous viscosity solution to \eqref{eq.equil.HJB.fictitious.1}--\eqref{eq.equil.HJB.fictitious.1.Tcond}. Let us show that $\hat V$ is in fact a classical solution of this equation.
We begin by noticing that the supremum in \eqref{eq.equil.HJB.fictitious.1} is attained at
$$
\nu= \frac{1}{2\eta}\left(\frac{\partial_\pi\hat{V}}{\partial_x\hat{V}} - s\right),
$$
and that
$$
\sup_{\nu\in\RR}\,[\nu\partial_\pi \hat V-\nu(s+ \eta \nu)\partial_x \hat V]
= \frac{1}{4\eta} \left(\frac{\partial_\pi\hat{V}}{\partial_x\hat{V}} - s\right)^2 \partial_x\hat V.
$$
Next, we deduce from \eqref{eq:defbarv}--\eqref{def:U}, \eqref{eq:pider}, and \eqref{growth:u}, that
\begin{equation}\label{eq.equil.barV.est}
\left(|\hat V| + |\partial_x\hat V| + |\partial_x\hat V|^{-1} + |\partial_{xx}\hat V| + |\partial_{\pi x}\hat V| + |\partial_{\pi}\hat V|\right)(t,s,\pi,x,Q)|
\leq C_1(Q) e^{C_2(Q)(|x|+|\pi s| + \pi^2)},
\end{equation}
with some locally bounded $C_1,C_2>0$.
Treating the nonlinear part of \eqref{eq.equil.HJB.fictitious.1} as a given source term, we notice that the latter is measurable and absolutely bounded by the right hand side of \eqref{eq.equil.barV.est}, with possibly different constants. Then, for any fixed $(\pi,x,Q)\in\RR^3$, the Feynman-Kac formula yields a classical solution $\bar V(\cdot,\cdot,\pi_0,x,Q)\in C^{1,2}([0,T)\times\RR)$ to the aforementioned modification of \eqref{eq.equil.HJB.fictitious.1}, satisfying
\begin{align}
&\bar V(t,s,\pi,x,Q) = \EE \left[\int_t^T \left( \frac{1}{4\eta} \left(\frac{\partial_\pi\hat{V}}{\partial_x\hat{V}} - s - \sigma W_{r-t}\right)^2 \partial_x\hat V\right)(r,s+\sigma W_{r-t},\pi,x,Q) dr \right.\nonumber\\ 
&\left.\phantom{\frac{\frac{1}{2}}{\frac{1}{2}}?????????????????????????????????????????}
+ \hat V(T,s+\sigma W_{T-t},\pi,x,Q)\right].\label{eq.equil.hatV.FK}
\end{align}
Indeed, using the fact that $\partial_\pi\hat V$, $ \partial_x\hat V$ and $\hat V$ are continuous in all variables (recall that Proposition \ref{cor:smoothness} has been extended to the case $\e=0$) and the explicit form of the Gaussian transition density, along with the growth estimate \eqref{eq.equil.barV.est} and Fubini's theorem, we can show that $\partial_t \bar V$ and $\partial_{ss} \bar V$ are well defined and continuous in $(t,s,\pi,x,Q)$.

Next, we recall the value function $\hat V^{\d,\e}$ of \eqref{eq:defbarv} for $\d,\e>0$. We fix an arbitrary $\pi_0\in\RR$ and choose $\e=\e_0(|\pi_0|)/2$, where $\e_0$ is defined in Theorem \ref{theod0}, so that the optimal control $\nu^*$ of the unconstrained hedging problem is absolutely bounded by $1/\e$, for all initial underlying inventories in an open neighborhood of $\pi_0$, and for all $(t,s,x,Q)\in[0,T]\times\RR^3$.
In particular,
\begin{equation}\label{eq.AddReg.rev.1}
\sup_{|\nu|\leq1/\e}\,[\nu\partial_\pi \hat V^{\d,\e}-\nu(s + \eta \nu)\partial_x \hat V^{\d,\e}]
=\sup_{\nu\in\RR}\,[\nu\partial_\pi \hat V^{\d,\e}-\nu(s + \eta \nu)\partial_x \hat V^{\d,\e}]
= \frac{1}{4\eta} \left(\frac{\partial_\pi\hat{V}^{\d,\e}}{\partial_x\hat{V}^{\d,\e}} - s\right)^2 \partial_x\hat V^{\d,\e},
\end{equation}
for all sufficiently small $\d>0$, all $(t,s,x,Q)\in[0,T]\times\RR^3$, and over an open neighborhood of $\pi_0$.

Recall that, as shown in the proof of Proposition \ref{prop:comp}, $\hat V^{\d,\e}$ is a classical solution to \eqref{eq.equil.HJB.fictitious.1}--\eqref{eq.equil.HJB.fictitious.1.Tcond}. In addition, it is easy to see that, for sufficiently small $\d>0$, \eqref{eq.equil.barV.est} holds with $\hat V^{\d,\e}$ in place of $\hat V$. Then, the Feynman-Kac and It\^o's formulas imply the following representation (for sufficiently small $\d>0$):
\begin{align}
&\hat V^{\d,\e}(t,s,\pi_0,x,Q) = \EE \left(\int_t^T \left( \d^2 s\partial_{\pi x}\hat V^{\d,\e} + \frac{\d^2}{2}s^2\partial_{xx} \hat V^{\d,\e}\right.\right.
\label{eq.equil.hatVde.FK}\\
&\left.\left.+\sup_{|\nu|\leq1/\e}\,[\nu\partial_\pi \hat V^{\d,\e}-\nu(s + \sigma W_{r-t} + \eta \nu)\partial_x \hat V^{\d,\e}]\right)(r,s+\sigma W_{r-t},\pi_0+\d B_{r-t},x,Q) dr\right.\nonumber\\ 
&\left.\phantom{\frac{\frac{1}{2}}{2}}+ \hat V(T,s+\sigma W_{T-t},\pi_0+\d B_{T-t},x,Q)\right).\nonumber
\end{align}
It follows from the proof of Theorem \ref{theod0} that, for any $t\in[0,T)$, as $\d\downarrow 0$, 
\begin{equation}\label{eq.equil.hV.bV.conv}
(\partial_{\pi x}\hat V^{\d,\e},\partial_{xx}\hat V^{\d,\e},\partial_{s}\hat V^{\d,\e},\partial_{\pi}\hat V^{\d,\e},\partial_{x}\hat V^{\d,\e},\hat V^{\d,\e})(t,\cdot)\rightarrow (\partial_{\pi x}\hat V,\partial_{xx}\hat V,\partial_{s}\hat V,\partial_{\pi}\hat V,\partial_{x}\hat V,\hat V)(t,\cdot),
\end{equation}
locally uniformly in $(s,x,Q)$ and over an open neighborhood of $\pi_0$. The above convergence, the equations \eqref{eq.equil.hatV.FK}, \eqref{eq.equil.hatVde.FK} and \eqref{eq.AddReg.rev.1}, and the dominated convergence theorem, yield
$$
\hat V^{\d,\e}(t,s,\pi_0,x,Q) \rightarrow \bar V(t,s,\pi_0,x,Q),
$$
as $\d\downarrow0$. Using \eqref{eq.equil.hV.bV.conv} again, we conclude that $\hat V = \bar V$.
In particular, we conclude that $\partial_t \hat V$ and $\partial_{ss} \hat V$ are well defined and continuous in $(t,s,\pi,x,Q)$. On the other hand, the extended Proposition \ref{cor:smoothness} yields the continuity of $\pa_\alpha \hat V$, for $\alpha=x,s,\pi,Q$. Hence, $\hat V\in C^{1,2,1,1,1}([0,T]\times\RR^4)$. The infinite differentiability of $\hat V$ in $x$ (and the continuity of each derivative) follows easily from \eqref{eq:defbarv}.
\qed
\end{proof}

\medskip

Our final goal in this subsection is to establish a convenient BSDE-type representation of the optimal control for $\d=\e=0$, which will be used in Section \ref{s.small}. To this end, we recall the probability measure $\QQ^{t,s,\pi,\d,\e}$, defined in \eqref{def:Q} for $\d\in [0,\bar \d]$ and $\e>0$. We define the probability measure $\QQ^{t,s,\pi,0,0}$ in the same way: 
\begin{align}
&\frac{d\QQ^{{t,s,\pi,0,0}}}{d\PP}
:= M^{t,s,\pi,0,0}_T,\label{def:Qe}\\
&M^{t,s,\pi,0,0}_l := \exp\left(\int_t^l \ZZ^{0,0}\left(r,S^{t,s}_r,\pi_r^{*,t,s,\pi,0,0}\right)dW_r
- \frac{1}{2}\int_t^l \left(\ZZ^{0,0}\left(r,S^{t,s}_r,\pi^{*,t,s,\pi,0,0}_r\right)\right)^2 dr\right),
\quad l\in[t,T],\\
& \ZZ^{0,0}(t,s,\pi)=\ZZ(t,s,\pi):=\sigma(\pa_s u^{0,0}(t,s,\pi)-\gamma(\pi+Q\pa_s P(t,s))).\label{def:ZZZ}
\end{align}
Indeed, estimate \eqref{lip:u} and Theorem \ref{theod0} imply that $\ZZ^{0,0}\left(r,S^{t,s}_r,\pi_r^{*,t,s,\pi,0,0}\right)$ is absolutely bounded, uniformly over $r\in[t,T]$, which, in particular, yields the martingale property of $M^{t,s,\pi,0,0}$. The following lemma shows that the latter process is a limit of $M^{t,s,\pi,\d,\e}$, defined in \eqref{eq.Mtspi.def} (recall also Lemma \ref{propdefmeasure}).

\begin{lemma}\label{le:M.conv}
For any $(t,s,\pi)\in[0,T]\times\RR^2$ and any $r\in[t,T]$, we have
$$
(L^2)\,\lim_{\e\downarrow0}\lim_{\d\downarrow0}M^{t,s,\pi,\d,\e}_r = M^{t,s,\pi,0,0}_r.
$$
\end{lemma}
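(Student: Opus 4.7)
The plan is to treat the two iterated limits in turn, working throughout with the exponential representation of $M^{\d,\e}$ provided by Lemma \ref{propdefmeasure}. The analytic substance is concentrated in the inner limit $\d\downarrow 0$; the outer limit in $\e$ is essentially trivial, thanks to the remark following Theorem \ref{theod0}.

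Fix a small $\e>0$ (with $\e\leq \e_0(|\pi|)$, $\e_0$ as in Theorem \ref{theod0}). I want to pass to the limit $\d\downarrow 0$ inside
\[
M^{\d,\e}_r = U^{\d,\e}(t,s,\pi)\exp\!\left(\int_t^r \ZZ^{\d,\e}_l\,dW_l + \d\!\int_t^r (\pa_\pi u^{\d,\e})_l\,dB_l - \tfrac{1}{2}\!\int_t^r (\ZZ^{\d,\e}_l)^2+\d^2(\pa_\pi u^{\d,\e})_l^2\,dl\right),
\]
where $\ZZ^{\d,\e}_l$ and $(\pa_\pi u^{\d,\e})_l$ abbreviate evaluation along $(l,S^{t,s}_l,\pi^{*,t,s,\pi,\d,\e}_l)$. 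The four ingredients I need are: (i) $U^{\d,\e}(t,s,\pi)\to U^{0,\e}(t,s,\pi)$ by Corollary \ref{cor:ue}; (ii) $\pi^{*,\d,\e}\to \pi^{*,0,\e}$ almost surely, uniformly on $[t,T]$, as already established in the proof of Theorem \ref{theod0}; (iii) $\pa_s u^{\d,\e}(l,\cdot,\cdot)\to \pa_s u^{0,\e}(l,\cdot,\cdot)$ locally uniformly, obtained by coupling the equi-Hölder continuity of $\pa_s u^{\d,\e}$ (uniform over $\d\in[0,\tilde\d]$ by Proposition \ref{cor:smoothness}) with the local uniform convergence in (i), via Arzelà--Ascoli and identification of subsequential limits against $\pa_s u^{0,\e}$; and (iv) $\d\,(\pa_\pi u^{\d,\e})_l = O(\d)$ uniformly in $l,\omega$, which is immediate from Theorem \ref{unifbound}.

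From (ii) and (iii), $\ZZ^{\d,\e}_l \to \ZZ^{0,\e}_l$ almost surely, uniformly in $l$, with both sides absolutely bounded (the inventory part by Theorem \ref{theod0}, the $\pa_s u$ part by the Lipschitz estimate \eqref{lip:u}). Dominated convergence then disposes of both Lebesgue integrals in the exponent. For the stochastic integral against $W$, Itô's isometry reduces $L^2$ convergence of $\int_t^r \ZZ^{\d,\e}_l dW_l$ to $L^2([t,r]\times\Omega)$ convergence of the bounded integrands, again by dominated convergence. The stochastic integral against $B$ satisfies $\|\d\!\int_t^r(\pa_\pi u^{\d,\e})_l dB_l\|_{L^2}^2 \leq \d^2 \EE\!\int_t^r (\pa_\pi u^{\d,\e})_l^2\,dl = O(\d^2)$ by (iv). Hence the exponent converges in probability. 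Since all integrands are absolutely bounded uniformly in $\d$, standard moment bounds for Doléans--Dade exponentials give $\sup_\d \EE|M^{\d,\e}_r|^p<\infty$ for every $p\geq 1$; uniform integrability of $|M^{\d,\e}_r|^2$ then upgrades convergence in probability to $L^2$.

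The outer limit in $\e$ is handled by the remark following Theorem \ref{theod0}: for $\e\leq \e_0(|\pi|)$, one has $\pa_\alpha u^{0,\e} \equiv \pa_\alpha u^{0,0}$ for $\alpha=s,\pi$ on the compact set in which the optimal trajectories stay, so the ODE \eqref{eq.optContZero.ODEforPi} yields $\pi^{*,0,\e}=\pi^{*,0,0}$ pathwise and $\ZZ^{0,\e}_l = \ZZ^{0,0}_l$ along the optimal path. The limit obtained in the previous step therefore already coincides with $M^{0,0}_r$ for every sufficiently small $\e$, and the outer limit is stationary.

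The main obstacle is step (iii). Corollary \ref{cor:ue} supplies only convergence of $u^{\d,\e}$ itself, whereas I need convergence of its $s$-derivative -- a transfer that fails in general for merely pointwise convergence. The quantitative equi-Hölder estimates from Proposition \ref{cor:smoothness} are what provide the compactness required to extract convergent subsequences; uniqueness of the classical limit $\pa_s u^{0,\e}$ then forces the whole family to converge, making the rest of the argument (dominated convergence plus a Doléans moment estimate) routine.
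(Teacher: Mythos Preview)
Your proof is correct and follows essentially the same route as the paper: both establish uniform $L^p$ bounds on $M^{\d,\e}_r$ and convergence of the exponent's components via the convergence of $\pi^{*,\d,\e}$ and of $\pa_\alpha u^{\d,\e}$ along the optimal trajectory (exactly your steps (ii)--(iv), which the paper records as \eqref{eq.Mconv.eq1}). The only cosmetic differences are that the paper passes from exponent to exponential using the elementary inequality $|e^x-e^y|\leq C|x-y|(e^x+e^y)$ with Cauchy--Schwarz rather than your uniform-integrability argument, and it handles the endpoint $r=T$ explicitly via uniform $L^2$-continuity of $M^{\d,\e}_\cdot$ at $T$, whereas your argument covers it implicitly through the uniform boundedness of the integrands.
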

\begin{proof}
First, we recall the representation in Lemma \ref{propdefmeasure} and notice that Theorem \ref{theod0} and estimates \eqref{lip:u}, \eqref{eq:boundstrat} imply the existence of a function $\d_0:\,[1,\infty)\rightarrow(0,\infty)$ s.t.
$$
\sup_{(\d,\e)\in\left((0,\d_0(p)]\times(0,\e_0(|\pi|)]\right)\cup\{(0,0)\}} \EE \left( M^{t,s,\pi,\d,\e}_r \right)^p <\infty,\quad \forall\,p\geq1.
$$
In addition, the last statement of Theorem \ref{theod0}, the estimates \eqref{lip:u}, \eqref{eq:boundstrat}, and the uniform boundedness of the optimal control (as in the first statement of Theorem \ref{theod0}), yield
\begin{align}
(L^4)&\,\lim_{\e\downarrow0}\lim_{\d\downarrow0}\sup_{r\in[t,T]}|\pi_r^{*,t,s,\pi,\d,\e} - \pi_r^{*,t,s,\pi,0,0}|
=0\label{eq.Mconv.eq1}\\
&= \lim_{\e\downarrow0}\lim_{\d\downarrow0} \sup_{r\in[t,T-\varepsilon]}| \pa_\alpha u^{\d,\e}\left(r,S^{t,s}_r,\pi_r^{*,t,s,\pi,\d,\e}\right) - \pa_\alpha u^{0,0}\left(r,S^{t,s}_r,\pi_r^{*,t,s,\pi,0,0}\right)|,\nonumber
\end{align}
for $\alpha=s,\pi$ and for any $\varepsilon\in(0,T-t)$.
To estimate $\EE\left(M^{t,s,\pi,\d,\e}_r - M^{t,s,\pi,0,0}_r\right)^2$ for $r\in[t,T)$, it suffices to apply the inequality
$$
|e^x - e^y| \leq C |x-y| (e^{x}+e^y),\quad x,y\in\RR
$$
(which holds for a sufficiently large constant $C>0$), along with the Cauchy-Schwartz inequality and It\^o's formula (applied to the fourth power of a Brownian integral, to compute its expectation). To cover the case $r=T$, it suffices to notice that $M^{t,s,\pi,\d,\e}_\cdot$ is $L^2$-continuous at $T$ uniformly over $\d\in[0,\d_1]$ and $\e\in[0,\e_1]$, with sufficiently small $\d_1,\e_1>0$.
\qed
\end{proof}

\smallskip

Using the above constructions and Lemma \ref{le:M.conv}, we can now derive the desired representation of the optimal control for $\d=\e=0$ via a conditional expectation. To this end, we define
\begin{align}
\kappa&:=\sqrt{\frac{\sigma^2 \gamma}{2\eta}},\quad m(t):=-\kappa+\frac{2\kappa}{1-\frac{l-\sqrt{2\sigma^2 \gamma\eta} }{l+\sqrt{2\sigma^2 \gamma\eta}}e^{-2\kappa(T-t)}}\label{eq.kappa.m.def}\\
&=\left\{
\begin{array}{ll}
{\kappa \coth\left( \kappa(T-t)+\frac{1}{2}\ln \left(\frac{l+\sqrt{2\gamma \sigma^2\eta}}{l-\sqrt{2\gamma \sigma^2\eta}}\right)\right)} & {\mbox{ if }l-\sqrt{2\gamma \sigma^2\eta}>0,}\\
{\kappa \tanh\left( \kappa(T-t)+\frac{1}{2}\ln\left(\frac{l+\sqrt{2\gamma \sigma^2\eta}}{-l+\sqrt{2\gamma \sigma^2\eta}}\right)\right)} & {\mbox{ if }l-\sqrt{2\gamma \sigma^2\eta}\leq 0,}
\end{array}
\right.\nonumber
\end{align}
and note that $m$ is the continuous (i.e. non-exploding) solution to the ODE
$$-m'(t)+m^2(t)=\kappa^2,\quad t\in[0,T],\quad m(T)=\frac{l}{2\eta}.$$
In addition, we define the (Borel measurable) function $R$ via
\begin{equation}\label{eq.R.def}
\pa_\pi u^{0,0}(t,s,\pi)= 2\gamma\eta\, m(t) \pi + e^{\int_0^t m(r)dr} R(t,s,\pi).
\end{equation}
Notice that, thanks to \eqref{eq:optcont}, the optimal control for $\d=\e=0$ can be expressed in a feedback form via $R$.

\begin{proposition}\label{lem:boundR}
The function $R$ is absolutely bounded on $[0,T]\times\RR^2$. Moreover, for any $(t_0,s_0,\pi_0)\in[0,T]\times\RR^2$, the process $R_t:=R(t,S^{t_0,s_0}_t,\pi^{*,t_0,s_0,\pi_0,0,0}_t)$, for $t\in[t_0,T]$, is continuous and satisfies:
\begin{align}
&R_t= -\gamma\sigma^2 \EE^{\QQ^{t_0,s_0,\pi_0,0,0}}_t\left[\int_t^Te^{-\int_0^rm(v)dv} \left(\pa_s u^{0,0}(r,S^{t_0,s_0}_r,\pi^{*,t_0,s_0,\pi_0,0,0}_r)-\gamma Q\pa_sP(r,S^{t_0,s_0}_r)\right)dr\right].\label{def:R}
\end{align}
\end{proposition}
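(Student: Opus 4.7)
\medskip
\noindent\textbf{Proof plan.}
The strategy is to first derive \eqref{def:R} along the optimal trajectory by starting from the FBSDE of Proposition \ref{theodelta} for $\d,\e>0$, exploiting the Riccati identity satisfied by $m$ to reduce to a BSDE with bounded driver and zero terminal condition, and then passing to the limit $\d\downarrow 0$. Fix $(t_0,s_0,\pi_0)\in[0,T]\times\RR^2$ and choose $\e\in(0,\e_0(|\pi_0|)]$, so that, by Theorem \ref{theod0}, the truncation $\phi_\e$ in \eqref{eq:forw} is inactive and coincides with the identity. For $\d\in(0,\overline\d]$, define
$R^{\d,\e}_t:=e^{-\int_0^t m(v)dv}\bigl(Y^1_t-2\gamma\eta\,m(t)Y^2_t\bigr)$
with $(Y^1,Y^2)$ as in Proposition \ref{theodelta}. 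A direct It\^o calculation based on \eqref{eq:systbsde2} and \eqref{eq:forw} shows that the coefficient of $Y^2_r$ in the drift of $R^{\d,\e}$ reduces to $-\gamma^2\sigma^2-2\gamma\eta\,m'(r)+2\gamma\eta\,m(r)^2$, which vanishes by the Riccati equation $m'=m^2-\kappa^2$ together with $2\gamma\eta\kappa^2=\gamma^2\sigma^2$. Likewise, $R^{\d,\e}_T=e^{-\int_0^T m(v)dv}(\gamma l-2\gamma\eta\,m(T))Y^2_T=0$ thanks to $m(T)=l/(2\eta)$. The remaining equation has bounded driver (up to an $O(\d^2)$ correction) and, after taking conditional expectation under $\QQ^{\d,\e}$ to kill the martingale terms, becomes
$$R^{\d,\e}_t=\EE^{\QQ^{\d,\e}}_t\!\left[\int_t^T e^{-\int_0^r m(v)dv}\bigl(-\gamma\sigma g^{\d,\e}_r+O(\d^2)Y^1_r\bigr)dr\right].$$

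Passing to the limit $\d\downarrow 0$ then yields \eqref{def:R}. Indeed, Theorem \ref{theod0} (together with the remark following it, which gives $\pa_\alpha u^{0,\e}=\pa_\alpha u^{0,0}$ for $\alpha=s,\pi$ on the relevant region) provides the locally uniform convergence $\pa_\alpha u^{\d,\e}\to\pa_\alpha u^{0,0}$ and the a.s. uniform convergence $\pi^{*,\d,\e}\to\pi^{*,0,0}$; the $O(\d^2)Y^1_r$ contribution vanishes by Theorem \ref{unifbound}; and Lemma \ref{le:M.conv} supplies $L^2$-convergence of the densities $d\QQ^{\d,\e}/d\PP\to d\QQ^{0,0}/d\PP$. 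Combined with the uniform boundedness of $g^{\d,\e}$ (from \eqref{lip:u} and Assumption \ref{assume:P}), a dominated-convergence argument, carried out by rewriting the conditional expectation as an integral against the Radon-Nikodym density on the original probability space, justifies passing to the limit inside the conditional expectation.

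Boundedness of $R$ is then immediate from \eqref{def:R}: the integrand is absolutely bounded by $C(|\pa_s u^{0,0}|+\gamma|Q||\pa_s P|)$, which is uniformly bounded by \eqref{lip:u} and Assumption \ref{assume:P}, while $m$ is bounded on $[0,T]$. Since $(t_0,s_0,\pi_0)$ is arbitrary, evaluating at $t_0=t$, $s_0=s$, $\pi_0=\pi$ yields $|R(t,s,\pi)|\le C$ uniformly on $[0,T]\times\RR^2$. Continuity of $t\mapsto R_t$ follows from the identity $R_t=e^{-\int_0^t m(v)dv}(\pa_\pi u^{0,0}(t,S_t,\pi^*_t)-2\gamma\eta\,m(t)\pi^*_t)$ combined with the joint continuity of $\pa_\pi u^{0,0}$ (Theorem \ref{theod0}) and the continuity of $t\mapsto\pi^*_t$ as the solution of \eqref{eq.optContZero.ODEforPi}. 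The main technical hurdle is the limit passage inside a conditional expectation under the $\d$-dependent probability $\QQ^{\d,\e}$: this requires a careful combination of the $L^2$-convergence of densities in Lemma \ref{le:M.conv} with uniform-in-$\d$ estimates on both $g^{\d,\e}$ and $Y^1$ so that dominated convergence applies; once this is in place, the rest of the argument is largely mechanical.
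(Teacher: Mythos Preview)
Your approach is essentially the paper's, just with the order of operations reversed: the paper first passes to the limit $\d,\e\downarrow 0$ at the level of the martingale $V^{\d,\e}_t=\pa_\pi u^{\d,\e}(t,S_t,\pi^{*,\d,\e}_t)-\int_{t_0}^t(\cdots)dr$ from Proposition \ref{theodelta} (obtaining a $\QQ^{0,0}$-martingale $V^{0,0}$), and only then applies the Riccati identity to compute $dR_t$ at $\d=\e=0$; you apply the Riccati cancellation already at the $\d,\e>0$ level and pass to the limit in the resulting representation for $R^{\d,\e}$. Both routes rely on the same ingredients (the Riccati identity, Lemma \ref{le:M.conv}, and the uniform bounds from Theorem \ref{unifbound} and \eqref{lip:u}), and the boundedness and continuity arguments are the same.

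One imprecision: your claim that the truncation $\phi_\e$ in \eqref{eq:forw} is inactive for $\d>0$ is not justified by Theorem \ref{theod0}, which only yields the bound $1/\e_0(|\pi_0|)$ on the optimal control at $\d=0$. For $\d>0$, Theorem \ref{unifbound} gives $|Y^1_t|\le C(1+|\pi_0|+\d\sup_r|B_r-B_{t_0}|)$, so the truncation may be active on a set of positive probability. This is harmless for the limit argument---you can either keep the term $2\gamma\eta\,m(t)\bigl(\phi_\e(Y^1_t/(2\gamma\eta))-Y^1_t/(2\gamma\eta)\bigr)$ in the drift of $R^{\d,\e}$ and note it vanishes as $\d\downarrow 0$ (by Theorem \ref{unifbound} the indicator of the truncation region goes to zero a.s., and dominated convergence applies), or simply absorb it into your $O(\d^2)$-type remainder---but the sentence as written overstates what Theorem \ref{theod0} provides.
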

\begin{proof}
First, we note that the right hand side of \eqref{def:R} is absolutely bounded by a constant. Taking $t=t_0$, we deduce the absolute boundedness of the function $R$. Thus, it only remains to establish \eqref{def:R}.

Recall that, as follows from Proposition \ref{theodelta}, for $\d\in(0,\overline\d]$ and $\e>0$, the process $V^{\d,\e}$,
$$
V^{\d,\e}_t:=\pa_\pi  u^{\d,\e}(t,S^{t_0,s_0}_t,\pi^{*,t_0,s_0,\pi_0,\d,\e}_t) - \int_{t_0}^t \gamma l\d^2 \pa_\pi u^{\d,\e}(r,S^{t_0,s_0}_r,\pi^{*,t_0,s_0,\pi_0,\d,\e}_r)+\gamma \sigma\ZZ^{\d,\e}(r,S^{t_0,s_0}_r,\pi^{*,t_0,s_0,\pi_0,\d,\e}_r)dr,
$$
is a $\QQ^{t_0,s_0,\pi_0,\d,\e}$-martingale on $[t_0,T]$. Recalling the definition of $\ZZ^{\d,\e}$ (see \eqref{eq.Zde.def}) and using \eqref{eq.Mconv.eq1}, we conclude that, for any $t\in[t_0,T)$, there exists
$$
V^{0,0}_t:= (L^2)\,\lim_{\e\downarrow0}\lim_{\d\downarrow0} V^{\d,\e}_t
= \pa_\pi  u^{0,0}(t,S^{t_0,s_0}_t,\pi^{*,t_0,s_0,\pi_0,0,0}_t) - \int_{t_0}^t \gamma \sigma\ZZ^{0,0}(r,S^{t_0,s_0}_r,\pi^{*,t_0,s_0,\pi_0,0,0}_r)dr.
$$
Using the above convergence and Lemma \ref{le:M.conv}, it is easy to deduce that $V^{0,0}$ is a bounded $\QQ^{t_0,s_0,\pi_0,0,0}$-martingale on $[t_0,T)$ and, in particular, can be extended to $[t_0,T]$. Since the filtration is Brownian, there exists a continuous modification of this martingale.
Using \eqref{eq.R.def}, we obtain
$$
R_t = e^{-\int_0^t m(r)dr}\left(V^{0,0}_t - 2\gamma\eta  m(t) \pi^{*,t_0,s_0,\pi_0,0,0}_t 
+ \int_{t_0}^t \gamma \sigma\ZZ^{0,0}(r,S^{t_0,s_0}_r,\pi^{*,t_0,s_0,\pi_0,0,0}_r)dr\right).
$$
Then, by straightforward computations, we have
\begin{align*}
dR_t &= e^{-\int_0^t m(r)dr}dV^{0,0}_t - e^{-\int_0^t m(r)dr}V^{0,0}_t m(t) dt + 2\gamma\eta e^{-\int_0^t m(r)dr} m^2(t) \pi^{*,t_0,s_0,\pi_0,0,0}_t dt\\
 &= e^{-\int_0^t m(r)dr}dV^{0,0}_t
+ e^{-\int_0^t m(r)dr} \gamma \sigma^2\left(\pa_s u^{0,0}(t,S^{t_0,s_0}_t,\pi^{*,t_0,s_0,\pi_0,0,0}_t)-\gamma Q\pa_sP(t,S^{t_0,s_0}_t)\right)dt.
\end{align*}
Recalling that $V^{0,0}$ is a bounded continuous martingale under $\QQ^{t_0,s_0,\pi_0,0,0}$
and that
$$R(T,s,\pi)=(\pa_\pi u^{0,0}(T,s,\pi)- 2\gamma\eta\, m(T) \pi )e^{-\int_0^T m(r)dr} =e^{-\int_0^T m(r)dr}(\gamma l \pi-\gamma l \pi)=0,$$
 we conclude the proof.
\qed
\end{proof}

\begin{remark}
The representation \eqref{def:R} is to be compared to \cite[Theorem 3.1]{BSV} where the authors study a linear-quadratic optimization problem with price impact. Due to the local structure of the optimization objective, they are able to explicitly find the optimal strategy of the investor which consists in following a convolution of the future target position with an explicit kernel. In the exponential utility framework, considered herein, the problem is not linear-quadratic anymore. However, \eqref{def:R} indicates that the investor follows a similar convolution of the target position $-Q\pa_s P_t$ shifted with $\pa_s u$. The presence of $\pa_s u$ means that this equality does not provide an explicit solution to the optimization problem. However, the representation \eqref{def:R} allows us to control the effect of $\pa_s u$, and in Section \ref{s.small} we show that the impact of $\pa_s u$ can be controlled for small $\eta$, without decreasing the objective value at the main order of accuracy.
\end{remark}

\subsection{Utility indifference price}

Recall the definition of utility indifference price (cf. \cite{Carmona} and references therein).
\begin{definition}\label{def:indif.price}
For any initial condition $(s,\pi,x,Q)\in\RR^4$ at time $t\in[0,T]$, and any purchase quantity of the option $\Delta Q\in\RR$, the number $P^*(t,s,\pi,x,Q,\Delta Q)$ is the utility indifference price of $\Delta Q$ units of the option with payoff $H(S_T)$ if
$$
\hat{V}^{0,0}(t,s,\pi,x,Q) = \hat{V}^{0,0}(t,s,\pi,x + P^*(t,s,\pi,x,Q,\Delta Q),Q-\Delta Q),
$$
where $\hat V$ is defined in \eqref{eq.repAgent.Obj.def}.
\end{definition}

Recall that the utility indifference price is a natural notion of price in the options' markets. In particular, since the objective in \eqref{eq.repAgent.Obj.def} is nondecreasing in the option's payoff, the resulting utility indifference price is monotone w.r.t. the payoff: i.e., if one payoff function dominates another one from above everywhere, the price of the former is higher than the price of the latter. This, in turn, implies that the utility indifference price is free of static arbitrage: i.e., the price per unit is always between the lower and the upper bounds of $H$.\footnote{Note that the aforementioned monotonicity of the objective fails in the hedging problems with linear-quadratic objectives, causing potential static arbitrages in the indifference prices produced by such models.}

\smallskip

In view of \eqref{eq:defue} and \eqref{eq:defbarv}, we have
$$
P^*(t,s,\pi,x,Q,\Delta Q) = \Delta Q P(t,s) - \frac{1}{\gamma} (u^{0,0}(t,s,\pi,Q)-u^{0,0}(t,s,\pi,Q-\Delta Q)),
$$
where we bring back the dependence on $Q$ in related quantities.
To reduce the number of variables, we can assume that the option is traded in small quantity (at any fixed time). Then, we only need to study the marginal utility indifference price (also known as Davis price, see \cite{Davis}, \cite{Kramkov}, and references therein), which is defined as
\begin{align*}
p^*(t,s,\pi,Q)&:= \lim_{\Delta Q\rightarrow0} \frac{P^*(t,s,\pi,x,Q,\Delta Q)}{\Delta Q}
=P(t,s)-\frac{1}{\gamma}\pa_Q u^{0,0}(t,s, \pi,Q)=\EE^{\QQ^{t,s,\pi,Q,0,0}}[H(S_T)],
\end{align*}
where the last equality follows from \eqref{eq:Qder} (which is valid for $\e=0$ in view of the first statement of Theorem \ref{theod0}) and the fact that
\begin{equation}\label{eq.Q0T}
\frac{d\QQ^{t,s,\pi,Q,0,0}}{d\PP}
=\frac{e^{\Psi^\d(t,\pi,\nu^{*,t,s,\pi,Q,0,0})+Q\Gamma(t,s)}}{U^{0,0}(t,s,\pi,Q)}.
\end{equation}
The latter fact follows from \eqref{eq.MT.de.def}, Lemma \ref{le:M.conv}, and the last statement of Theorem \ref{theod0}.
Thus, the equilibrium price is the expectation under an equivalent probability measure, similar to the classical theory. 
We note that this measure depends on the claim and on the current positions of the agent in both the option and the underlying. 
Thanks to the definition of $\QQ^{t,s,\pi,Q,0,0}$, we also have 
\begin{align}
&p^*(t,s,\pi,Q)=\EE^{\QQ^{t,s,\pi,Q,0,0}}[H(S_T)]=P(t,s)\label{eq.expansion.pstar.rep1}\\
&+\sigma \EE\left[e^{\int_t^T \ZZ^{0,0}\left(r,S^{t,s}_r,\pi^{*,t,s,\pi,Q,0,0}_r,Q\right)dW_r-\frac{1}{2} \int_t^T \left(\ZZ^{0,0}\left(r,S^{t,s}_r,\pi^{*,t,s,\pi,Q,0,0}_r,Q\right)\right)^2dr} \int_t^T \pa_s P(r,S^{t,s}_r)dW_r\right],\notag
\end{align}
where $\ZZ^{0,0}(t,s,\pi,Q)=\sigma(\pa_s u^{0,0}(t,s,\pi,Q)-\gamma(\pi+Q\pa_s P(t,s)))$ is the function defined in \eqref{def:ZZZ}.


\section{Small impact expansion}
\label{s.small}

In the previous section, we have established various theoretical properties of the (log-) value function $u$, the optimal hedging strategy, and the marginal utility indifference price $p^*$, for an option with payoff $QH(S_T)$ in the Almgren-Chriss model. We have also derived useful representations for these quantities, which, in particular, allow for numerical approximations (see e.g. \eqref{eq:u}). 
However, the explicit expressions, that would provide additional insights into the behavior of $u$ and $p^*$, are not available. In this section, we derive an explicit expansion for $p^*$ assuming $\eta\to 0$. Note that, for $\eta=0$, the underlying market turns into the complete Bachelier model, where the option can be hedged perfectly by the standard delta-hedging strategy, and the marginal utility indifference price (as well as any reasonable notion of price) of the option is given by $P(t,s)$. Naturally, we would like to find the leading order of the difference between $P(t,s)$ and $p^*$ as $\eta\to 0$.

\smallskip

First, we make an additional modeling convention.
Namely, we claim that it is important to rescale the penalty coefficient for non-liquidation, $l$, appearing in \eqref{eq.repAgent.Obj.def}. Indeed, this coefficient is meant to reflect the losses associated with liquidating the remaining inventory in the underlying. The latter losses are due to the presence of price impact in the underlying market, hence, they should vanish as $\eta\to0$. Thus, in this section we make the following convention:
\begin{equation}\label{rescalingliquidation}
l = \bar l \eta,
\end{equation}
for some $\bar l\geq0$.
This convention implies that we should replace $l$ by $\bar l \eta$ in the formulas established in the previous section. In particular, since $\eta$ is small, the function $m$ defined in \eqref{eq.kappa.m.def} satisfies, in the new notation:
\begin{align*}
m(t)=\kappa \tanh\left( \kappa(T-t)+\frac{1}{2}\ln\left(\frac{\bar l\sqrt{\eta}+\sqrt{2\gamma \sigma^2}}{-\bar l\sqrt{\eta}+\sqrt{2\gamma \sigma^2}}\right)\right),
\end{align*}
and it solves 
$$-m'(t)+m^2(t)=\kappa^2,\, m(T)=\frac{\bar l}{2},$$
with $\kappa$ defined in \eqref{eq.kappa.m.def}. Note that $\eta\to0$ is equivalent to $\kappa\to\infty$.

\smallskip

For convenience, we often drop the superscript `$(\d,\e)$', as we mostly consider $\d=\e=0$ in this section (whenever this is not the case, the superscripts will appear).
In addition, to simplify the derivations, we will often omit the dependence on the initial condition $(s,\pi,Q)$, when it causes no confusion, and introduce
$$
P_t:=P(t,S_t),\quad \pa_s P_t:=\pa_s P(t,S_t).
$$

\medskip

Before proceeding, we need to establish a BSDE representation for $\pa_s u$, which is similar to \eqref{eq.R.def}--\eqref{def:R} established for $\pa_\pi u$. To obtain the desired representation, we need to make a stronger assumption on the option's payoff $H$.

\begin{ass}\label{ass:H.C3}
$H'$ is globally Lipschitz-continuous.
\end{ass}

Note that the above assumption implies that $\pa_{ss}P$ is absolutely bounded on $[0,T]\times\RR$ (in addition to the properties implied by Assumption \ref{assume:P}).

\begin{proposition}\label{prop:us.rep}
Let us fix an arbitrary initial condition $(s_0,\pi_0,Q)\in\RR^3$ at time $t_0\in[0,T]$ and drop the superscript $(t_0,s_0,\pi_0,Q)$.
Then, under Assumptions \ref{assume:P} and \ref{ass:H.C3}, the following representation holds for $\pa_s u_t:=\pa_s u(t,S_t,\pi^*_t)$:
\begin{align}
&\pa_s u_t = Q\sigma^2 \gamma^2\, \EE_t^{\QQ^{t,s,\pi}}\left.\left[\int_t^Te^{\int_r^t Q\sigma^2 \gamma\pa_{ss}P_vdv }\pa_{ss}P_r(\pi^*_r+Q\pa_s P_r)dr\right]\right|_{(s,\pi)=\left(S_t,\pi^*_t\right)},
\quad t\in[t_0,T].\label{eq:rep2us}
\end{align}
where $\QQ^{t,s,\pi}$ is the probability defined in \eqref{def:Qe}.
\end{proposition}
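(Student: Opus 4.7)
The plan is to follow the template of Proposition~\ref{lem:boundR} applied to the $s$-derivative: starting from the sensitivity formula \eqref{eq:sder} in the approximating problem ($\d,\e>0$), use Ito's formula together with the heat equation satisfied by $P$ to rewrite $\pa_s u^{\d,\e}$ as a measure-changed expectation involving $\ZZ^{\d,\e}$; substitute the explicit form of $\ZZ^{\d,\e}$ to obtain an implicit linear integral equation for $V^{\d,\e}_t:=\pa_s u^{\d,\e}(t,S_t,\pi^{*,\d,\e}_t)$ under $\QQ^{t_0,s_0,\pi_0,Q,\d,\e}$; read it as a linear BSDE and solve it by an exponential integrating factor; and pass to the limit $\d,\e\downarrow 0$ using Theorem~\ref{theod0} and Lemma~\ref{le:M.conv}. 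The role of the stronger Assumption~\ref{ass:H.C3} is to guarantee that $\pa_{ss}P$ exists and is globally bounded on $[0,T]\times\RR$, which is the key technical input for the Ito step.

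More precisely, applying Ito to $(r,s)\mapsto\pa_s P(r,s)$ and using that $P$ solves the backward heat equation gives the identity $\pa_s P(T,S_T)-\pa_s P(t_0,s_0)=\sigma\int_{t_0}^T \pa_{ss}P(r,S_r)\,dW_r$. Dividing \eqref{eq:sder} by $U^{\d,\e}$ to change measure to $\QQ^{\d,\e}$ (cf.\ \eqref{def:Q}), combining with the above identity and the Girsanov decomposition $dW=d\tilde W+\ZZ^{\d,\e}\,dr$ of \eqref{eq.Zde.def}, and using the boundedness of $\pa_{ss}P$ to kill the $\tilde W$-martingale term in expectation, yields
\begin{equation*}
\pa_s u^{\d,\e}(t_0,s_0,\pi_0)=-Q\gamma\sigma\,\EE^{\QQ^{t_0,s_0,\pi_0,Q,\d,\e}}\!\left[\int_{t_0}^T \pa_{ss}P_r\,\ZZ^{\d,\e}_r\,dr\right].
\end{equation*}
Applying this formula at the random datum $(t,S_t,\pi^{*,\d,\e}_t)$ and invoking the flow identity \eqref{eq:constproba} then recasts it as the conditional-expectation identity $V^{\d,\e}_t=-Q\gamma\sigma\,\EE^{\QQ^{t_0,s_0,\pi_0,Q,\d,\e}}_t[\int_t^T \pa_{ss}P_r\,\ZZ^{\d,\e}_r\,dr]$, valid for every $t\in[t_0,T]$.

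Substituting $\ZZ^{\d,\e}_r=\sigma(V^{\d,\e}_r-\gamma(\pi^{*,\d,\e}_r+Q\pa_s P_r))$ turns the previous display into an implicit linear integral equation, equivalent to the linear BSDE
\begin{equation*}
dV^{\d,\e}_t=\bigl(Q\gamma\sigma^2\pa_{ss}P_t\,V^{\d,\e}_t-Q\gamma^2\sigma^2\pa_{ss}P_t(\pi^{*,\d,\e}_t+Q\pa_s P_t)\bigr)\,dt+dN_t,\qquad V^{\d,\e}_T=0,
\end{equation*}
under $\QQ^{t_0,s_0,\pi_0,Q,\d,\e}$, for some martingale $N$. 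Multiplying by the adapted, bounded integrating factor $\Phi^{\d,\e}_t:=\exp(-\int_{t_0}^t Q\gamma\sigma^2\pa_{ss}P_r\,dr)$ eliminates the linear drift, so that applying Ito to $\Phi^{\d,\e}_\cdot V^{\d,\e}_\cdot$ and taking conditional expectation with $V^{\d,\e}_T=0$ yields the closed-form analogue of \eqref{eq:rep2us} at the level of the approximating problem.

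Finally, I would pass to the limit $\d,\e\downarrow 0$ using the a.s.\ uniform convergence $\pi^{*,\d,\e}\to\pi^{*,0,0}$ of Theorem~\ref{theod0}, the $L^2$-convergence of the Radon--Nikodym derivatives of $\QQ^{\d,\e}$ from Lemma~\ref{le:M.conv}, and the absolute boundedness of $\pa_{ss}P$ and $\pa_s P$ together with the uniform-in-$(\d,\e)$ bound on $\pi^{*,\d,\e}$ from Theorem~\ref{unifbound}, which supply the required dominating integrands. The main obstacle is precisely this last step: the conditional expectations are under the $(\d,\e)$-dependent measure, so one must rewrite them as $\PP$-conditional expectations against $M^{\d,\e}_T/M^{\d,\e}_t$, combine the $L^2$-convergence of the latter with the pointwise convergence of the uniformly bounded driver (which involves both the integrating factor $\Phi^{\d,\e}$ and the source term $\pa_{ss}P_r(\pi^{*,\d,\e}_r+Q\pa_s P_r)$), and verify that the limit matches the right-hand side of \eqref{eq:rep2us}.
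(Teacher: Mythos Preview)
Your argument is correct and follows the same core idea as the paper: rewrite $\pa_s\Gamma$ via It\^o's formula for $\pa_sP$, change to the measure $\QQ$ to get $\pa_s u_t=-Q\gamma\sigma\,\EE^{\QQ}_t[\int_t^T\pa_{ss}P_r\,dW_r]$, use Girsanov to replace $dW_r$ by $\ZZ_r\,dr$, substitute $\ZZ_r=\sigma(\pa_s u_r-\gamma(\pi^*_r+Q\pa_sP_r))$, recognize the resulting equation as a linear BSDE for $\pa_s u$, and solve it with the integrating factor $\exp(-\int_{t_0}^t Q\gamma\sigma^2\pa_{ss}P_r\,dr)$.

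The one genuine difference is that you carry out the computation at the level of the $(\d,\e)$-approximation and then pass to the limit, whereas the paper works directly at $\d=\e=0$. The paper's shortcut is to observe (at the start of its proof) that for $\d=0$ and all sufficiently small $\e>0$ the optimal control, the process $\pi^*$, and $\ZZ$ coincide with their $\e=0$ counterparts by Theorem~\ref{theod0}; hence Proposition~\ref{cor:smoothness} (in particular \eqref{eq:sder}) and the flow identity \eqref{eq:constproba} of Remark~\ref{rem:Q.consist} are available \emph{directly} for $\d=\e=0$, and \eqref{eq.Q0T} gives the density of $\QQ^{0,0}$. This eliminates your entire final limit step and the attendant domination/convergence bookkeeping. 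Your route is not wrong---the convergence tools you cite (Theorem~\ref{theod0}, Lemma~\ref{le:M.conv}, Theorem~\ref{unifbound}) do supply what is needed---but it is longer than necessary and the ``main obstacle'' you flag simply does not arise in the paper's argument.
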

\begin{proof}
First, recalling \eqref{def:Qe} and noticing that for all $\e\leq \e_0$, with some deterministic $\e_0>0$, we have
$$
\pi_t^{*,0,\e} = \pi_t^{*,0,0},\quad
\ZZ^{0,\e}\left(t,S_t,\pi_t^{*,0,\e}\right)
= \ZZ^{0,0}\left(t,S_t,\pi_t^{*,0,0}\right),
$$
for all $t\in[t_0,T]$, we conclude that Remark \ref{rem:Q.consist} applies for $\e=0$.
Then, using \eqref{eq:sder}, \eqref{eq.Q0T}, \eqref{eq:constproba}, and the fact that $W$ has a drift under $\QQ^{t_0,s_0,\pi_0}$, we obtain:
\begin{align*}
\pa_s u_t&= -\sigma\gamma Q\,\EE^{\QQ^{t,s,\pi}}_t\left.\left[\int_t^T \pa_{ss}P_r dW_r\right]\right|_{(s,\pi)=\left(S_t,\pi^*_t\right)}
= -\sigma\gamma Q\,\EE^{\QQ^{t_0,s_0,\pi_0}}_t\left[\int_t^T \pa_{ss}P_r dW_r\right]\\
&=-\sigma^2\gamma Q\,\EE^{\QQ^{t_0,s_0,\pi_0}}_t\left[\int_t^T \pa_{ss}P_r(\pa_s u_r-\gamma(\pi^*_r+Q\pa_s P_r))dr\right].
\end{align*}
Thus, $\pa_s u$ satisfies 
$$
d(\pa_s u_t)=\sigma^2\gamma Q \pa_{ss}P_t (\pa_s  u_t-\gamma(\pi^*_t+Q\pa_s P_t))dt + d\tilde M_t,
\quad \pa_s u_T=0,
$$
where $\tilde M$ is a $\QQ^{t_0,s_0,\pi_0}$ martingale. We solve this BSDE for $\pa_s u$ and apply Remark \ref{rem:Q.consist} once more to obtain \eqref{eq:rep2us}.
\qed
\end{proof}

\medskip

The following Lemma constitutes the main technical result for computing the desired price expansion.
Its proof is postponed to Appendix B.

\begin{lemma}\label{lem:limitfunctional}
Let $\a$ and $\beta$ be adapted continuous and bounded processes (independent of $\eta$). Denote by $u$ and $\pi^*$, respectively, the log-value function \eqref{eq:defue} and the associated optimal strategy, for an arbitrary (fixed) initial condition $(s,\pi,Q)\in\RR^3$ at time $t=0$. Define $\Gamma$ by 
$$
d\Gamma_t=\a_t(\pa_s u_t-\gamma(\pi_t^*+Q\pa_s P_t))dt+\beta_t d\tilde W_t,
$$
with arbitrary (fixed) $\Gamma_0\in\RR$, and with $\tilde W_t$ being a $\QQ:=\QQ^{0,s,\pi}$-Brownian motion. 
Then, under Assumptions \ref{assume:P} and \ref{ass:H.C3}, as $\eta\rightarrow0$,
\begin{align}
&\EE^\QQ\left[\int_0^T\Gamma_t(\pi_t^*+Q\pa_s P_t)dt\right]=\frac{1}{\kappa}\Gamma_0(\pi+Q\pa_s P(0,s))+\int_0^T\frac{1}{\kappa} \EE\left[Q\sigma\pa_{ss}P_r\beta_r \right]dr\notag\\
&\qquad\qquad+ Q\gamma\sigma^2\int_0^T\EE^\QQ\left[(\pi_t^*+Q\pa_s P_t)\pa_{ss}P_t\int_0^t \Gamma_r e^{\int_t^r Q\sigma^2 \gamma \pa_{ss}P_v dv}dr\right]dt+o(\sqrt{\eta}).\label{eq.expansion.DGamma.mainRep}
\end{align}
\end{lemma}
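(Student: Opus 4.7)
My plan is to decompose the integrand via the algebraic identity $\gamma\Pi_t-\pa_s u_t=-A_t$, where $\Pi_t:=\pi^*_t+Q\pa_s P_t$ and $A_t:=\pa_s u_t-\gamma\Pi_t$, obtaining the exact splitting
$$
\EE^\QQ\int_0^T\Gamma_t\Pi_t\,dt=\frac{1}{\gamma}\EE^\QQ\int_0^T\Gamma_t\pa_s u_t\,dt-\frac{1}{\gamma}\EE^\QQ\int_0^T\Gamma_t A_t\,dt.
$$
For the first piece I will substitute the representation \eqref{eq:rep2us} for $\pa_s u_t$, apply the tower property of $\EE^\QQ[\,\cdot\,\mid\cF_t]$, and use Fubini to swap the outer $dt$ with the inner $dr$; this reproduces the third term of the lemma \emph{exactly}, without any asymptotic analysis.

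The second piece must then yield the first two terms up to $o(\sqrt{\eta})$. Setting $Y^1_t:=\pa_\pi u(t,S_t,\pi^*_t)$ and passing to the limit $\d=\e=0$ in Proposition \ref{theodelta} gives, under $\QQ$, $dY^1_t=\gamma\sigma^2 A_t\,dt+\tilde Z^W_t\,d\tilde W_t$, so $A_t\,dt=(\gamma\sigma^2)^{-1}(dY^1_t-\tilde Z^W_t\,d\tilde W_t)$. Integrating by parts against $\Gamma_t$ and discarding $\QQ$-martingales in the expectation,
$$
\EE^\QQ\!\!\int_0^T\!\!\Gamma_t A_t\,dt=\frac{1}{\gamma\sigma^2}\Bigl(\EE^\QQ[\Gamma_T Y^1_T]-\Gamma_0 Y^1_0-\EE^\QQ\!\!\int_0^T\!\!Y^1_t\alpha_t A_t\,dt-\EE^\QQ\!\!\int_0^T\!\!\beta_t\tilde Z^W_t\,dt\Bigr).
$$
Since $Y^1_T=\gamma\bar l\eta\pi^*_T=O(\eta)$ and both $Y^1_t$ and $A_t$ are $O(\sqrt{\eta})$ in $L^2(\QQ)$, the first and third summands in the parenthesis are $o(\sqrt{\eta})$.

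The boundary term $\Gamma_0 Y^1_0$ produces $I_1$: write $Y^1_0=2\gamma\eta m(0)\pi+R_0$, use $m(0)\to\kappa$ exponentially and the arithmetic identity $2\gamma\eta\kappa=\gamma^2\sigma^2/\kappa$ (equivalent to $\kappa^2=\sigma^2\gamma/(2\eta)$), and apply a Laplace asymptotic to the representation of $R_0$ in Proposition \ref{lem:boundR}: the weight $e^{-\int_0^r m(v)dv}\approx e^{-\kappa r}$ concentrates at $r=0$ with mass $1/\kappa$, and neglecting the subleading $\pa_s u_r=O(\sqrt{\eta})$ contribution gives $R_0=\gamma^2\sigma^2 Q\pa_s P_0/\kappa+o(\sqrt{\eta})$. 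Hence $Y^1_0=\gamma^2\sigma^2\Pi_0/\kappa+o(\sqrt{\eta})$, so the boundary term contributes $\Gamma_0\Pi_0/\kappa=I_1$. For the $\tilde Z^W$ term, identify $\tilde Z^W_t\,d\tilde W_t$ as the martingale part of $d(e^{\int_0^t m(r)dr}R_t)$ (since $d(2\gamma\eta m(t)\pi^*_t)$ has no martingale component); using Proposition \ref{lem:boundR}, the $\QQ$-martingale part of $R_t$ comes from the martingale $\mathcal{R}_t:=\EE^\QQ_t[\int_0^T e^{-\int_0^r m(v)dv}(\pa_s u_r-\gamma Q\pa_s P_r)dr]$. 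Its leading order is driven by $\pa_s P_r$, whose $\QQ$-diffusion coefficient is $\sigma\pa_{ss}P_r$; a Laplace asymptotic then yields $\tilde Z^W_t=\gamma^2\sigma^3 Q\pa_{ss}P_t/\kappa+o(\sqrt{\eta})$, so $(\gamma^2\sigma^2)^{-1}\EE^\QQ\!\int\beta_t\tilde Z^W_t\,dt=\kappa^{-1}\int_0^T\EE^\QQ[Q\sigma\pa_{ss}P_t\beta_t]dt+o(\sqrt{\eta})$. Replacing $\EE^\QQ$ by $\EE$ adds only $o(\sqrt{\eta})$ because $d\QQ/d\PP=1+O(\sqrt{\eta})$ in $L^p$ — the Girsanov exponent $\tfrac12\int_0^T\sigma^2 A_r^2\,dr$ is $O(\sqrt{\eta})$ thanks to the boundary-layer structure of $A$ (it is $O(1)$ only on a time interval of length $O(1/\kappa)$).

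The main obstacle is the Laplace asymptotic for $\tilde Z^W_t$: one must identify the $d\tilde W$-coefficient in the martingale representation of $\mathcal{R}_t$ while sharply controlling the contributions of the subleading term $\pa_s u_r=O(\sqrt{\eta})$ and of the $\QQ$-drift $\sigma^2\pa_{ss}P_r A_r$ of $\pa_s P_r$. Because the peak of $e^{-\int_0^r m(v)dv}$ at $r=0$ carries mass only $1/\kappa=O(\sqrt{\eta})$, any error term in the Laplace expansion must itself be $o(1/\kappa)=o(\sqrt{\eta})$, which requires quantitative (not merely pointwise) estimates on the decay of $m(v)$ toward $\kappa$ and on the drift corrections entering $\mathcal{R}_t$ under $\QQ$.
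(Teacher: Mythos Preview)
Your decomposition via $\gamma\Pi_t=\pa_s u_t-A_t$ is an elegant opening move: substituting \eqref{eq:rep2us} and applying Fubini does reproduce the third term of \eqref{eq.expansion.DGamma.mainRep} \emph{exactly}, which the paper only obtains after a long chain of estimates. The integration-by-parts against $Y^1_t$ is also correct, and the boundary term $\Gamma_0Y^1_0/(\gamma^2\sigma^2)$ indeed yields $\Gamma_0\Pi_0/\kappa$ at leading order. However, two of your order claims are stated too optimistically: $A_t$ is not $O(\sqrt\eta)$ in $L^2(\QQ\!\times\! dt)$ (it is $O(1)$ on the initial boundary layer $t\in[0,O(1/\kappa)]$, and the paper only proves $\int_0^T\EE^\QQ[\D_r^2]dr=o(1)$ via Lemma~\ref{lem:cv4}), and the Girsanov argument for $d\QQ/d\PP=1+O(\sqrt\eta)$ rests on $\int_0^T A_r^2\,dr=O(\sqrt\eta)$, which you have not established. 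These can be repaired, but the real gap is the $\tilde Z^W$ step.

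You correctly identify $\tilde Z^W_t\,d\tilde W_t$ as the martingale part of $dY^1_t$, but extracting its leading order via a ``Laplace asymptotic on $\mathcal R_t$'' does not close. Writing $\EE^\QQ\!\int_0^T\!\beta_t\tilde Z^W_t\,dt=\EE^\QQ\bigl[(\int_0^T\!\beta\,d\tilde W)(Y^1_T-Y^1_0-\gamma\sigma^2\!\int_0^T\!A_r\,dr)\bigr]$ and discarding the $Y^1_T=O(\eta)$ term leaves $-\gamma\sigma^2\!\int_0^T\EE^\QQ[A_r\!\int_0^r\!\beta\,d\tilde W]\,dr$; the dominant piece $\gamma^2\sigma^2\!\int_0^T\EE^\QQ[\D_r\!\int_0^r\!\beta\,d\tilde W]\,dr$ is precisely another functional of the form $\EE^\QQ\!\int\!\D\,\tilde\Gamma$ with $\tilde\Gamma_r=\int_0^r\!\beta\,d\tilde W$. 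So your approach recurses onto itself, and it is not clear how to terminate the recursion without an independent estimate on these functionals. The paper avoids this entirely: it works with the dynamics \eqref{eq:duffisiondeviation} of $\D_t$ directly, whose $\QQ$-martingale part is $\sigma Q\pa_{ss}P_t\,d\tilde W_t$, so the It\^o product $d(\Gamma_t\D_t)$ produces the cross-variation $Q\sigma\pa_{ss}P_t\beta_t\,dt$ \emph{immediately}. Treating $d(\Gamma_t\D_t)$ as a linear ODE in $\Gamma_t\D_t$ with rate $-m(t)$ and explicit source terms, integrating over $[0,T]$, and then using the $L^4$ bound of Lemma~\ref{lem:cv4} on $\D$ together with the kernel estimates on $A_{r,s}$ handles every remaining term without any recursion or pointwise identification of $\tilde Z^W_t$.
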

\begin{remark}
Although it is omitted in the above notation, $\QQ$ also depends on $\eta$. In particular, in the second line of \eqref{eq.expansion.DGamma.mainRep}, the quantities $\QQ$, $(\pi_t^*+Q\pa_s P_t)$, and $\Gamma_r$, all depend on $\eta$.  
\end{remark}

\smallskip

Lemma \ref{lem:limitfunctional} is the main tool for the small impact asymptotic expansion derived in this section. It describes the behavior of the functional 
$$\Gamma\mapsto \EE^\QQ\left[\int_0^T (\pi_t^*+Q\pa_s P_t) \Gamma_t dt\right] $$ in the small $\eta$, or large $\kappa$, regime. 
Note that, in this regime, the function $m$ is large and, thanks to \eqref{eq:duffisiondeviation}, the process $(\pi^*+Q\pa_s P)$, which is the optimally controlled deviation from the frictionless hedge $Q\pa_s P_t$, is strongly mean reverting around zero. The process $(\pi_t^*+Q\pa_s P_t)/\eta^{1/4}$ is, in fact, the so called fast variable mentioned in \cite{BCE,MMS,ST}. However, unlike the latter papers, herein we do not use the viscosity solution methods to characterize the limiting behavior of $(\pi_t^*+Q\pa_s P_t)$. This is due to the fact that, in this work, we establish an expansion for the marginal utility indifference price $p^*$, as opposed to the value function, and the PDE describing the derivatives of $u$ lacks the crucial non-degeneracy property in the state variable $\pi$.  
Thus, herein, we develop a novel methodology for deriving the desired expansion, which relies on the direct probabilistic analysis of the associated optimal control problem, carried out in Section \ref{se:hedging}, and, in particular, on the representation \eqref{def:R} in Proposition \ref{lem:boundR}.

\smallskip

\begin{theorem}\label{theo:expansion}
Let Assumptions \ref{assume:P}, \ref{ass:H.C3} and convention \eqref{rescalingliquidation} hold. Then, the marginal utility indifference price $p^*$ has the following representation for all $(t,s,\pi,Q)\in [0,T]\times \RR^3$,  as $\eta \to 0$:
\begin{align}\label{eq:expansionprice}
p^*(t,s,\pi,Q)
=&P(t,s)-{Q\sqrt{2\eta \gamma\sigma^2}}\int_t^T\EE_{t,s}\left[  \sigma^2(\pa_{ss}P_r)^2 \right]dr\notag\\
&-\sqrt{2\eta \gamma\sigma^2}(\pi+Q\pa_s P(t,s)){\pa_sP(t,s)}+o(\sqrt{\eta}).
\end{align}
\end{theorem}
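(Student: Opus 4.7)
My plan is to start from the representation \eqref{eq.expansion.pstar.rep1} of the marginal utility indifference price. Recognizing the exponential factor as $d\QQ^{t,s,\pi,Q,0,0}/d\PP$ through \eqref{eq.Q0T}, and writing $dW_r=d\tilde W_r+\ZZ^{0,0}(r,S_r,\pi_r^{*},Q)\,dr$ under $\QQ:=\QQ^{t,s,\pi,Q,0,0}$, the $\tilde W$-stochastic integral is a true $\QQ$-martingale thanks to the uniform boundedness of $\pa_s P$ (Assumption \ref{assume:P}), so expanding $\ZZ^{0,0}$ via \eqref{def:ZZZ} yields the decomposition
\begin{align*}
p^*(t,s,\pi,Q) - P(t,s) = \sigma^2\,\EE^\QQ\!\left[\int_t^T \pa_s P_r\, \pa_s u_r\, dr\right] - \gamma\sigma^2\,\EE^\QQ\!\left[\int_t^T \pa_s P_r\,(\pi_r^{*} + Q\pa_s P_r)\, dr\right].
\end{align*}
Without loss of generality I take $t=0$; the general case then follows by a time-shift/Markov argument.

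For the second summand I apply Lemma \ref{lem:limitfunctional} with $\Gamma_t = \pa_s P_t$. Since $P$ solves the heat equation, $\pa_s P_t$ is a $\PP$-martingale with $d\pa_s P_t = \sigma\pa_{ss}P_t\,dW_t$, which under $\QQ$ becomes $d\pa_s P_t = \sigma^2\pa_{ss}P_t(\pa_s u_t - \gamma(\pi_t^{*} + Q\pa_s P_t))\,dt + \sigma\pa_{ss}P_t\,d\tilde W_t$. This is precisely the form required by the lemma, with $\a_t = \sigma^2\pa_{ss}P_t$ and $\beta_t = \sigma\pa_{ss}P_t$, both bounded under Assumption \ref{ass:H.C3}.

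For the first summand I substitute the representation of $\pa_s u_t$ from Proposition \ref{prop:us.rep}, using Remark \ref{rem:Q.consist} (valid at $\e=0$, as explained just after the remark) to replace the conditional measure $\QQ^{t,S_t,\pi^*_t}$ by $\QQ$ before taking the conditional expectation; Fubini together with the tower property then yields
\begin{align*}
\sigma^2\,\EE^\QQ\!\left[\int_0^T \pa_s P_t\, \pa_s u_t\, dt\right] = Q\gamma^2\sigma^4\,\EE^\QQ\!\left[\int_0^T (\pi_r^{*} + Q\pa_s P_r)\pa_{ss}P_r\int_0^r \pa_s P_t\, e^{\int_r^t Q\sigma^2\gamma\pa_{ss}P_v\,dv}\, dt\, dr\right].
\end{align*}
After a renaming of the integration variables, this expression is precisely the negative of the contribution produced by multiplying the third (drift-induced) term of \eqref{eq.expansion.DGamma.mainRep} by the prefactor $-\gamma\sigma^2$ of the second summand; hence the two cancel exactly. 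What remains are only the ``initial condition'' term and the ``diffusion'' term from Lemma \ref{lem:limitfunctional}, plus an $o(\sqrt\eta)$ remainder. Substituting $\kappa^{-1} = \sqrt{2\eta/(\sigma^2\gamma)}$ from \eqref{eq.kappa.m.def} into these two surviving contributions produces exactly \eqref{eq:expansionprice}.

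The main obstacle of the entire argument is the fast-variable asymptotic encoded in Lemma \ref{lem:limitfunctional}, whose proof is postponed to Appendix B. Granted that lemma, the delicate point in the present proof is to recognize the exact cancellation between the $\pa_s u$ contribution (obtained through Proposition \ref{prop:us.rep}) and the drift-induced third term in \eqref{eq.expansion.DGamma.mainRep}; it is this cancellation that collapses the expansion down to the two $\sqrt\eta$-order summands displayed in \eqref{eq:expansionprice} and removes the need to analyze the double-integral expression directly.
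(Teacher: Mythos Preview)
Your proposal is correct and follows essentially the same route as the paper's proof: both start from \eqref{eq.expansion.pstar.rep1}, pass to $\QQ$ via Girsanov, apply Lemma \ref{lem:limitfunctional} with $\Gamma_t=\pa_sP_t$ (identifying $\alpha_t=\sigma^2\pa_{ss}P_t$, $\beta_t=\sigma\pa_{ss}P_t$), and use Proposition \ref{prop:us.rep} to produce the term that cancels against the third summand of \eqref{eq.expansion.DGamma.mainRep}. The only cosmetic difference is that the paper packages the $\pa_su$-contribution into a modified test process $\tilde\Gamma_r=\pa_sP_r-Q\sigma^2\gamma\pa_{ss}P_r\int_0^r\pa_sP_h e^{\int_r^h Q\sigma^2\gamma\pa_{ss}P_v\,dv}\,dh$ before invoking the lemma, whereas you keep the two summands separate and exhibit the cancellation afterwards; the computations are identical.
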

\begin{proof}
Without loss of generality we prove the expansion at $t=0$. 
We fix $(S_0,\pi_0,Q_0)$ and drop these superscripts. Due to \eqref{eq.expansion.pstar.rep1}, we have 
\begin{align*}
p^*(t,s_0,\pi_0,Q_0)&=P(0,S_0)+\sigma \EE\left[e^{\int_0^T \ZZ\left(r,S_r,\pi^*_r \right)dW_r
- \frac{1}{2}\int_t^T (\ZZ)^2\left(r,S_r,\pi_r^*\right) dr}\int_0^T \pa_s P_rdW_r\right]\\
&= P(0,S_0)+\sigma^2 \EE^\QQ\left[ \int_0^T  \pa_s P_r(\pa_s u(r,S_r,\pi^*_r)-\gamma(\pi^*_r+Q_0\pa_s P_r))dr\right]\label{eq:expexp}\\
&= P(0,S_0)-\sigma^2 \gamma\EE^\QQ\left[\int_0^T\tilde \Gamma_{r}(\pi^*_r+Q_0\pa_s P_r)dr\right],
\end{align*}
where 
$$
\tilde \Gamma_{r}:=\pa_sP_r-Q_0\sigma^2 \gamma\pa_{ss}P_r\int_0^r \pa_sP_h e^{\int_r^h Q_0\sigma^2 \gamma\pa_{ss}P_vdv }dh,
$$
and we have used \eqref{eq:rep2us} to obtain the last equality.

Recall that $\pa_sP_r$ follows
$$d(\pa_sP_t)=\sigma^2\pa_{ss}P_t(\pa_s u_t-\gamma(\pi_t^*+Q\pa_s P_t))dt+\sigma \pa_{ss}P_td\tilde W_t,$$
with a $\QQ$-Brownian motion $\tilde W$.
Applying Lemma \ref{lem:limitfunctional} to $\Gamma_t:=\pa_sP_t$, 
we obtain
\begin{align}
&\EE^\QQ\left[\int_0^T\pa_sP_t(\pi_t^*+Q_0\pa_s P_t)dt\right]=\frac{1}{\kappa}\pa_sP_0(\pi_0+Q_0\pa_s P_0)+\int_0^T\frac{1}{\kappa} \EE\left[Q_0(\sigma\pa_{ss}P_r)^2 \right]dr\notag\\
&\qquad\qquad+ Q_0\gamma\sigma^2\int_0^T\EE^\QQ\left[(\pi_t^*+Q_0\pa_s P_t)\pa_{ss}P_t\int_0^t \pa_sP_r e^{\int_t^r Q\sigma^2 \gamma \pa_{ss}P_v dv}dr\right]dt+o(\kappa^{-1}).
\end{align}
Therefore,
\begin{align*}
&\EE^\QQ\left[\int_0^T\tilde \Gamma_{t}(\pi^*_t+Q_0\pa_s P(t,S_t))dt\right]\\
&=\EE^\QQ\left[\int_0^T\left(\pa_s P_t-Q_0\gamma\sigma^2\pa_{ss}P_t\int_0^t \pa_sP_r e^{\int_t^r Q\sigma^2 \gamma \pa_{ss}P_v dv}dr\right)(\pi_t^*+Q_0\pa_s P_t)dt\right]\\
&=\frac{1}{\kappa}\pa_sP(0,S_0)(\pi_0+Q_0\pa_s P(0,S_0))+\int_0^T\frac{1}{\kappa} \EE\left[Q_0(\sigma\pa_{ss}P_r)^2 \right]dr+o(\kappa^{-1}).
\end{align*}
Collecting the above and recalling \eqref{eq.kappa.m.def} we complete the proof.
\qed
\end{proof}

\medskip

The asymptotic expansion of the marginal utility indifference price, given by the right hand side of \eqref{eq:expansionprice}, has three components.
\begin{enumerate}
\item[i)] The frictionless, or fundamental, price $P(t,s)$.
\item[ii)] A term of order $\sqrt{\eta}$, proportional to the expected cumulative (frictionless) Gamma of the option.
\item[iii)] Another term of order $\sqrt{\eta}$, which is proportional to the (frictionless) Delta of the option multiplied by the deviation of the current position from the optimal frictionless one, $(\pi+Q\pa_s P(t,s))$.
\end{enumerate}

It is important to note that, along the optimal inventory path $\pi^*$, the deviation $(\pi^*_t+Q\pa_s P(t,S_t))$ in fact converges to zero as $\eta\rightarrow0$. Hence, if the agent acts optimally, the last term in the expansion \eqref{eq:expansionprice} for $p^*(r,S_r,\pi^*_r,Q)$ becomes negligible compared to the second one. (This term is only relevant for the cases where thel inventory level $\pi$ is far from the target frictionless value: e.g., at the initial moment when the agent starts hedging.) Thus, along the optimal trajectory $\pi^*$, we expect
\begin{align}\label{eq.DavisPrice.expand.rev.1}
p^*(t,s,\pi,Q)
\approx&P(t,s)-{Q\sqrt{2\eta \gamma\sigma^2}}\int_t^T\EE_{t,s}\left[  \sigma^2(\pa_{ss}P_r)^2 \right]dr,
\end{align}
as $\eta\rightarrow0$.
As the right hand side of the above is an affine function of the option's inventory $Q$, the above representation implies that, in the leading order, the price impact in the option's market is linear and permanent, with the impact coefficient at time $t$ being
\begin{equation}\label{eq.expansion.impact.deriv}
{\sqrt{2\eta \gamma\sigma^2}}\int_t^T\EE_t\left[  \sigma^2(\pa_{ss}P_r)^2\right]dr \geq 0.
\end{equation}

The representation \eqref{eq.DavisPrice.expand.rev.1} also reveals that the marginal indifference price (for small $\eta$) is decreasing in the agent's inventory $Q$. In particular, $p^*$ is expected to be below the frictionless price $P$ if the agent is long the option (i.e., $Q>0$), and to be above the frictionless price if the agent is short the option (i.e., $Q<0$). One explanation of such relationship between $p^*$ and $Q$ is that, in a price impact model, the hedging cost of $Q$ shares of the option is expected to be convex in $Q$. Indeed, the cost of hedging $\Delta Q>0$ shares of the option (caused by the price impact) will increase if the agent also needs to hedge the additional $Q>0$ shares of the option, as the latter hedging will push the underlying price further in the same direction as the hedging of the $\Delta Q$ shares, at every trade. As a result, the higher is the agent's inventory level $Q$, the less she is willing to buy the option and the more she is willing to sell it.

\subsection{Appendix A}

Denote by $\mathcal{T}_{[t,s]}$ the set of all $\FF^t$-stopping times with values in $[t,s]$.

\smallskip

{\bf\cite[Assumption A]{BouchardTouzi}}. For all $(t,s,\pi,x)\in[0,T]\times\RR^3$ and all $\nu \in \cA^\e(t,T)$, the following holds.
\begin{itemize}
\item[A1] (independence). The processes $(S^{t,s},\pi^{\nu,t,\pi},X^{\nu,t,s,x})$ are $\FF^t$-progressively measurable.
\item[A2] (causality). For any $\tilde\nu \in \cA^\e(t,T)$, $\tau\in\mathcal{T}_{[t,T]}$, and $A\in\mathcal{F}^t_\tau$, we have: if $\nu=\tilde\nu$ on $[t,\tau]$ and $\nu\bone_A=\tilde\nu\bone_A$ on $(\tau,T]$, then 
$$
(S^{t,s},\pi^{\nu,t,\pi},X^{\nu,t,s,x})\bone_A=(S^{t,s},\pi^{\tilde\nu,t,\pi},X^{\tilde\nu,t,s,x})\bone_A.
$$
\item[A3] (stability under concatenation). For every $\tilde\nu \in \cA^\e(t,T)$ and $\theta\in\mathcal{T}_{[t,T]}$, we have:
$$
\nu\bone_{[t,\theta]} + \tilde\nu\bone_{(\theta,T]} \in \cA^\e(t,T).
$$
\item[A4] (consistency with deterministic initial data). For every $\theta\in\mathcal{T}_{[t,T]}$, we have:
\begin{itemize}
\item[a.] For $\PP$-a.e. $\omega\in\Omega$, there exists $\tilde\nu_\omega\in\cA^\e(\theta(\omega),T)$, s.t.
\begin{align*}
&\EE\left[\left.-e^{-\gamma \left(X^{\nu,t,s,x}_T+\pi^{\nu,t,\pi}_T S^{t,s}_T - l(\pi^{\nu,t,\pi}_T)^2/2 + Q H(S^{t,s}_T)\right)}
\,\right|\,\mathcal{F}^t_{\theta}\right]\\
&\leq -e^{-\gamma(X^{\nu,t,s,x}_\theta+\pi^{\nu,t,\pi}_\theta S^{t,s}_\theta+QP(\theta,S^{t,s}_\theta))}J^\d\left(\theta,S^{t,s}_\theta,\pi^{\nu,t,\pi}_\theta,Q;\tilde\nu_\omega\right).
\end{align*}
\item[b.] For any $s\in[t,T]$, $\theta\in\mathcal{T}_{[t,s]}$, and $\tilde\nu\in \cA^\e(s,T)$, denoting $\bar \nu := \nu\bone_{[t,\theta]} + \tilde\nu\bone_{(\theta,T]}$, we have
\begin{align*}
&\EE\left[\left.-e^{-\gamma \left(X^{\bar\nu,t,s,x}_T+\pi^{\bar\nu,t,\pi}_T S^{t,s}_T - l(\pi^{\bar\nu,t,\pi}_T)^2/2 + Q H(S^{t,s}_T)\right)}
\,\right|\,\mathcal{F}^t_{\theta}\right]\\
&= -e^{-\gamma(X^{\nu,t,s,x}_\theta+\pi^{\nu,t,\pi}_\theta S^{t,s}_\theta+QP(\theta,S^{t,s}_\theta))} J^\d\left(\theta,S^{t,s}_\theta,\pi^{\nu,t,\pi}_\theta,Q;\tilde\nu\right),
\end{align*}
for $\PP$-a.e. $\omega\in\Omega$.
\end{itemize}
\end{itemize}

\section{Appendix B}

{\bf Proof of Lemma \ref{lem:limitfunctional}}.
By direct computation we have 
$$
e^{-\int_r^tm(v)dv}=\frac{\cosh\left(\kappa(T-t)+\phi\right)}{\cosh\left(\kappa(T-r)+\phi\right)}.
$$ 
Next, we denote 
\begin{align*}
\D_t&:=\pi_t^*+{Q}\pa_s P_t,\\
\phi&:=\frac{1}{2}\ln \left(\frac{\bar l\sqrt{ \eta}+\sqrt{2\gamma \sigma^2}}{-\bar l\sqrt{\eta}+\sqrt{2\gamma \sigma^2}}\right)=\bar l\sqrt{\frac{\eta}{2\gamma\sigma^2}}+o(\eta^{1/2}),\\
\frac{A_{r,s}}{\kappa}&:=\int_r^s e^{-\int_r^tm(v)dv}dt=\frac{\tanh\left(\kappa(T-r)+\phi\right)}{\kappa} - \frac{\sinh\left(\kappa(T-s)+\phi\right)}{\kappa\cosh\left(\kappa(T-r)+\phi\right)},\\
\phi_t&:=m(t)-\kappa A_{t,T}=\kappa\frac{\sinh\left(\phi\right)}{\cosh\left( \kappa(T-t)+\phi\right)}\leq \kappa \bar l\sqrt{\frac{\eta}{2\gamma\sigma^2}}\leq \frac{\bar l}{2},\\
\frac{\tilde A_{r,s}}{\kappa}&:=\int_r^s e^{-\int_t^sm(v)dv}dt.
\end{align*}
Noice that for all $0\leq r\leq s\leq T$, $0\leq A_{r,s}\leq 1$, $0\leq \tilde A_{r,s}\leq 1$, and for all $0\leq r < s\leq T$, we have: $A_{r,s}\to 1$ as $\eta \to 0$. 
Then, the feedback representation of $\pi^*$ in Theorem \ref{theod0}, representation \eqref{eq.R.def}, Proposition \ref{lem:boundR}, representation \eqref{def:ZZZ}, and the fact that $\pa_sP_t$ is a martingale under $\PP$, yield 
\begin{align*}
d\D_t=&-m(t)\D_tdt+\frac{\sigma^2}{2\eta}\EE^{\QQ}_t\left[\int_t^Te^{-\int_t^rm(v)dv}\pa_{s} u_r dr\right]dt\notag\\
&-{Q}\kappa^2\EE^{\QQ}_t\left[\int_t^Te^{-\int_t^rm(v)dv}(\pa_s P_r-\pa_s P_t) dr\right]dt+{Q}\pa_sP_t( m(t)-\kappa A_{t,T})dt\notag\\
&+{Q}\sigma^2\pa_{ss}P_t(\pa_s u_t-\gamma \D_t)dt+\sigma{Q}\pa_{ss}P_t d\tilde W_t.
\end{align*}
Using the dynamics of $\pa_s P_t$ and the definition of $\phi_t$, we transform the above into
\begin{align}\label{eq:duffisiondeviation}
d\D_t=&-m(t)\D_tdt+\frac{\sigma^2}{2\eta}\EE^{\QQ}_t\left[\int_t^Te^{-\int_t^rm(v)dv}\pa_{s} u_r dr\right]dt\notag\\
&-{Q}\kappa^2\sigma^2\EE^{\QQ}_t\left[\int_t^Te^{-\int_t^rm(v)dv}\int_t^r\pa_{ss}P_h(\pa_s u_h - \gamma\D_h)dh dr\right]dt+{Q}\pa_sP_t \phi_tdt\notag\\
&+{Q}\sigma^2\pa_{ss}P_t(\pa_s u_t-\gamma \D_t)dt+\sigma{Q}\pa_{ss}P_t d\tilde W_t.
\end{align}
Therefore, 
\begin{align}
d(\Gamma_t\D_t)=&-m(t)\Gamma_t \D_tdt+\frac{\sigma^2}{2\eta}\Gamma_t\,\EE^{\QQ}_t\left[\int_t^Te^{-\int_t^rm(v)dv}\pa_s u_rdr\right]dt\nonumber\\
&-{Q}\kappa^2\Gamma_t\,\EE^{\QQ}_t\left[\int_t^Te^{-\int_t^rm(v)dv}\left(\pa_sP_r-\pa_sP_t\right)dr\right]dt+\Gamma_t{Q}\pa_sP_t \phi_tdt\nonumber\\
&+{Q}\sigma^2\Gamma_t\pa_{ss}P_t(\pa_s u_t-\gamma \D_t)dt+{Q}\Gamma_t\pa_{ss}P_t d\tilde W_t\label{eq.Expansion.GammaD.decomp}\\
&+\a_t(\pa_s u_t-\gamma\D_t)\D_tdt+\D_t\beta_t d\tilde W_t+{Q}\sigma\pa_{ss}P_t\beta_t dt.\nonumber
\end{align}
Due to the boundedness assumption on $\a,\beta$, $P$, and on the partial derivatives of $P$, as well as the boundedness of the optimal control $\nu^*$, the local martingales in the decomposition \eqref{eq.Expansion.GammaD.decomp} of $\Gamma_t\D_t=\Gamma_t(\pi_t^*+{Q}\pa_s P_t)$ are martingales. This decomposition also shows that $\Gamma_t\D_t$ solves a random linear ODE (to derive this ODE, treat the first term in the right hand side of \eqref{eq.Expansion.GammaD.decomp} as a linear function of $\Gamma_t\D_t$ and the rest as exogenously given source term), which we solve explicitly and integrate the solution over $[0,T]$ to obtain:
\begin{equation}
\EE^\QQ\left[\int_0^T\Gamma_t\D_tdt\right]=\frac{A_{0,T}}{\kappa}\Gamma_0\D_0+\int_0^T\EE^\QQ\left[\frac{A_{r,T}}{\kappa} {Q}\Gamma_r\pa_sP_r \phi_r\right] dr\label{eq.expansion.GammaD.rep.2}
\end{equation}
$$
\qquad\qquad+\int_0^TA_{r,T}\frac{\sigma^2}{2\kappa \eta}\EE^\QQ\left[\Gamma_r\int_r^T e^{-\int_r^h m(v)dv}\pa_s u_h dh\right]dr
$$
$$
\qquad\qquad-{Q}\kappa\int_0^T\EE^\QQ\left[A_{r,T}\left(\Gamma_r\int_r^T e^{-\int_r^h m(v)dv}\left(\pa_sP_h-\pa_sP_r\right)dh \right)\right]dr
$$
$$
\qquad\qquad+{Q}\sigma^2\int_0^T\EE^\QQ\left[\int_0^te^{-\int_r^t m(v) dv}\Gamma_r\pa_{ss}P_r(\pa_s u_r-\gamma \D_r)dr\right]dt
$$
$$
\qquad\qquad+\int_0^T\EE^\QQ\left[\int_0^te^{-\int_r^t m(v) dv}\a_r(\pa_s  u_r-\gamma\D_r)\D_rdr\right]dt
$$
$$
\qquad\qquad+\int_0^T\EE^\QQ\left[\int_0^te^{-\int_r^t m(v) dv}{Q}\sigma\pa_{ss}P_r\beta_r dr\right]dt
$$
$$
=\frac{A_{0,T}}{\kappa}\Gamma_0\D_0+\int_0^T\EE^\QQ\left[\frac{A_{r,T}}{\kappa} {Q}\Gamma_r\pa_sP_r \phi_r\right] dr
$$
\begin{equation}
\qquad\qquad+\int_0^T\frac{A_{r,T}}{\kappa} \EE^\QQ\left[{Q}\sigma\pa_{ss}P_r\beta_r \right]dr+\int_0^T \frac{\sigma^2}{2\kappa \eta}\EE^\QQ\left[\pa_s u_h\int_0^h e^{-\int_r^h m(v)dv}{A_{r,T}}\Gamma_rdr\right]dh\label{eq.expansion.GammaD.rep}
\end{equation}
$$
\qquad\qquad-{Q}\sigma^2 \int_0^TA_{r,T}\EE^\QQ\left[\pa_{ss}P_r(\pa_s  u_r-\gamma\D_r)\int_0^re^{-\int_t^r m(v) dv} \Gamma_tA_{t,T} dt\right]dr
$$
$$
\qquad\qquad+{Q}\sigma^2\int_0^T\frac{A_{r,T}}{\kappa} \EE^\QQ\left[\Gamma_r\pa_{ss}P_r(\pa_s  u_r-\gamma \D_r)\right]dr+\int_0^T\frac{A_{r,T}}{\kappa} \EE^\QQ\left[ \a_r(\pa_s  u_r-\gamma\D_r)\D_r\right]dr,
$$
where we (as before) used the fact that $d\pa_s P_t=\sigma \pa_{ss}P_t dW_t$ to represent the term $\pa_sP_s-\pa_sP_r$.

We denote 
\begin{align*}
\bar\Gamma_t&:=\frac{\kappa}{\gamma}\int_0^t e^{-\int_r^t m(v) dv} \Gamma_rA_{r,T}dr,\\
\tilde \Gamma_t&:=-{Q}\sigma^2\pa_{ss}P_tA_{t,T} \int_0^t e^{-\int_r^t m(v) dv} \Gamma_rA_{r,T}dr + \frac{A_{t,T}(\a_t \Delta_t+{Q}\sigma^2\Gamma_t\pa_{ss}P_t)}{\kappa},
\end{align*} 
and group the terms in the right hand side of \eqref{eq.expansion.GammaD.rep} as follows:
\begin{align*}
&\EE^\QQ\left[\int_0^T\Gamma_t\D_tdt\right]=\frac{A_{0,T}}{\kappa}\Gamma_0\D_0+\int_0^T\EE^\QQ\left[\frac{A_{r,T}}{\kappa}{Q}\Gamma_r\pa_sP_r \phi_r\right] dr\\
&+\int_0^T\frac{A_{r,T}}{\kappa} \EE^\QQ\left[{Q}\sigma\pa_{ss}P_r\beta_r \right]dr\\
&+ {Q}\gamma\sigma^2\int_0^T\EE^\QQ\left[\D_r\pa_{ss}P_rA_{r,T}\left(\int_0^re^{-\int_t^r m(v) dv} \Gamma_tA_{t,T} dt-\frac{\Gamma_r}{\kappa} \right)\right]dr\\
&-\frac{\gamma}{\kappa}\int_0^T \EE^\QQ\left[A_{r,T} \a_r\D_r^2\right]dr
+\int_0^T\EE^\QQ\left[(\bar \Gamma_t+\tilde \Gamma_t)\pa_s u_t\right] dt
\end{align*}
Due to \eqref{eq:rep2us}, the last term in the right hand side of the above becomes
$$Q\sigma^2 \gamma^2\EE^\QQ\left[\int_0^T\pa_{ss}P_r\D_r\int_0^r e^{\int_r^t Q\sigma^2 \gamma\pa_{ss}P_vdv }(\bar\Gamma_t+\tilde\Gamma_t)dtdr\right].$$
We now denote $\hat\Gamma_t:=\gamma\int_0^t \bar\Gamma_r e^{\int_t^r {Q}\sigma^2\gamma\pa_{ss}P_v dv}dr$ and $\check\Gamma_t:=\gamma\int_0^t \tilde\Gamma_r e^{\int_t^r {Q}\sigma^2\gamma\pa_{ss}P_v dv}dr$, so that
$$
\EE^\QQ\left[\int_0^T\Gamma_t\D_tdt\right]=\frac{A_{0,T}}{\kappa}\Gamma_0\D_0+\int_0^T\EE^\QQ\left[\frac{A_{r,T}}{\kappa} {Q}\Gamma_r\pa_sP_r \phi_r\right] dr
$$
$$
+\int_0^T\frac{A_{r,T}}{\kappa} \EE^\QQ\left[{Q}\sigma\pa_{ss}P_r\beta_r \right]dr+Q\gamma\sigma^2\int_0^T\EE^\QQ\left[\D_r\pa_{ss}P_r\hat\Gamma_r\right]dr
$$
$$
+ {Q}\gamma\sigma^2\int_0^T\EE^\QQ\left[\D_r\pa_{ss}P_r\left(\check \Gamma_r+A_{r,T}\int_0^re^{-\int_t^r m(v) dv} \Gamma_tA_{t,T} dt-\frac{A_{r,T}\Gamma_r}{\kappa} \right)\right]dr
$$
$$
-\frac{\gamma}{\kappa}\int_0^T \EE^\QQ\left[A_{r,T} \a_r\D_r^2\right]dr.
$$
Lemma \ref{lem:cv4} (stated further in Appendix B) easily yields 
\begin{align}\label{eq:cvsquare}
\int_0^T \EE^\QQ\left[\D_r^4\right]dr+\int_0^T \EE^\QQ\left[\D_r^2\right]dr=o(1).
\end{align}
We now use this result to show that 
\begin{align*}
I_1&:=\int_0^T\EE^\QQ\left[{A_{r,T}}{Q}\Gamma_r\pa_sP_r \phi_rdr\right]=o(1),\\
I_2&:=\int_0^T{A_{r,T}} \EE^\QQ\left[{Q}\sigma\pa_{ss}P_r\beta_r \right]dr=\int_0^T\EE\left[{Q}\sigma\pa_{ss}P_r\beta_r \right]dr+o(1),\\
I_3&:=\int_0^T\EE^\QQ\left[\D_r\pa_{ss}P_r\hat\Gamma_r\right]dr\\
&=\EE^\QQ\left[\int_0^T\pa_{ss}P_r\D_r\int_0^r e^{\int_r^h Q\sigma^2 \gamma\pa_{ss}P_vdv } \Gamma_h dh dr\right]+o(\kappa^{-1}),\\
I_4&:=\int_0^T\EE^\QQ\left[\D_r\pa_{ss}P_r\left(\check\Gamma_r+A_{r,T}\int_0^re^{-\int_t^r m(v) dv} \Gamma_tA_{t,T} dt-\frac{A_{r,T}\Gamma_r}{\kappa} \right)\right]dr=o(\kappa^{-1}),\\
I_5&:=\int_0^T \EE^\QQ\left[A_{r,T} \a_r\D_r^2\right]dr=o(1).
\end{align*}
We treat each term separately. Recall that $0 \leq A_{r,T}\leq 1$ and $\a$ and $\beta$ are uniformly bounded. 
Note also that  $\int_0^T \phi_r^2dr=o(1)$. Direct estimates yield for $I_1$ amd $I_5$ to
\begin{align*}
|I_1|&\leq C \EE^\QQ\left[\int_0^T\Gamma_r^2dr\right]^{1/2}\left(\int_0^T \phi_r^2dr\right)^{1/2},\\
|I_5|&\leq C\EE^\QQ\left[ \int_0^T\D_r^2dr\right].
\end{align*}
We also estimate $\check \Gamma$ so follows 
\begin{align*}
&\left|\check \Gamma_r+A_{r,T}\int_0^re^{-\int_t^r m(v) dv} \Gamma_tA_{t,T} dt-\frac{A_{r,T}\Gamma_r}{\kappa} \right|\\
&\leq  C\left(\sup_{t}(|\tilde \Gamma_t|)+\sup_{t}(1+| \Gamma_t|)\left(\int_0^re^{-\int_t^r m(v) dv} dt+\frac{1+|\Delta_t|}{\kappa}\right)\right)\\
&\leq C\sup_{t}(1+| \Gamma_t|)\left(\sup_{s\in[0,T]}\int_0^se^{-\int_t^s m(v) dv} dt+\int_0^re^{-\int_t^r m(v) dv} dt+\frac{1+|\Delta_t|}{\kappa}\right)\\
&\leq C\frac{\sup_{t}(1+| \Gamma_t|)(1+|\Delta_t|)}{\kappa}
\end{align*}
so that 
$$|I_4|\leq C\EE^\QQ\left[\int_0^T|\D_r|\frac{(1+|\Delta_t|)\sup_{t}(1+| \Gamma_t|)}{\kappa}dr\right]\leq \frac{C}{\kappa}\EE^\QQ\left[\int_0^T|\D_r^2|+|\D_r^4|dr\right]^{1/2}\EE^\QQ\left[{\sup_{t}| \Gamma_t|}^2\right]^{1/2}.$$

Thus, \eqref{eq:cvsquare} and the boundedness of the characteristics of $\Gamma$ imply: 
$$|I_1|+\kappa |I_4|+|I_5|=o(1). $$
Next, we expand $I_3$ and write it as follows:
\begin{align*}
I_3&=\kappa\int_0^T\EE^\QQ\left[\D_r\pa_{ss}P_r\int_0^r  e^{\int_r^h {Q}\sigma^2\gamma\pa_{ss}P_v dv}\int_0^h e^{-\int_t^h m(v) dv} \Gamma_tA_{t,T}dt dh\right]dr\\
&=\kappa\int_0^T\EE^\QQ\left[\D_r\pa_{ss}P_r\int_0^r  e^{\int_r^t {Q}\sigma^2\gamma\pa_{ss}P_v dv}\int_t^r e^{-\int_t^h m(v) dv+\int_t^h {Q}\sigma^2\gamma\pa_{ss}P_v dv}dh \Gamma_tA_{t,T}dt \right]dr\\
&=\EE^\QQ\left[\int_0^T\pa_{ss}P_r\D_r\int_0^r e^{\int_r^t Q\sigma^2 \gamma\pa_{ss}P_vdv } \Gamma_tA_{t,T}dtdr\right]\\
&+\kappa\EE^\QQ\left[\int_0^T\pa_{ss}P_r\D_r\int_0^re^{\int_r^t Q\sigma^2 \gamma\pa_{ss}P_vdv }\Gamma_tA_{t,T}\right.\\
&\qquad\qquad\left.\left(\int_t^r e^{-\int_t^h m(v) dv+\int_t^h Q\sigma^2 \gamma\pa_{ss}P_vdv }dh -\frac{1}{\kappa}\right)dtdr\right].
\end{align*}
Due to the uniform boundedness of $Q\sigma^2 \gamma\pa_{ss}P_v$, there exists $C_{t,s}$ which is uniformly bounded over $s,t\in[0,T]$ and such that 
$$
|e^{\int_t^s Q\sigma^2 \gamma\pa_{ss}P_vdv }-1|=C_{s,t}|t-s|.
$$
Thus, 
\begin{align*}
\left|\int_t^r e^{-\int_t^s m(v) dv} e^{\int_t^s Q\sigma^2 \gamma\pa_{ss}P_vdv }ds -\frac{1}{\kappa}\right|&\leq \int_t^r e^{-\int_t^s m(v) dv} C_{s,t}|t-s|ds +\frac{1-A_{t,r}}{\kappa}\\
&\leq C \int_t^r \frac{\cosh\left(\kappa(T-s)+\phi\right)}{\cosh\left(\kappa(T-t)+\phi\right)}|t-s|ds+\frac{1-A_{t,r}}{\kappa}\\
&\leq C \int_t^r e^{-\kappa(s-t)}|t-s|ds+\frac{1-A_{t,r}}{\kappa}\\
&\leq C \int_0^{\infty} e^{-\kappa u}udu+\frac{1-A_{t,r}}{\kappa}\leq \frac{C}{\kappa^2}+\frac{1-A_{t,r}}{\kappa}.
\end{align*}
We also have the following bound 
$$
\int_0^r 1-A_{s,r}ds=\int_0^r1-{\tanh\left(\kappa(T-s)+\phi\right)}+\frac{\sinh\left(\kappa(T-r)+\phi\right)}{\cosh\left(\kappa(T-s)+\phi\right)}ds
$$
$$
=r +\frac{1}{\kappa}\ln\left(\frac{\cosh\left(\kappa(T-r)+\phi\right)}{\cosh\left(\kappa T+\phi\right)}\right)
$$
$$
+\frac{1}{\kappa}\left(\arctan\left(\sinh\left(\kappa T+\phi\right)\right)-\arctan\left(\sinh\left(\kappa(T-r)+\phi\right)\right)\right){\sinh\left(\kappa(T-r)+\phi\right)}$$
$$
=\frac{1}{\kappa}\ln\left(\frac{e^{\kappa r}\cosh\left(\kappa(T-r)+\phi\right)}{\cosh\left(\kappa T+\phi\right)}\right)
$$
$$
+\frac{1}{\kappa}\left(\arctan\left(\frac{\sinh\left(\kappa T+\phi\right)-\sinh\left(\kappa(T-r)+\phi\right)}{1+\sinh\left(\kappa T+\phi\right)\sinh\left(\kappa(T-r)+\phi\right)}\right)\right){\sinh\left(\kappa(T-r)+\phi\right)}
$$
$$
\leq\frac{1}{\kappa}\ln\left(\frac{1+e^{-\kappa(2T-2r)-2\phi}}{1+e^{-2\kappa T-2\phi}}\right)
$$
$$
+\frac{1}{\kappa}\frac{\sinh\left(\kappa T+\phi\right)\sinh\left(\kappa(T-r)+\phi\right)-\sinh^2\left(\kappa(T-r)+\phi\right)}{1+\sinh\left(\kappa T+\phi\right)\sinh\left(\kappa(T-r)+\phi\right)}
$$
\begin{equation}
\leq\frac{\ln(2)}{\kappa}+\frac{1}{\kappa}.\label{eq.expansion.A.small}
\end{equation}
Given the definition of $I_3$ and the inequality $1-A_{s,r}\geq 1-A_{s,T}\geq 0$, for $s\leq r\leq T$, the above inequalities, along with Cauchy inequality, yield:
\begin{align*}
&\left|I_3-\EE^\QQ\left[\int_0^T\pa_{ss}P_r\D_r\int_0^r e^{\int_r^s Q\sigma^2 \gamma\pa_{ss}P_vdv } \Gamma_sdsdr\right]\right|\\
&\leq C\EE^\QQ\left[\int_0^T\left|\pa_{ss}P_r\D_r\right|\sup_{t} |\Gamma_t|\int_0^r|1-A_{t,r}|dtdr\right]+o(\kappa^{-1})\\
&\leq C\frac{1}{\kappa}\EE^\QQ\left[\int_0^T\D_r^2dr\right]^{1/2}\EE^\QQ\left[ \sup_{t} |\Gamma_t|^2\right]^{1/2}+o(\kappa^{-1})=o(\kappa^{-1})
\end{align*}
where we have used one more time that $\EE^\QQ\left[\int_0^T\D_r^2dr\right]=o(1)$.

To finish the proof of the lemma it now suffices to prove that 
$$I_2=\int_0^T\EE\left[{Q}\sigma\pa_{ss}P_r\beta_r \right]dr+o(1).$$
In view of \eqref{eq.expansion.A.small}, the above is a direct consequence of the convergence
\begin{align}\label{eq:cvprob}
\EE^{\QQ}\left[X \right]\to \EE\left[X \right],
\end{align}
for all absolutely bounded $X$, and the dominated convergence theorem.
Let us prove \eqref{eq:cvprob}. Thanks to martingale representation theorem, there exists a $\PP$-square integrable $h$ such that
\begin{align*}
\EE^{\QQ}\left[X \right]-\EE\left[X \right]&=\EE^{\QQ}\left[\int_0^T h_t dW_t\right]=\sigma\EE^{\QQ}\left[\int_0^T h_t\pa_s u(t,S_t,\pi_t^*)dt\right]-\gamma\sigma\EE^{\QQ}\left[\int_0^T h_t\D_tdt\right]\\
&=Q\sigma^3 \gamma^2\EE^{\QQ}\left[\int_0^T h_t \int_t^Te^{\int_r^t Q\sigma^2 \gamma\pa_{ss}P_vdv }\pa_{ss}P_r\D_rdrdt\right]-\gamma\sigma\EE^{\QQ}\left[\int_0^T h_t\D_tdt\right]\\
&=\sigma \gamma\EE\left[\int_0^T \frac{d\QQ}{d\PP}\D_r\left(Q\sigma^2 \gamma\pa_{ss}P_r \int_0^rh_t e^{\int_r^t Q\sigma^2 \gamma\pa_{ss}P_vdv }dt-h_r\right)dr\right].
\end{align*}
Thus, using the generalized H\"older inequality with $\frac{1}{4}+\frac{1}{4}+\frac{1}{2}=1$, we have
$$\left|\EE^{\QQ}\left[X \right]-\EE\left[X \right]\right|\leq C \left[\EE^{\QQ} \int_0^T \D_r^4dr\right]^{1/4} \left[\EE\int_0^T \left(\frac{d\QQ}{d\PP}\right)^3dr\right]^{1/4} \left[\EE\int_0^T h_t^{2}dt\right]^{1/2}.$$
Thanks to Lemma \ref{lem:cv4}, to finish the proof of this lemma it now suffices to prove that $\EE\left[\int_0^T \left(\frac{d\QQ}{d\PP}\right)^3dr\right]$ is bounded as $\eta\to 0$. 
Thanks to \eqref{eq.Q0T},
$$\left(\frac{d\QQ}{d\PP}\right)^3=\frac{e^{3\Psi^\d(t,\pi, \nu^{*,0,S_0,\pi_0})+3Q\Gamma(0,S_0) }}{U^3(0,S_0,\pi_0)}.$$
Thus, 
$$\EE\left[\left(\frac{d\QQ}{d\PP}\right)^3\right]\leq \frac{\tilde U(0,S_0,\pi_0)}{U^3(0,S_0,\pi_0)}$$
where $\tilde U$ is defined as in \eqref{def:U} but with $\gamma$ replaced by $3\gamma$. This implies the desired convergence
$$\left|\EE^{\QQ}\left[X \right]-\EE\left[X \right]\right|\leq C \EE^{\QQ}\left[\int_0^T \D_r^4dr\right]^{1/4} \to 0\mbox{    as    }\eta \to 0.$$
 \qed

\medskip

\begin{lemma}\label{lem:cv4}
Under the assumptions of Lemma \ref{lem:limitfunctional}, we have:
\begin{align*}
\int_0^T\EE^\QQ\left[\D_r^4\right]dr\leq C\frac{\D_0^4}{\kappa}+\frac{C}{\kappa^{1/2}},
\end{align*}
for small enough $\eta$ (i.e., large enough $\kappa$), with some constant $C>0$ independent of $\eta$.
\end{lemma}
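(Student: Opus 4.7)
The strategy is to apply Itô's formula to $\D_t^4$ using the explicit SDE \eqref{eq:duffisiondeviation}, together with a bootstrap from an $L^2$-in-time estimate to an $L^4$-in-time estimate, exploiting the strong dissipation coming from the term $-m(t)\D_t$ with $m(t)\to\kappa\to\infty$ as $\eta\to 0$. Throughout, I would freely use: the uniform boundedness of $\D_t$, $\pa_s u_t$, and $R_t$ (Theorems \ref{unifbound}--\ref{theod0} and Propositions \ref{lem:boundR}, \ref{prop:us.rep}); the boundedness of the diffusion coefficient $\sigma Q\pa_{ss}P_t$; the fact $|\phi_t|=O(\sqrt\eta)=O(1/\kappa)$; and the representation \eqref{eq:rep2us} that writes $\pa_s u_r$ as a weighted conditional expectation of $\D_h$ for $h\in[r,T]$.

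The first and most delicate step is to obtain the preliminary $L^2$-in-time bound $\int_0^T\EE^\QQ[\D_r^2]\,dr\leq C/\kappa$. Applying Itô to $\D_t^2$, taking $\QQ$-expectation and using $m(t)\geq\kappa_0$ with $\kappa_0/\kappa\to 1$, the principal work is estimating the cross-term $\EE^\QQ[\D_t\mu_t]$, where $\mu_t=A_t+B_t+Q\pa_sP_t\phi_t+Q\sigma^2\pa_{ss}P_t(\pa_s u_t-\gamma\D_t)$ and
$$A_t=\frac{\sigma^2}{2\eta}\EE^\QQ_t\!\left[\int_t^T e^{-\int_t^r m(v)dv}\pa_s u_r\,dr\right],\qquad B_t=-Q\sigma^2\kappa^2\EE^\QQ_t\!\left[\int_t^T e^{-\int_t^r m(v)dv}\int_t^r\pa_{ss}P_h(\pa_s u_h-\gamma\D_h)dh\,dr\right].$$
Substituting \eqref{eq:rep2us} for $\pa_s u_r$, applying Fubini to exchange the iterated integrals, and bounding the resulting kernels via Cauchy--Schwarz and $\int_t^T e^{-\int_t^r m(v)dv}dr\leq 1/\kappa$, I would obtain $\int_0^T\EE^\QQ[A_t^2+B_t^2]\,dt\leq C\kappa^2\int_0^T\EE^\QQ[\D_h^2]\,dh+C$. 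Combined with Young's inequality applied to the cross-term in the Itô identity, this yields an inequality of the form $(1-C\tau)\int_{T-\tau}^T\EE^\QQ[\D_r^2]\,dr\leq C/\kappa$ over any short interval of length $\tau<1/C$; piecing together a fixed (independent of $\eta$) number of such intervals delivers the global $L^2$ bound.

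The second step bootstraps to $L^4$. From Itô's formula on $\D_t^4$ together with the bound $\sigma^2Q^2(\pa_{ss}P_t)^2\leq C$,
$$\frac{d}{dt}\EE^\QQ[\D_t^4]\leq -4\kappa_0\EE^\QQ[\D_t^4]+4\EE^\QQ[|\D_t^3\mu_t|]+6C\,\EE^\QQ[\D_t^2].$$
Using the uniform boundedness of $\D_t$ to estimate $\EE^\QQ[\D_t^6]\leq C\EE^\QQ[\D_t^4]$, a Cauchy--Schwarz--Young combination absorbs the cross term into the $-2\kappa_0\EE^\QQ[\D_t^4]$ part of the dissipation, while the step-one $L^2$ bound controls both the quadratic-variation contribution and $\int_0^T\EE^\QQ[\mu_t^2]\,dt$. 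A Gronwall argument then gives $\EE^\QQ[\D_t^4]\leq\D_0^4 e^{-\kappa_0 t}+C/\sqrt{\kappa}$, and integrating over $[0,T]$ produces the stated bound.

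The main obstacle is the $L^2$ estimate in the first step: the pointwise bound on $A_t$ is only $O(\kappa)$, exactly matching the dissipation rate, so closing the estimate requires both the non-Markovian representation \eqref{eq:rep2us} (to replace the pointwise bound by a tighter $L^2$ bound involving $\int_0^T\EE^\QQ[\D_h^2]dh$) and the short-interval localization argument (to absorb the residual $C\tau$ factor). The seemingly loose $1/\sqrt{\kappa}$ rate in the final bound reflects the compounding of Young's inequalities used in the bootstrap and is more than sufficient for the applications in the proof of Lemma \ref{lem:limitfunctional}.
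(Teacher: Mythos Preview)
Your approach differs substantially from the paper's. The paper does \emph{not} first establish an integrated $L^2$ bound on $\D$ and then bootstrap; instead it substitutes the variation-of-constants formula \eqref{eq:solD} for $\D$ back into the representation \eqref{eq:rep2us} for $\pa_s u$, applies a backward Gronwall, and obtains the \emph{pointwise} estimate $|\pa_s u_{t_0}|\leq C(\sqrt{\kappa}+|\D_{t_0}|)/\kappa$ (equation \eqref{eq:contpa2}). With this in hand, the $L^4$ bound follows in one step by It\^o on $\D^4$, solving the resulting linear ODE with kernel $e^{-4\int m}$, and closing with H\"older/Jensen---no localization, no a priori uniform bound on $\D$, and no separate $L^2$ stage is needed. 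Your route trades this single sharp pointwise bound for a two-stage integrated argument; both lead to the result, but the paper's is shorter and sidesteps the issues below.

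Your proposal, as written, has three gaps that must be patched before it closes. First, the claim ``$m(t)\geq\kappa_0$ with $\kappa_0/\kappa\to1$'' is false near the terminal time: under convention \eqref{rescalingliquidation} one has $m(T)=\bar l/2$, and more generally $m(t)=\kappa\tanh(\kappa(T-t)+\phi)$ is only of order $\kappa$ for $T-t\gtrsim 1/\kappa$. You must treat the boundary layer $[T-c/\kappa,T]$ separately (e.g.\ using the uniform bound on $\D$ there). Second, the ``uniform boundedness of $\D_t$'' independent of $\eta$ is not a direct statement of the cited results: Theorem \ref{unifbound} bounds $\pa_\pi u$ along the optimal path, not $\nu^\ast$ or $\pi^\ast$, and the constants in Proposition \ref{prop:comp} are stated as depending on $\eta$. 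The uniform bound does hold, but you have to argue it: from \eqref{lip:u} the Lipschitz constant of $u$ in $s$ is independent of $\eta$, hence $|\pa_s u|\leq C$; combined with \eqref{def:R} this gives $|e^{\int_0^t m}R_t|\leq C/\kappa$, so the drift of $\pi^\ast$ in \eqref{eq.R.def}--\eqref{eq:optcont} is $-m(t)\pi^\ast_t+O(\kappa)$, and solving yields $|\pi^\ast_t|\leq |\pi_0|+C$. Third, your ``piecing together'' of short intervals requires care because $A_t$ and $B_t$ are forward-looking in time (they involve $\D_h$ for $h\in[t,T]$, not only in the current subinterval); the backward induction must feed the already-obtained bound on the later interval into the estimate for the earlier one. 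Each of these can be fixed, but none is automatic from what you wrote.
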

\begin{proof}
Recall that $\D_r=(\pi_r^*+{Q}\pa_s P_r)$. Equation \eqref{eq:duffisiondeviation}, viewed as a linear random ODE for $\D$, implies that, for $0\leq t_0 \leq t_1\leq T$, 
\begin{align}\label{eq:solD}
\D_{t_1}=&\D_{t_0}e^{-\int_{t_0}^{t_1}m_vdv}+\int_{t_0}^{t_1}e^{-\int_t^{t_1}m_vdv}\frac{\sigma^2}{2\eta}\EE^{\QQ}_t\left[\int_t^Te^{-\int_t^rm(v)dv}\pa_s u_r dr\right]dt+\int_{t_0}^{t_1}e^{-\int_t^{t_1}m_vdv}{Q}\pa_sP_t \phi_tdt\notag\\
&-\int_{t_0}^{t_1}e^{-\int_t^{t_1}m_vdv}{Q}\kappa^2\sigma^2\EE^{\QQ}_t\left[\int_t^Te^{-\int_t^rm(v)dv}\int_t^r\pa_{ss}P_h(\pa_s u_h-\gamma\D_h)dh dr\right]dt\notag\\
&+\int_{t_0}^{t_1}e^{-\int_t^{t_1}m_vdv}{Q}\sigma^2\pa_{ss}P_t(\pa_s u_t-\gamma \D_t)dt+\int_{t_0}^{t_1}e^{-\int_t^{t_1}m_vdv}\sigma{Q}\pa_{ss}P_t d\tilde W_t.
\end{align}
Plugging the above into \eqref{eq:rep2us}, we obtain 
$$
\pa_su_{t_0}=\D_{t_0}Q\sigma^2 \gamma^2\EE_{t_0}^{\QQ}\left[\int_{t_0}^Te^{\int_{t_1}^{t_0} Q\sigma^2 \gamma\pa_{ss}P_vdv }\pa_{ss}P_{t_1}e^{-\int_{t_0}^{t_1}m_vdv}dt_1\right]
$$
$$
+Q\sigma^2 \gamma^2\EE_{t_0}^{\QQ}\left[\int_{t_0}^Te^{\int_{t_1}^{t_0} Q\sigma^2 \gamma\pa_{ss}P_vdv }\pa_{ss}P_{t_1}\right.
$$
$$
\left.\qquad\qquad\qquad\int_{t_0}^{t_1}e^{-\int_t^{t_1}m_vdv}\frac{\sigma^2}{2\eta}\EE^{\QQ}_t\left[\int_t^Te^{-\int_t^rm(v)dv}\pa_s u_r dr\right]dtdt_1\right]$$
$$
-Q\sigma^2 \gamma^2\EE_{t_0}^{\QQ}\left[\int_{t_0}^Te^{\int_{t_1}^{t_0} Q\sigma^2 \gamma\pa_{ss}P_vdv }\pa_{ss}P_{t_1}\right.
$$
$$
\left.\qquad\qquad\qquad\int_{t_0}^{t_1}e^{-\int_t^{t_1}m_vdv}{Q}\kappa^2\sigma^2\EE^{\QQ}_t\left[\int_t^Te^{-\int_t^rm(v)dv}\int_t^r\pa_{ss}P_h(\pa_s u_h-\gamma\D_h)dh dr\right]dtdt_1\right]
$$
$$
+Q\sigma^2 \gamma^2\EE_{t_0}^{\QQ}\left[\int_{t_0}^Te^{\int_{t_1}^{t_0} Q\sigma^2 \gamma\pa_{ss}P_vdv }\pa_{ss}P_{t_1}\int_{t_0}^{t_1}e^{-\int_t^{t_1}m_vdv}{Q}\pa_sP_t \phi_tdtdt_1\right]
$$
$$
+Q\sigma^2 \gamma^2\EE_{t_0}^{\QQ}\left[\int_{t_0}^Te^{\int_{t_1}^{t_0} Q\sigma^2 \gamma\pa_{ss}P_vdv }\pa_{ss}P_{t_1}\int_{t_0}^{t_1}e^{-\int_t^{t_1}m_vdv}{Q}\sigma^2\pa_{ss}P(t,S_t)(\pa_s u_t-\gamma \D_t)dtdt_1\right]
$$
$$
+Q\sigma^2 \gamma^2\EE_{t_0}^{\QQ}\left[\int_{t_0}^Te^{\int_{t_1}^{t_0} Q\sigma^2 \gamma\pa_{ss}P_vdv }\pa_{ss}P_{t_1}\int_{t_0}^{t_1}e^{-\int_t^{t_1}m_vdv}\sigma{Q}\pa_{ss}P_t d\tilde W_tdt_1\right].
$$
Recall the estimate 
$$\kappa\int_r^s e^{-\int_r^tm(v)dv}dt+\kappa\int_r^s e^{-\int_t^sm(v)dv}dt=A_{r,s}+\tilde A_{r,s}\leq C\mbox{ for } 0\leq r\leq T.$$
Thus, 
\begin{align*}
|\pa_su_{t_0}|&\leq\frac{C(1+|\D_{t_0}|)}{\kappa}+C\kappa \EE_{t_0}^{\QQ}\left[\int_{t_0}^T\int_{t_0}^re^{-\int_t^rm(v)dv} dt|\pa_s u_r|dr\right]\\
&+C\kappa\EE_{t_0}^{\QQ}\left[\int_{t_0}^T\int_t^Te^{-\int_t^rm(v)dv}\int_t^r(|\pa_s u_h|+\gamma|\D_h|)dh drdt\right]\\
&+\frac{C}{\kappa}\EE_{t_0}^{\QQ}\left[\int_{t_0}^T(|\pa_s u_t|+\gamma |\D_t|)dt\right]+C\EE_{t_0}^{\QQ}\left[\int_{t_0}^T\left|\int_{t_0}^{t_1}e^{-\int_t^{t_1}m_vdv}\pa_{ss}P_t d\tilde W_t\right|dt_1\right]\\
&\leq\frac{C(1+|\D_{t_0}|)}{\kappa}+C \EE_{t_0}^{\QQ}\left[\int_{t_0}^T|\pa_s u_r| dr\right]\\
&+\frac{C}{\kappa}\EE_{t_0}^{\QQ}\left[\int_{t_0}^T\gamma |\D_t|dt\right]+C\EE_{t_0}^{\QQ}\left[\int_{t_0}^T\left|\int_{t_0}^{t_1}e^{-\int_t^{t_1}m_vdv}\pa_{ss}P_t d\tilde W_t\right|dt_1\right].
\end{align*}
Thus, there exists a family of random variables $(M^{t_0})$, continuous in $t_0$, satisfying $\EE^\QQ_{t_0}M^{t_0}=0$, and such that
\begin{align*}
|\pa_su_{t_0}|&\leq\frac{C(1+|\D_{t_0}|)}{\kappa}+C \int_{t_0}^T|\pa_s u_r| dr+\frac{C}{\kappa}\int_{t_0}^T |\D_t|dt+C\int_{t_0}^T\left|\int_{t_0}^{t_1}e^{-\int_t^{t_1}m_vdv}\pa_{ss}P_t d\tilde W_t\right|dt_1+M^{t_0}.
\end{align*}
Applying Gronwall's lemma backwards in time on $[0,T]$, we deduce 
\begin{align*}
|\pa_su_{t_0}|&\leq\left(\frac{C(1+|\D_{t_0}|)}{\kappa}+\frac{C}{\kappa}\int_{t_0}^T |\D_t|dt+C\int_{t_0}^T\left|\int_{t_0}^{t_1}e^{-\int_t^{t_1}m_vdv}\pa_{ss}P_t d\tilde W_t\right|dt_1+M^{t_0}\right)\\
&+C\int_{t_0}^T\left(\frac{C(1+|\D_{r}|)}{\kappa}+\frac{C}{\kappa}\int_{r}^T |\D_t|dt+C\int_{r}^T\left|\int_{r}^{t_1}e^{-\int_t^{t_1}m_vdv}\pa_{ss}P_t d\tilde W_t\right|dt_1+M^{r}\right)e^{C(T-r)}dr.
\end{align*}
Taking expectation, we obtain 
\begin{align*}
|\pa_su_{t_0}|&\leq \frac{C(1+|\D_{t_0}|)}{\kappa}+\frac{C}{\kappa}\EE_{t_0}^{\QQ}\left[\int_{t_0}^T |\D_t|dt\right]+C\int_{t_0}^T\EE_{t_0}^{\QQ}\left|\int_{t_0}^{t_1}e^{-\int_t^{t_1}m_vdv}\pa_{ss}P_t d\tilde W_t\right|dt_1\\
&+C\int_{t_0}^T\int_{r}^T\EE_{t_0}^{\QQ}\left|\int_{r}^{t_1}e^{-\int_t^{t_1}m_vdv}\pa_{ss}P_t d\tilde W_t\right|dt_1dr.
\end{align*}
Note that 
$$\EE_{t_0}^{\QQ}\left|\int_{r}^{t_1}e^{-\int_t^{t_1}m_vdv}\pa_{ss}P_t d\tilde W_t\right|\leq C\left[\int_{r}^{t_1}e^{-2\int_t^{t_1}m_vdv}dt\right]^{1/2}\leq \frac{C}{\sqrt{\kappa}}.$$
We finally obtain 
\begin{align}\label{eq:contpa}
|\pa_su_{t_0}|\leq \frac{C(\sqrt{\kappa}+|\D_{t_0}|)}{\kappa}+\frac{C}{\kappa}\EE_{t_0}^{\QQ}\left[\int_{t_0}^T |\D_t|dt\right].
\end{align}
Injecting \eqref{eq:contpa} into \eqref{eq:solD} yields
\begin{align*}
\int_{t_0}^T\EE^\QQ_{t_0}|\D_{t_1}|dt_1\leq&C\frac{\sqrt{\kappa}+|\D_{{t_0}}|}{\kappa}+C \int_{{t_0}}^{T}\int_t^Te^{-\int_t^rm(v)dv}\EE^{\QQ}_{t_0}|\D_r |drdt\\
&\leq C\frac{\sqrt{\kappa}+|\D_{{t_0}}|}{\kappa}+C \int_{{t_0}}^{T}\EE^{\QQ}_{t_0}\left[|\D_r |\right]\int_{t_0}^re^{-\int_t^rm(v)dv}dtdr\\
&\leq C\frac{\sqrt{\kappa}+|\D_{{t_0}}|}{\kappa}+\frac{C}{\kappa} \int_{{t_0}}^{T}\EE^{\QQ}_{t_0}|\D_r |dr,
\end{align*}
which implies that 
\begin{align*}
\int_{t_0}^T\EE^\QQ_{t_0}|\D_{t_1}|dt_1\leq&C\frac{\sqrt{\kappa}+|\D_{{t_0}}|}{\kappa}.
\end{align*}
Combined with \eqref{eq:contpa} the above inequality yields
\begin{align}\label{eq:contpa2}
|\pa_su_{t_0}|\leq \frac{C(\sqrt{\kappa}+|\D_{t_0}|)}{\kappa}.
\end{align}

Next, we denote $$A^4_{r,s}:=4 \kappa\int_r^s e^{-4\int_r^tm(v)dv}dt.$$ Similarly to $A_{r,s}$, for $0\leq r\leq s$, we have: $A^{4}_{r,s}\leq 1$. 
Applying It\^o's lemma to $\D^4_t$ and using \eqref{eq:duffisiondeviation}, we derive a linear random ODE for $\D^4_t$, similar to \eqref{eq.Expansion.GammaD.decomp}. We solve it and integrate over $[0,T]$ to obtain (similar to the derivation of \eqref{eq.expansion.GammaD.rep.2}):
$$
\int_0^T\EE^\QQ\left[\D_r^4\right]dr=\D_0^4\frac{A^4_{0,T}}{\kappa}+\int_0^T\frac{A^4_{t,T}}{\kappa}\EE^\QQ\left[\frac{\sigma^2\D_t^{3}}{2 \eta}  \int_t^Te^{-\int_t^rm(v)dv}\pa_s u_rdr\right]dt
$$
$$
-\int_0^T{\kappa A^4_{t,T}}\EE^\QQ\left[\D_t^{3}{Q}\sigma^2 \int_t^Te^{-\int_t^h m(v)dv}\int_t^h (\pa_s u_r -\gamma \D_r)drdh\right]dt
$$
$$
+\int_0^T\frac{A^4_{t,T}}{\kappa}\EE^\QQ\left[\D_t^{3}{Q}\pa_sP_t \phi_t \right]dt+\int_0^T\frac{A^4_{t,T}}{\kappa}\EE^\QQ\left[\D_t^{3}{Q}\sigma^2\pa_{ss}P_t(\pa_s u_t-\gamma \D_t\right]dt
$$
$$
+\int_0^T\frac{A^4_{t,T}}{\kappa}\EE^\QQ\left[\frac{3}{2}\D_t^{3}{Q^2}\sigma^2(\pa_{ss}P_t)^2\right]dt.
$$
By Fubini's theorem we have 
\begin{align*}
\int_t^Te^{-\int_t^h m(v)dv}\int_t^h (\pa_s u_r -\gamma \D_r)drdh&=\int_t^Te^{-\int_t^r m(v)dv}(\pa_s u_r -\gamma \D_r)\int_r^Te^{-\int_r^h m(v)dv} dhdr\\
&=\int_t^Te^{-\int_t^r m(v)dv}(\pa_s u_r -\gamma \D_r)\frac{A_{r,T}}{\kappa}dr.
\end{align*}
Collecting the last two equations above, using \eqref{eq:contpa2}, the boundedness of $A_{r,T}$, and the definition of $\kappa$, we obtain:
\begin{align*}
\int_0^T \EE^\QQ\left[\D_r^4\right]dr\leq& C\frac{\D_0^4}{\kappa}+\frac{C}{\sqrt{\kappa}}\int_0^T\EE^\QQ\left[|\D_t|^{3}\right]dt+C\int_0^T\EE^\QQ\left[ |\D_t|^{3}  \int_t^Te^{-\int_t^rm(v)dv}|\D_r|dr\right]dt\\
&+C\int_0^T \EE^\QQ\left[|\D_t|^3\int_t^Te^{-\int_t^rm(v)dv} (|\pa_s u_r| +\gamma |\D_r|)dr\right]dt\\
&+\frac{C}{\kappa}\int_0^T\EE^\QQ\left[|\D_t|^{3}(1+|\pa_s u_t|+\gamma |\D_t|\right]dt.
\end{align*}
Using \eqref{eq:contpa2}, H\"older inequality, and Jensen's inequality, we obtain: 
\begin{align*}
\int_0^T\EE^\QQ\left[\D_r^4\right]dr\leq& C\frac{\D_0^4}{\kappa}+C\left[\EE^\QQ \int_0^T |\D_t|^{4} dt\right]^{3/4}\left(\frac{1}{\sqrt\kappa}+ \left[\EE^\QQ\int_0^T   \int_t^Te^{-4\int_t^rm(v)dv}|\D_r|^4drdt\right]^{1/4}\right)\\
&+\frac{C}{\kappa}\left[\EE^\QQ \int_0^T |\D_t|^{4} dt\right]^{3/4}+\frac{C}{\kappa}\int_0^T\EE^\QQ\left[|\D_t|^{4}\right]dt.
\end{align*}
We combine the last term with the left hand side and apply Fubini's theorem one more time to conclude 
\begin{align*}
\int_0^T\EE^\QQ\left[\D_r^4\right]dr\leq& C\frac{\D_0^4}{\kappa}+C\left[\EE^\QQ \int_0^T |\D_t|^{4} dt\right]^{3/4}\left(\frac{1}{\sqrt\kappa}+ \frac{1}{\kappa^{1/4}}\left[\EE^\QQ\int_0^T   |\D_r|^4dr\right]^{1/4}\right).
\end{align*}
The above estimate, for $\kappa $ large enough, yields the desired inequality.
\qed
\end{proof}

\bibliographystyle{plain}
\bibliography{DerivPriceImpact}
\end{document}